\title{Greedy Selfish Network Creation\footnote{See \cite{L12} for the original publication.}\\{\small (Full Version)}}
\author{Pascal Lenzner\thanks{Department of Computer Science, Humboldt-Universit\"at zu Berlin, Unter den Linden 6, 10099 Berlin, Germany. Email: \email{lenzner@informatik.hu-berlin.de}}}
\date{}
\begin{document}

\maketitle

\begin{abstract}
  \noindent We introduce and analyze \emph{greedy equilibria} (GE) for the well-known model of selfish network creation by Fabrikant et al.~[PODC'03]. GE are interesting for two reasons: (1) they model outcomes found by agents which prefer smooth adaptations over radical strategy-changes, (2) GE are outcomes found by agents which do not have enough computational resources to play optimally. In the model of Fabrikant et al. agents correspond to Internet Service Providers which buy network links to improve their quality of network usage. It is known that computing a best response in this model is NP-hard. Hence, poly-time agents are likely not to play optimally. But how good are networks created by such agents? We answer this question for very simple agents. Quite surprisingly, naive greedy play suffices to create remarkably stable networks. Specifically, we show that in the \textsc{Sum} version, where agents attempt to minimize their average distance to all other agents, GE capture Nash equilibria (NE) on 
  trees and that any GE is in $3$-approximate NE on general networks. For the latter we also provide a lower bound of $\tfrac{3}{2}$ on the approximation ratio. For the \textsc{Max} version, where agents attempt to minimize their maximum distance, we show that any GE-star is in $2$-approximate NE and any GE-tree having larger diameter is in $\tfrac{6}{5}$-approximate NE. Both bounds are tight. We contrast these positive results by providing a linear lower bound on the approximation ratio for the \textsc{Max} version on general networks in GE. This result implies a locality gap of $\Omega(n)$ for the metric min-max facility location problem, where $n$ is the number of clients.
\end{abstract}

\section{Introduction}
The area of Network Design is one of the classical and still very active fields in the realm of Theoretical Computer Science and Operations Research. 
But there is this curious fact: One of the most important networks which is increasingly shaping our everyday life -- the Internet -- cannot be fully explained by classical Network Design theory. 
Unlike centrally designed and optimized networks, the Internet was and still is created by a multitude of selfish agents (Internet Service Providers (ISPs)), who control and modify varying sized portions of the network structure (``autonomous systems''). This decentralized nature is an obstacle to approaching the design and analysis of the Internet as a classical optimization problem. Interestingly, each agent does face classical Network Design problems, i.e. minimizing the cost of connecting the \emph{own} network to the rest of the Internet while ensuring a high quality of service. The Internet is the result of the interplay of such local strategies and can 
be considered as an equilibrium state of a game played by selfish agents.  

The classical and most popular solution concept of such games is the (pure) Nash equilibrium~\cite{Nash}, which is a stable state, where no agent unilaterally wants to change her current (pure) strategy. 
However, Nash equilibria (NE) have their difficulties. Besides their purely descriptive, non-algorithmic nature, there are two problems: (1) With NE as solution concept agents only care if there is a better strategy and would perform radical strategy-changes even if they only yield a tiny improvement. (2) In some games it is computationally hard to even tell if a stable state is reached because computing the best possible strategy of an agent is hard. Thus, for such games NE only predict stable states found by supernatural agents. 
 
But what solutions are actually found by more realistic players, i.e. by agents who prefer smooth strategy-changes and who can only perform polynomial-time computations? And what impact on the stability has this transition from supernatural to realistic players?   

In this paper, we take the first steps towards answering these questions for one of the most popular models of selfish network creation. This model, called the \textsc{Network Creation Game} (NCG), was introduced a decade ago by Fabrikant et al.~\cite{Fab03}. In NCGs agents correspond to ISPs who create links towards other ISPs while minimizing cost and maximizing their quality of network usage. It seems reasonable that ISPs prefer greedy refinements of their current strategy (network architecture) over a strategy-change which involves a radical re-design of their infrastructure. Furthermore, computing the best strategy in NCGs is NP-hard. Hence, it seems realistic to assume that agents perform smooth strategy-changes and that they do not play optimally. We take this idea to the extreme by considering very simple agents and by introducing and analyzing a natural solution concept, called \emph{greedy equilibrium}, for which agents can easily compute whether a stable state is reached and which models an ISP's 
preference for smooth strategy-changes. 

\subsection{Model and Definitions}
In NCGs~\cite{Fab03} there is a set of $n$ agents $V$ and each agent $v\in V$ can buy an edge $\{v,u\}$ to any agent $u\in V$ for the price of $\alpha> 0$. Here $\alpha$ is a fixed parameter of the game which specifies the cost of creating any link. The strategy $S_v$ of an agent $v$ is the set of vertices towards which $v$ buys an edge. Let $G = (V,E)$ be the induced network, where an edge $\{x,y\}\in E$ is present if $x\in S_y$ or $y\in S_x$. The network $G$ will depend heavily on the parameter $\alpha$. To state this explicitly, we let $(G,\alpha)$ denote the network induced by the strategies of all agents $V$. In a NCG agents selfishly choose strategies to minimize their cost. There are basically two versions of NCGs, depending on the definition of an agent's cost-function. In the \emph{\textsc{Sum} version}~\cite{Fab03}, agents try to minimize the sum of their shortest path lengths to all other nodes in the network, while in the \emph{\textsc{Max} version}~\cite{De07}, agents try to minimize their 
maximum shortest path distance to any other network node. The precise definitions are as follows: Let $S_v$ denote agent $v$'s strategy in $(G,\alpha)$, then we have for the \textsc{Sum} version that the cost of agent $v$ is 
$c_v(G,\alpha) = \alpha |S_v| + \sum_{w\in V(G)} d_G(v,w)$, if $G$ is connected and $c_v(G,\alpha) = \infty$, otherwise. 
For the \textsc{Max} version we define agent $v$'s cost as 
$c_v'(G,\alpha) = \alpha |S_v| + \max_{w\in V(G)} d_G(v,w)$, if $G$ is connected and $c_v'(G,\alpha) = \infty$, otherwise.
In both cases $d_G(\cdot,\cdot)$ denotes the shortest path distance in the graph $G$. Note that both cost functions nicely incorporate two conflicting objectives: Agents want to pay as little as possible for being connected to the network while at the same time they want to have good connection quality. 
For NCGs we are naturally interested in networks where no agent unilaterally wants to change her strategy. Clearly, such outcomes are pure NE and we let \textsc{Sum}-NE denote the set of all pure NE of NCGs for the \textsc{Sum} version and \textsc{Max}-NE denotes the corresponding set for the \textsc{Max} version.  

Another important notion is the concept of \emph{approximate Nash equilibria}. Let $(G,\alpha)$ be any network in a NCG. 
For all $u \in V(G)$ let $c(u)$ and $c^*(u)$ denote agent $u$'s cost induced by her current pure strategy in $(G,\alpha)$ and by her best possible pure strategy, respectively. 
We say that $(G,\alpha)$ is a $\beta$-approximate Nash equilibrium if for all agents $u\in V(G)$ we have $c(u) \leq \beta c^*(u)$, for some $\beta \geq 1$. 

\subsection{Related Work}
The work of Fabrikant et al.~\cite{Fab03} did not only introduce the very elegant model described above. 
Among other results, the authors showed that computing a best possible strategy of an agent is NP-hard. 

To remove the quite intricate dependence on the parameter $\alpha$, Alon et al.~\cite{ADHL10} recently introduced the \textsc{Basic Network Creation Game} (BNCG), in which a network $G$ is given and agents can only ``swap'' incident edges to decrease their cost. Here, a swap is the exchange of an incident edge with a non-existing incident edge. The cost of an agent is defined like in NCGs but without the edge-cost term.
The authors of~\cite{ADHL10} proposed the \emph{swap equilibrium}~(SE) as solution concept for BNCGs. A network is in SE, if no agent unilaterally wants to swap an edge to decrease her cost. This solution concept has the nice property that agents can check in polynomial time if they can perform an improving strategy-change. The greedy equilibrium, which we analyze later, can be understood as an extension of the swap equilibrium which has similar properties but provides agents more freedom to act.
Note, that in BNCGs an agent can swap \emph{any} incident edge, whereas in NCGs only edges which are bought by agent $v$ can be modified by agent $v$. This problem, first observed by Mihal\'ak and Schlegel~\cite{MS10}, can easily be circumvented, as recently proposed by the same authors in~\cite{MS12}: BNCGs are modified such that every edge is owned by exactly one agent and agents can only swap \emph{own} edges. The corresponding stable networks of this modified version are called \emph{asymmetric swap equilibrium}~(ASE). However, independent of the ownership, edges are still two-way. 
These simplified versions of NCGs are an interesting object of study since (asymmetric) swap equilibria model the local weighing of decisions of agents and despite their innocent statement they tend to be quite complicated structures. 
In~\cite{L11} it was shown that greedy dynamics in a BNCG converge very quickly to a stable state if played on a tree. The authors of~\cite{CHKS12} analyzed BNCGs on trees with agents having communication interests. 
However, simplifying the model as in~\cite{ADHL10} is not without its problems.  
Allowing only edge-swaps implies that the number of edges remains constant. Hence, this model seems too limited to explain the creation of rapidly growing networks. 

A part of our work focuses on tree networks. Such topologies are common outcomes of NCGs if edges are expensive, which led the authors of~\cite{Fab03} to conjecture that all (non-transient) stable networks of NCGs are trees if $\alpha$ is greater than some constant. The conjecture was later disproved by Albers et al.~\cite{Al06} but it was shown to be true for high edge-cost. In particular, the authors of~\cite{MS10} proved that all stable networks are trees if $\alpha > 273n$ in the \textsc{Sum} version or if $\alpha > 129$ in the \textsc{Max} version. Experimental evidence suggests that this transition to tree networks already happens at much lower edge-cost and it is an interesting open problem to improve on these bounds. 

Demaine et al.~\cite{De09} investigated NCGs, where agents cannot buy every possible edge. Furthermore, Ehsani et al.~\cite{Ehs11} recently analyzed a bounded-budget version. Both versions seem realistic, but in the following we do not restrict the set of edges which can be bought or the budget of an agent. 
Clearly, such restrictions reduce the qualitative gap between simple and arbitrary strategy-changes and would lead to weaker results for our analysis. Note, that this indicates that outcomes found by simple agents in the edge or budget-restricted version may be even more stable than we show in the following sections.    

To the best of our knowledge, approximate Nash equilibria have not been studied before in the context of selfish network creation. Closest to our approach here may be the work of Albers et al.~\cite{AL10}, which analyzes for a related game how tolerant the agents have to be in order to accept a centrally designed solution. We adopt a similar point of view by asking how tolerant agents have to be to accept a solution found by greedy play.

Guyl\'as et al.~\cite{GKSB12} recently published a paper having a very similar title to ours. They investigate networks created by agents who use the length of ``greedy paths'' as communication cost and show that the resulting equilibria are substantially different to the ones we consider here. Their term ``greedy'' refers to the distances whereas our term ``greedy'' refers to the behavior of the agents.   

\subsection{Our Contribution}
We introduce and analyze greedy equilibria (GE) as a new solution concept for NCGs. This solution concept is based on the idea that agents (ISPs) prefer greedy refinements of their current strategy (network architecture) over a strategy-change which involves a radical re-design of their infrastructure. Furthermore, GE represent solutions found by very simple agents, which are computationally bounded.
We show in Section~\ref{sec_greedyintro} that such greedy refinements can be computed efficiently and clarify the relation of GE to other known solution concepts for NCGs. 

Our main contribution follows in Section~\ref{sec_quality} and Section~\ref{sec_quality_max}, where we analyze the stability of solutions found by greedily playing agents. For the \textsc{Sum} version we show the rather surprising result that, despite the fact that greedy strategy-changes may be sub-optimal from an agent's point of view, \textsc{Sum}-GE capture \textsc{Sum}-NE on trees. That is, in any tree network which is in \textsc{Sum}-GE \emph{no} agent can decrease her cost by performing \emph{any} strategy-change. For general networks we prove that any network in \textsc{Sum}-GE is in $3$-approximate \textsc{Sum}-NE and we provide a lower bound of $\frac{3}{2}$ for this approximation ratio. Hence, we are able to show that greedy play almost suffices to create perfectly stable networks.  

For the \textsc{Max} version we show that these games have a strong non-local flavor which yields diminished stability. Here even GE-trees may be susceptible to non-greedy improving strategy-changes. Interestingly, susceptible trees can be fully characterized and we show that their stability is very close to being perfect. Specifically, we show that any GE-star is in $2$-approximate \textsc{Max}-NE and that any GE-tree having larger diameter is in $\tfrac{6}{5}$-approximate \textsc{Max}-NE. We give a matching lower bound for both cases. For non-tree networks in GE the picture changes drastically. We show that for GE-networks having a very small $\alpha$ the approximation ratio is related to their diameter and we provide a lower bound of $4$. For $\alpha \geq 1$, we show that there are non-tree networks in \textsc{Max}-GE, which are only in $\Omega(n)$-approximate \textsc{Max}-NE. The latter result yields that the locality gap of uncapacitated metric min-max facility 
location is in $\Omega(n)$.    

Regarding the complexity of deciding Nash-stability, we show that there are simple polynomial time algorithms for tree networks in both versions. Furthermore, greedy-stability represents an easy to check certificate for $3$-approximate Nash-stability in the \textsc{Sum} version.
         
\section{Greedy Agents and Greedy Equilibria}\label{sec_greedyintro}
We consider agents which check three simple ways to improve their current infrastructure. The three operations are
\begin{itemize}
 \item \emph{greedy augmentation}, which is the creation of \emph{one} new own link,
 \item \emph{greedy deletion}, which is the removal of \emph{one} own link,
 \item \emph{greedy swap}, which is a swap of \emph{one} own link. 
\end{itemize}
Computing the best augmentation/deletion/swap for one agent can be done in $\mathcal{O}(n^2(n+m))$ steps by trying all possibilities and re-computing the incurred cost. 
Observe, that these smooth strategy-changes induce some kind of organic evolution of the whole network which seems highly adequate in modeling the Internet. This greedy behavior naturally leads us to a new solution concept:
\begin{definition}[Greedy Equilibrium]
$(G,\alpha)$ is in \emph{greedy equilibrium} if no agent in $G$ can decrease her cost by buying, deleting or swapping \emph{one} own edge.
\end{definition}
Note, that GE can be understood as solutions which are obtained by a distributed local search procedure performed by selfish agents.

The next theorem relates GE to other solution concepts in the \textsc{Sum} version. See Fig.~\ref{fig:classes} for an illustration. Relationships are similar in the \textsc{Max} version.
\begin{theorem}\label{thm_structure}
 For the \textsc{Sum} version it is true that NE~$\subset$~GE~$\subset$~ASE and that SE~$\subset$~ASE. Furthermore, we have NE$\setminus$SE~$\neq \emptyset$, GE$\setminus$SE~$\neq \emptyset$, $($GE$\setminus$SE$)\setminus$NE~$\neq \emptyset$, $($GE$\setminus$NE$)~\cap$ SE ~$\neq \emptyset$ and NE ~$\cap$ GE ~$\cap$ SE ~$\neq \emptyset$.
\end{theorem}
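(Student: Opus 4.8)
The plan is to dispatch the three inclusions directly from the definitions and then to construct explicit small witness networks for the separation claims; an illustration of the final examples is the content of Fig.~\ref{fig:classes}.

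For the inclusions, observe that each of the four solution concepts forbids a nested family of deviations. A network in NE admits no profitable deviation at all, in particular none that buys, deletes, or swaps a single own edge, so NE~$\subseteq$~GE. A network in GE in particular admits no profitable swap of an own edge, which is precisely ASE-stability, so GE~$\subseteq$~ASE (after identifying the induced graph together with the ownership given by the strategy profile; in equilibrium no edge is bought by both endpoints, so this identification is harmless). Finally, swapping an own edge is a special case of swapping an arbitrary incident edge, hence SE-stability implies ASE-stability for any fixed ownership, giving SE~$\subseteq$~ASE.

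For properness and the easy separations I would exhibit small instances. Properness of NE~$\subset$~GE is already provided by the lower-bound construction of Section~\ref{sec_quality} -- a greedy-stable network that is destabilized only by a coordinated multi-edge move -- but a tiny ad hoc example also works. For GE~$\subset$~ASE one needs an ASE-network in which some agent strictly gains by buying or by deleting a single edge although no swap helps; for SE~$\subset$~ASE one needs a network together with an ownership assignment in which some agent would profit from swapping a \emph{non-owned} incident edge. The claim NE~$\setminus$~SE~$\neq\emptyset$ is witnessed by a Nash network in which some agent $v$ would like to re-route an incident edge that $v$ does not own, a move that is forbidden in the NCG but allowed in the BNCG underlying SE; since NE~$\subseteq$~GE the same instance gives GE~$\setminus$~SE~$\neq\emptyset$. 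Finally NE~$\cap$~GE~$\cap$~SE~$\neq\emptyset$ reduces via NE~$\subseteq$~GE to NE~$\cap$~SE~$\neq\emptyset$, for which the single-edge network on two agents suffices: it is in NE for every $\alpha$ and is vacuously in SE.

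The delicate part is the construction of the two ``triple-property'' witnesses, one for $($GE~$\setminus$~SE$)\setminus$NE and one for $($GE~$\setminus$~NE$)\cap$~SE. For the former I would design a small graph with an ownership assignment that is greedy-stable, is destabilized by some two-edge deviation (buy plus buy, or buy plus delete) so that it lies outside NE, and simultaneously admits an improving swap of a non-owned edge so that it lies outside SE. For the latter I need a small graph that is greedy-stable and also stable against every swap of \emph{any} incident edge (hence in SE) yet is destabilized by a coordinated multi-edge move (hence outside NE). In both cases one must choose $\alpha$ so that the term $\alpha|S_v|$ does not interfere with the intended case distinctions, and keep the graphs small enough that the verification -- checking exhaustively that no single buy/delete/swap helps any agent while one specific multi-edge deviation strictly does -- stays a finite, transparent case analysis. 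Everything else is routine bookkeeping.
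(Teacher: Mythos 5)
Your inclusion arguments (NE~$\subseteq$~GE~$\subseteq$~ASE, SE~$\subseteq$~ASE, each by restricting the admissible deviations) coincide with the paper's, and your two-vertex witness for NE~$\cap$~GE~$\cap$~SE~$\neq\emptyset$ is fine -- indeed simpler than the paper's choice of a $C_5$ with diameter~$2$. The problem is that for the remaining five separation claims you only list the properties a witness would have to satisfy and defer the constructions as ``routine bookkeeping''; but those constructions and their verification \emph{are} the proof. The paper's argument consists almost entirely of exhibiting the concrete networks of Fig.~\ref{fig:classes} and checking them: a $C_5$ with suitable ownership is in \textsc{Sum}-GE exactly for $1\leq\alpha\leq 4$ and, with one agent owning two edges, is in SE (via the diameter-$2$ criterion) and in GE but not in NE (delete both edges, buy one); a cycle-with-leaves graph at $\alpha=7$ is shown to be in NE by explicitly bounding the distance gain of \emph{every} deviation that deletes an owned cycle edge and buys two, three, or more edges ($10<2\alpha$, $12<3\alpha$, \dots), which certifies NE$\setminus$SE~$\neq\emptyset$; and a further graph at $\alpha=3.5$ gives $($GE$\setminus$SE$)\setminus$NE~$\neq\emptyset$. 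None of this verification is automatic, and the NE-membership check in particular is exactly the non-trivial step, since the whole point of the paper is that greedy stability does not in general imply Nash stability.

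One step in your plan would in fact fail as stated: for the $($GE$\setminus$SE$)\setminus$NE witness you propose a destabilizing ``two-edge deviation (buy plus buy, or buy plus delete)''. A pure multi-buy can never destabilize a \textsc{Sum}-GE network -- this is Lemma~\ref{lemma_augmentation}: if no single additional edge is profitable, no set of $k>1$ additional edges is. So the improving deviation in any such witness must combine edge deletions with purchases (as it does in the paper's examples), and your construction has to be engineered around that constraint; likewise the $($GE$\setminus$NE$)\cap$SE witness must be destabilized by a delete-and-buy (or multi-swap-like) move while surviving every single-edge swap of \emph{any} incident edge. Until these witnesses are written down, with explicit ownership assignments and $\alpha$-ranges, and checked against all single-edge moves (for GE), all incident-edge swaps (for SE), and all multi-edge deviations (for NE), the separation half of Theorem~\ref{thm_structure} remains unproved.
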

\begin{figure}[ht]
  \centering
  \includegraphics[width=14cm]{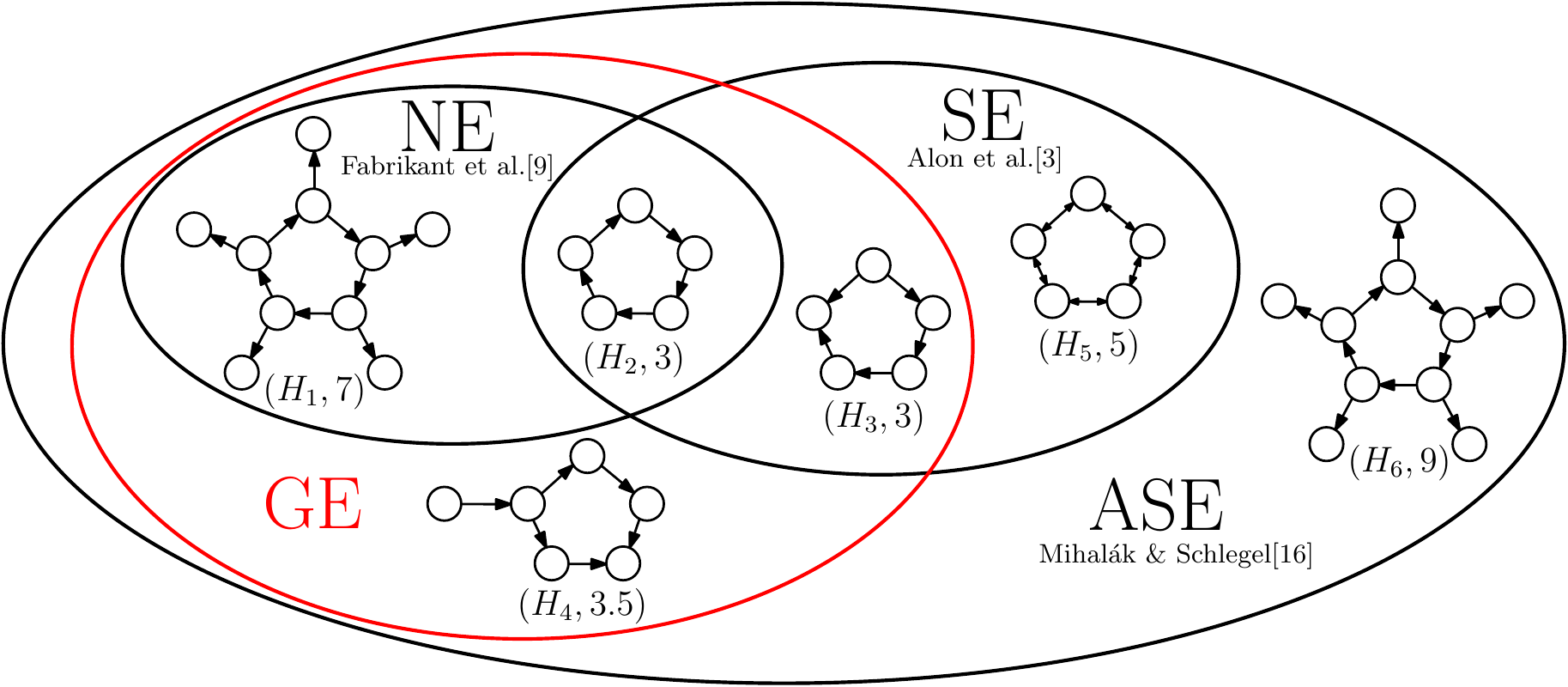}
  \caption{Relations between solution concepts for NCGs in the \textsc{Sum} version. Edge-directions indicate edge-ownership, edges point away from its owner.}
  \label{fig:classes}
\end{figure}
\begin{proof}
 It is easy to see that \textsc{Sum}-NE~$\subseteq$ \textsc{Sum}-ASE and \textsc{Sum}-SE $\subseteq$ \textsc{Sum}-ASE must hold, since in both cases, we restrict the set of available strategies for the agents. Clearly, greediness restricts the possible strategies of an agent as well. Hence, we have \textsc{Sum}-NE~$\subseteq$ \textsc{Sum}-GE. Furthermore, by the same argument, if no agent can buy, delete or swap one own edge, then such a network must be in directed swap equilibrium. It follows that \textsc{Sum}-GE~$\subseteq$ \textsc{Sum}-ASE.
 
 Consider the networks depicted in Fig.~\ref{fig:classes}. It follows from Lemma~$6$ in \cite{ADHL10} that any graph having diameter $2$ is in \textsc{Sum}-SE. Thus, $(H_2,3),(H_3,3),(H_5,5) \in \text{\textsc{Sum}-SE}$, since the edge-cost parameter and the edge-ownerships have no influence on the stability in a BNCG. Observe, that $(H_1,7),(H_6,9)\notin \text{\textsc{Sum}-SE}$, since any leaf-agent can swap her edge towards a neighbor of another leaf-agent and thereby strictly decrease her cost. Furthermore, $(H_4,3.5) \notin \text{\textsc{Sum}-SE}$, since an agent having distance $3$ towards the leaf-agent can decrease her cost by swapping an edge towards the neighbor of the leaf. 

 Now we show that a cycle having $5$ vertices, $C_5$ for short, is in \textsc{Sum}-GE for any edge assignment if and only if $1\leq \alpha \leq 4$. Since $C_5$ is in \textsc{Sum}-SE, we only have to show that no agent wants to buy or delete one edge if and only if $1\leq \alpha \leq 4$. Since every vertex of $C_5$ has eccentricity $2$, we have that an agent can strictly decrease her cost by buying one edge if and only if $\alpha <1$. On the other hand, since deleting one edge increases the distance-cost of the moving agent by $4$, we have that an agent can strictly decrease her cost by deleting one edge if and only if $\alpha > 4$. Thus, $(H_2,3),(H_3,3)\in \text{\textsc{Sum}-GE}$ and $(H_5,5)\notin \text{\textsc{Sum}-GE}$. 

 Next, we show that $H_1=H_6$ is in \textsc{Sum}-GE for $6\leq \alpha \leq 8$: For $H_1$ to be in \textsc{Sum}-GE, we have to make sure that no leaf-vertex can buy an edge and that every cycle-vertex has bought the best possible edge and cannot be better off by removing that edge or by purchasing one \emph{additional} edge. By symmetry of the construction, we can focus on one leaf-vertex $l$ only. A best possible edge for $l$ is the edge towards a cycle-vertex which has maximum distance to $l$. This edge decreases agent $l$'s distance-cost by $6$. Now we consider a cycle-vertex $u$ and again, by symmetry, it suffices to argue for vertex $u$. Observe, that agent $u$ cannot remove her edge towards the neighboring leaf, since this would disconnect the network. If $x$ removes an edge towards a neighboring cycle-vertex, then agent $u$'s distance-cost increases by $8$. Furthermore, it is easy to see that no edge-swap can decrease agent $u$'s cost. Hence, it remains to show that agent $u$ cannot buy an additional 
edge and thereby strictly decrease her cost. A best possible additional edge for $u$ is an edge towards a non-neighboring cycle-vertex, which yields a distance decrease of $2$. Hence, if $\alpha > 2$, no such additional edge will be bought by $u$. Analogously, it is easy to check that $H_4$ is in \textsc{Sum}-GE for $3\leq \alpha \leq 4$. If we restrict agents only to swapping own edges, it follows that $(H_6,9)\in \text{\textsc{Sum}-ASE} \setminus \text{\textsc{Sum}-GE}$. 

 Now, let us investigate $(H_1,7)$. Note, that agents of $H_1$ who do not own any edge cannot change their strategy to strictly decrease their cost. Hence, we only have to argue that no cycle-vertex $u$ can unilaterally change her strategy to strictly decrease her cost. By symmetry of the construction, if suffices to argue for agent $u$. Let $l_u$ be $u$'s leaf-neighbor and let $w$ be $u$'s cycle-neighbor to which $u$ owns an edge and let $v$ be $u$'s other cycle-neighbor. Observe, that $u$ has to buy the edge towards $l_u$ in any strategy to ensure connectedness. Hence, we can safely ignore this edge. Furthermore, since $\alpha \leq 8$, removing edge $\{u,w\}$ does not yield a strict cost decrease for $u$. Since $(H_1,7)$ is a greedy equilibrium, we have that $u$ cannot swap edge $\{u,w\}$ with some other edge to decrease her cost. It remains to show that $u$ cannot strictly decrease her cost by removing edge $\{u,w\}$ and buying at least two edges. First, let us assume that $u$ can remove edge $\{u,w\}$ 
and simultaneously buy two edges $\{u,x\}$ and $\{u,y\}$ and thereby strictly decrease her cost. Observe, that $x\neq w$ and $y\neq w$ must hold, since otherwise the edge not connecting to $w$ would be a greedy augmentation. Furthermore, it is easy to show that $v\neq x$, $v\neq y$ and $x\neq y$ must hold and that $x$ and $y$ cannot be leaves, since leaves are always dominated by their corresponding cycle-neighbors. Hence, the only possible strategy for agent $u$, which satisfies the mentioned constraints, is to buy the edges $\{u,z_1\}$ and $\{u,z_2\}$, where $z_1$ and $z_2$ are the cycle-vertices which have maximum distance to $u$. This strategy yields a distance decrease of $10$ compared to buying only edge $\{u,l_u\}$. Clearly, every edge in an equilibrium strategy, which is not required for ensuring connectedness of the network must yield at least a distance decrease of $\alpha$, since otherwise the agent would be better off without buying that edge. Since $10 < 2\alpha$, we have that agent $u$'s new 
cost is strictly higher than $u$'s cost in $(H_1,7)$. Observe, that removing $\{u,w\}$ and buying three edges $\{u,x\},\{u,y\},\{u,z\}$, with $x\neq w, y\neq w$ and $z\neq w$, yields a distance decrease of $12 < 3\alpha$. Hence, agent $u$ cannot strictly decrease her cost by buying $3$ edges. For more than three edges, where no edge is allowed to connect to $w$, an analogous argument yields that $u$ cannot strictly decrease her cost. Hence, agent $u$ cannot change her strategy to strictly decrease her cost. Analogously, it is easy to check that $(H_2,3)\in \text{\textsc{Sum}-NE}$.

 The network $(H_3,3) \notin \text{\textsc{Sum}-NE}$, since the vertex which owns two edges can strictly decrease her cost by removing both edges and buying one edge towards a vertex in distance $2$ in $H_3$. Finally, $(H_4,3.5)\notin \text{\textsc{Sum}-NE}$, since the agent who owns two edges can strictly decrease her cost by performing a similar strategy change as the respective agent in $(H_3,3)$. 
\end{proof}

\section{The Quality of Sum Greedy Equilibria}\label{sec_quality}
This section is devoted to discussing the quality of greedy equilibrium networks in the \textsc{Sum} version. 
We begin with a simple but very useful property.  
\begin{lemma}\label{lemma_augmentation}
  If an agent $v$ cannot decrease her cost by buying \emph{one} edge in the \textsc{Sum} version, then buying $k>1$ edges cannot decrease agent $v$'s cost.  
\end{lemma}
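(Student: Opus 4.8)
The plan is to reduce the purchase of $k$ edges to $k$ independent single-edge purchases and exploit that all new edges share the endpoint $v$. First I would fix the agent $v$, let $U=\{u_1,\dots,u_k\}$ be the set of new neighbours $v$ considers buying, and write $G' = G + \{v,u_1\} + \cdots + \{v,u_k\}$ for the resulting network and $G_i = G + \{v,u_i\}$ for the network in which $v$ buys only the $i$-th edge. Buying all of $U$ changes $v$'s cost by $k\alpha - \Delta$, where $\Delta = \sum_{w\in V} d_G(v,w) - \sum_{w\in V} d_{G'}(v,w)$ is the total shortening of $v$'s distances; the goal is to prove $\Delta \le k\alpha$.

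The key structural observation is that, since every new edge is incident to $v$, any shortest path in $G'$ that starts at $v$ uses \emph{at most one} new edge: a path using two new edges would revisit $v$, contradicting minimality. Hence for every vertex $w$ we get the identity
\[
 d_{G'}(v,w) \;=\; \min\bigl(d_G(v,w),\; \min_{1\le i\le k}\,(1 + d_G(u_i,w))\bigr) \;=\; \min_{1\le i\le k} d_{G_i}(v,w).
\]
Consequently the shortening at $w$ satisfies
\[
 d_G(v,w) - d_{G'}(v,w) \;=\; \max_{1\le i\le k}\bigl(d_G(v,w) - d_{G_i}(v,w)\bigr) \;\le\; \sum_{i=1}^{k}\bigl(d_G(v,w) - d_{G_i}(v,w)\bigr),
\]
where the inequality uses that each summand is non-negative (adding an edge never increases a distance). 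Summing over all $w$ yields $\Delta \le \sum_{i=1}^{k} \Delta_i$, where $\Delta_i = \sum_{w} d_G(v,w) - \sum_{w} d_{G_i}(v,w)$ is exactly the distance-saving obtained by buying the single edge $\{v,u_i\}$.

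Finally I would invoke the hypothesis of the lemma: buying the single edge $\{v,u_i\}$ does not strictly decrease $v$'s cost, i.e. $\alpha - \Delta_i \ge 0$, so $\Delta_i \le \alpha$ for each $i$. Therefore $\Delta \le \sum_{i=1}^k \Delta_i \le k\alpha$, and buying all $k$ edges changes $v$'s cost by $k\alpha - \Delta \ge 0$, which is not a decrease. The only step needing genuine care is the claim that shortest paths out of $v$ use at most one new edge and the resulting identity $d_{G'}(v,w) = \min_i d_{G_i}(v,w)$; everything after that is summation plus the one-edge hypothesis. Degenerate cases (some $u_i$ already adjacent to $v$, or repeated $u_i$) only make the corresponding $\Delta_i$ smaller or zero and do not affect the bound.
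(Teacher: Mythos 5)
Your proof is correct and follows essentially the same route as the paper: the heart of both arguments is the subadditivity of the distance savings, $\Delta \le \sum_{i=1}^{k}\Delta_i$ (the paper's observation $g^k \le g^{e_1}+\dots+g^{e_k}$), combined with the single-edge hypothesis $\Delta_i \le \alpha$. The only differences are cosmetic: you justify the subadditivity explicitly via the fact that shortest paths leaving $v$ use at most one new edge (a step the paper merely asserts), and you sum directly rather than running the paper's averaging-plus-contradiction step with the best single edge $e^*$.
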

\begin{proof}
  Let $v$ be an agent who cannot strictly decrease her cost in network $(G,\alpha)$ by purchasing one edge. Let $q$ denote the number of edges in $(G,\alpha)$ owned by agent $v$. Now assume towards a contradiction that agent $v$ can strictly decrease her cost by purchasing $k>1$ edges $e_1,\dots,e_k$. Let $(G^k,\alpha)$ be the network $(G,\alpha)$ augmented by these $k$ edges. Let $c_v(G,\alpha)$ and $c_v(G^k,\alpha)$ denote agent $v$'s cost in $(G,\alpha)$ and $(G^k,\alpha)$, respectively. Hence, we have $c_v(G^k,\alpha) < c_v(G,\alpha)$. Let $D$ and $D^k$ denote agent $v$'s distance-cost in $(G,\alpha)$ and $(G^k,\alpha)$, respectively. We have $c_v(G,\alpha) = q\alpha + D$ and $c_v(G^k,\alpha) = q\alpha + k\alpha + D^k$. Let $(G^1,\alpha)$ denote the network $(G,\alpha)$, where agent $v$ has built the best possible additional edge $e^*$. That is, there is no other additional edge $e'$, such that agent $v$ can strictly decrease her cost by swapping edge $e^*$ with edge $e'$. Since $(G,\alpha)$ is in greedy 
equilibrium, we have $c_v(G^1,\alpha) = q\alpha + \alpha + D^1 \geq c_v(G,\alpha)$, where $c_v(G^1,\alpha)$ denotes agent $v$'s cost in the network $(G^1,\alpha)$ and $D^1$ denotes $v$'s distance-cost in $(G^1,\alpha)$.
  Hence, we have
  \begin{equation}c_v(G^k,\alpha) < c_v(G,\alpha) \iff k\alpha < D - D^k \label{eqkalpha}\end{equation}
  and
  \begin{equation}c_v(G^1,\alpha) \geq c_v(G,\alpha) \iff \alpha \geq D - D^1. \label{eq1alpha}\end{equation}
  Let $g^k = D-D^k$ and $g^1 = D-D^1$, that is, $g^k$ and $g^1$ denote the distance decrease of agent $v$ by building edges $e_1,\dots,e_k$ simultaneously or by building edge $e^*$, respectively. 
  For all edges $l \in \{e_1,\dots,e_k,e^*\}$, let $g^l$ denote the decrease in distance-cost for agent $v$ if only the edge $l$ is inserted into network $(G,\alpha)$. Observe, that $g^k \leq g^{e_1} + g^{e_2}+\dots + g^{e_k}$ holds.

  By inequality (\ref{eqkalpha}) we have that $\alpha < \frac{g^k}{k} \leq \frac{g^{e_1} + g^{e_2}+\dots + g^{e_k}}{k}$. It follows that $\alpha < g^{e_j}$, for some $1\leq j\leq k$, since otherwise we would have $\alpha < \frac{g^{e_1} + g^{e_2}+\dots + g^{e_k}}{k}\leq \frac{k\alpha}{k} = \alpha$. Furthermore, since $e^*$ is the best possible additional edge for agent $v$ in $(G,\alpha)$, we have $g^{e_j} \leq g^{e^*}$. It follows that $\alpha < g^{e^*}$, which contradicts inequality~(\ref{eq1alpha}). 
\end{proof}

\subsection{Tree Networks in Sum Greedy Equilibrium}
We show that in a NCG all stable trees found by greedily behaving agents are even stable against \emph{any} strategy-change. Hence, in case of a tree equilibrium \emph{no} loss in stability occurs by greedy play. This is a counter-intuitive result, since for each agent alone being greedy is clearly sub-optimal (the network in Fig.~\ref{fig:tree_1} with $\alpha = 6$ is an example). Thus, the following theorem shows the emergence of an optimal outcome out of a combination of sub-optimal strategies.
\begin{theorem}\label{thm_tree_eq}
 If $(T,\alpha)$ is in \textsc{Sum}-GE and $T$ is a tree, then $(T,\alpha)$ is in \textsc{Sum}-NE.
\end{theorem}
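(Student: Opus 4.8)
The plan is to show that in a tree $(T,\alpha)$ which is in \textsc{Sum}-GE, no agent $v$ has \emph{any} profitable deviation, by reducing an arbitrary deviation to a combination of the three greedy moves whose non-profitability we already know. Fix an agent $v$ and let $S_v$ be her current strategy (the set of $v$-owned edges). An arbitrary new strategy $S_v'$ can be described as: delete some subset $R \subseteq S_v$ of owned edges, keep the rest $S_v \setminus R$, and buy a set $A$ of new edges. The key structural fact I would exploit is that $T$ is a tree: removing an owned edge $\{v,w\}$ from $T$ splits $T$ into two components, and $v$ must stay connected to the component not containing $v$, so in any connected deviation each deleted edge must be ``compensated'' by a path through newly bought edges. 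This rigidity is what makes tree equilibria so much stronger than general ones.

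First I would handle the pure-augmentation case, which is already done: by Lemma~\ref{lemma_augmentation}, if $v$ cannot gain from buying one edge she cannot gain from buying any number of edges, so deviations of the form ``keep everything, buy $A$'' are never profitable. Next I would handle pure deletions: deleting a set $R$ of owned edges in a tree disconnects $v$ from $|R|$ nontrivial subtrees (more precisely, the forest $T - R$ has the components hanging off each deleted edge), so such a move yields cost $\infty$ unless $R = \emptyset$; hence pure deletions are never profitable either. The substance is the mixed case: $v$ deletes $R \ne \emptyset$ and buys $A \ne \emptyset$ to restore connectivity. Here I would argue by a careful accounting/exchange argument. The idea is that since each bought edge $e \in A$ individually yields distance-decrease $g^e \le \alpha$ (greedy augmentation fails) and each owned edge $\{v,w\} \in R$, when removed \emph{in isolation from $T$}, costs $v$ some amount $\ge \alpha$ in the connected-world comparison — but crucially a \emph{swap} of $\{v,w\}$ for the best alternative edge is also non-profitable. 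The plan is to show that the net effect of the mixed move is dominated by a sum of individual swap/augmentation/deletion effects, each of which is non-negative by GE, so the total change in cost is non-negative.

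Concretely, I would try to set up a bijective or injective correspondence between the deleted edges $R$ and a sub-multiset of the bought edges $A$, using the tree structure: when $v$ deletes $\{v,w\}$, the subtree $T_w$ formerly containing $w$ must be reconnected, and in a tree the new route from $v$ into $T_w$ enters through some bought edge $e_w \in A$; I would pair $\{v,w\}$ with $e_w$. Replacing $\{v,w\}$ by $e_w$ alone is a greedy swap, hence non-profitable; so $g^{e_w} \le (\text{distance cost of losing }\{v,w\})$ in the appropriate sense. Summing these swap inequalities over $R$, plus the augmentation inequalities $g^e \le \alpha$ over the unpaired edges of $A$, and using subadditivity of distance-decrease (the analogue of $g^k \le \sum g^{e_i}$ from Lemma~\ref{lemma_augmentation}, together with a matching super/sub-additivity statement for the combined effect of deletions) should give that the deviation's total gain is at most $0$.

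I expect the main obstacle to be precisely this last accounting: making the pairing well-defined and proving that the \emph{joint} distance effect of deleting $R$ and buying $A$ is bounded by the \emph{sum} of the individual paired swap effects plus individual augmentation effects. In a tree this should follow from the fact that distances are realized by unique paths and that each bought edge can only ``help'' reach the subtrees it actually routes into, but the bookkeeping — especially when a single bought edge is needed to reconnect several deleted subtrees, or when $A$ is much larger than $R$ — will require care. I would isolate this as a self-contained combinatorial lemma about trees: if $v$ removes owned edges $R$ and adds edges $A$ while keeping $T$ connected, then $v$'s distance-decrease is at most $\sum_{\{v,w\}\in R}(\text{gain of swapping }\{v,w\}\text{ for its paired edge}) + \sum_{e\in A'} g^e$ for the leftover edges $A'$, and then the theorem follows by applying the GE inequalities term by term. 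If the clean pairing fails in some corner case, the fallback is an extremal/minimality argument: take a profitable deviation minimizing $|R| + |A|$ and show it can be shortened, contradicting minimality.
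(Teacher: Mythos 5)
There is a genuine gap, and it sits exactly where you predicted: the ``subadditivity'' lemma you want is false, and no amount of bookkeeping can repair it as long as you only use the deviating agent's own greedy stability. Consider a tree with center $x$, two paths $x-m_1-c_1$ and $x-m_2-c_2$, $N$ leaves attached to each $c_i$, and agent $u$ attached by her own edge $\{u,x\}$, with $\alpha=3N$. Agent $u$ is greedy-stable: deletion disconnects, $x$ is the $1$-median of $T-u$ so no swap helps, and the best single purchase (to $c_1$ or $c_2$) gains only $2N+2<\alpha$. Yet deleting $\{u,x\}$ and buying $\{u,c_1\},\{u,c_2\}$ gains $4N+2$ in distance at an extra edge cost of $\alpha=3N$, a strict improvement. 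Your proposed bound ``joint gain $\le$ (gain of the paired swap $\{u,x\}\to\{u,c_1\}$) $+$ (gain of augmenting $\{u,c_2\}$)'' fails here by about $2N$: the isolated swap is charged the full damage to $x$, $m_2$, $c_2$ and its leaves, damage which the joint move avoids precisely because the second bought edge covers that side. This super-additivity is intrinsic whenever the deviation buys into several subtrees of the deleted edge's far component, so the term-by-term application of the GE inequalities for $v$ cannot close the argument, and the fallback minimal-counterexample idea faces the same obstruction.

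The example also shows why the theorem cannot be proved from the deviator's greedy stability alone: the implication ``$v$ greedy-stable in a tree $\Rightarrow$ $v$ has no profitable deviation'' is simply false. What is true, and what the paper proves, is that such a profitable multi-buy deviation forces some \emph{other} agent to violate greedy stability: in the example, $c_1$ (or any leaf below it) can profitably buy a single edge towards $c_2$, so the tree was never in \textsc{Sum}-GE to begin with. The paper's proof of Theorem~\ref{thm_tree_eq} is built around this transfer: using Lemma~\ref{lemma_augmentation} and Corollary~\ref{cor_greedy} it reduces to a deviation that buys at least two edges into one subtree $T_q$, then uses Lemma~\ref{lem_different_subtrees} and Lemma~\ref{lem_tree_no_leaf} to locate a non-leaf bought vertex $x_a$ closest to $x$ and an agent $z$ beyond it, and shows $z$ has an improving purchase (again via Lemma~\ref{lemma_augmentation}), contradicting GE for $z$, not for $u$. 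Any correct proof must in some form invoke the greedy stability of agents other than the deviator; your proposal never does, so the approach as written cannot succeed.
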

\noindent Before we prove Theorem~\ref{thm_tree_eq}, we first provide some useful observations. The well-known notion of a \emph{$1$-median}~\cite{KH79} is used: 
  A \emph{$1$-median} of a connected graph $G$ is a vertex $x\in V(G)$, where $x \in \arg\min_{u\in V(G)}\sum_{w\in V(G)}d(u,w).$
\begin{lemma}\label{lem_tree_median}
 Let $(T,\alpha)$ be a tree network in \textsc{Sum}-GE. If agent $u$ owns edge $\{u,w\}$ in $(T,\alpha)$, then $w$ must be a $1$-median of its tree in the forest $T-\{u\}$.
\end{lemma}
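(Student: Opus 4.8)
The plan is to exploit the \emph{greedy swap} condition for agent $u$ on the edge $\{u,w\}$. Since $T$ is a tree, deleting the edge $\{u,w\}$ splits $T$ into exactly two connected pieces, the piece $A$ containing $u$ and the piece $B$ containing $w$; as $\{u,w\}$ is the unique edge of $T$ incident to $u$ that leads into $B$ (a second one would create a cycle), the vertex set of $B$ is exactly that of the tree containing $w$ in the forest $T-\{u\}$, which I will call $T_w$. Moreover the subgraph of $T$ induced on $B$ is a subtree, so $d_{T_w}$ agrees with $d_T$ restricted to $B$, and $d_A$ (the distance in the subtree induced on $A$) agrees with $d_T$ restricted to $A$.

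First I would write down agent $u$'s cost in $(T,\alpha)$. For $z\in A$ the unique $u$–$z$ path avoids $\{u,w\}$, so $d_T(u,z)=d_A(u,z)$; for $z\in B$ the path uses $\{u,w\}$ and then stays inside $B$, so $d_T(u,z)=1+d_{T_w}(w,z)$. Hence
\[ c_u(T,\alpha) = \alpha|S_u| + \sum_{z\in A}d_A(u,z) + |V(T_w)| + \sum_{z\in V(T_w)}d_{T_w}(w,z). \]
Now compare with the strategy obtained by swapping $\{u,w\}$ for $\{u,w'\}$ where $w'\in V(T_w)\setminus\{w\}$: the result is again a tree, the piece $A$ is untouched and $B=T_w$ is untouched as an induced subgraph, so the only change is that the last sum becomes $\sum_{z\in V(T_w)}d_{T_w}(w',z)$, while $|S_u|$ is unchanged. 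Therefore $u$ strictly profits from this swap if and only if $\sum_{z\in V(T_w)}d_{T_w}(w',z) < \sum_{z\in V(T_w)}d_{T_w}(w,z)$. Since $(T,\alpha)$ is in \textsc{Sum}-GE no such $w'$ exists, and this is precisely the statement that $w$ minimizes $\sum_{z\in V(T_w)}d_{T_w}(\,\cdot\,,z)$ over $V(T_w)$, i.e.\ $w$ is a $1$-median of $T_w$.

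The only points needing care — and the closest thing to an obstacle — are the legality and the exact effect of the swaps used. A swap of $\{u,w\}$ to some $w'\in V(A)$ disconnects the network and hence never helps, so it may be ignored; a swap to $w'\in V(T_w)$ is legal because, $T$ being a tree, $u$ owns no edge into $T_w$ besides $\{u,w\}$, so $\{u,w'\}$ is genuinely new and the outcome is again a tree. One should also note that in a \textsc{Sum}-GE no edge is owned by both of its endpoints (otherwise its owner could delete her copy for a free saving of $\alpha$, a valid greedy deletion), so when $u$ swaps $\{u,w\}$ away the edge truly disappears and the $A$/$B$ split above is the correct description of the new network. None of this is deep; the substance of the argument is the clean identification of $u$'s swap gain with the $1$-median objective restricted to $T_w$.
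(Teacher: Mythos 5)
Your proof is correct and follows essentially the same route as the paper: express agent $u$'s distance cost into the subtree $T_w$ as $|V(T_w)|+\sum_{z\in V(T_w)}d_{T_w}(w,z)$ and observe that a greedy swap of $\{u,w\}$ to any $w'\in V(T_w)$ changes only this sum, so \textsc{Sum}-GE forces $w$ to minimize it, i.e.\ to be a $1$-median of $T_w$. Your extra remarks on legality of the swap and on doubly-owned edges are sound but not essentially different from the paper's (contradiction-style) argument.
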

\begin{proof}
 Assume towards a contradiction that vertex $w$ is not a $1$-median vertex in its respective tree $T_w$ in the forest $T-\{u\}$. Clearly, agent $u$'s unique shortest paths to all vertices in $V(T_w)$ in $(T,\alpha)$ traverse vertex $w$ and we have that agent $u$'s distance-cost within $T_w$ is $|V(T_w)| + \sum_{v \in V(T_w)} d(w,v)$. Let $x$ be a $1$-median vertex of $T_w$. By definition, we have that $\sum_{v\in V(T_w)}d(x,v) < \sum_{v\in V(T_w)}d(w,v)$. Thus, agent $u$ can strictly decrease her distance-cost within $T_w$ by performing an edge-swap from $w$ towards $x$. This contradicts the fact that $(T,\alpha)$ is in \textsc{Sum}-GE.
\end{proof}
\noindent Let $(T,\alpha)$ be any tree network in \textsc{Sum}-GE and let $T^u$ be the forest induced by removing all edges owned by agent $u$ from $T$. Let $F^u$ be the forest $T^u$ without the tree containing vertex $u$. The above lemma directly implies the following:
\begin{corollary}\label{cor_greedy}
 Let $(T,\alpha)$ be in \textsc{Sum}-GE, and let $F^u$ be defined as above. Agent $u$'s strategy in $(T,\alpha)$ is the optimal strategy among all strategies that buy exactly one edge into each tree of $F^u$. 
\end{corollary}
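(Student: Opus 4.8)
\emph{Proof idea.} The plan is to derive the corollary directly from Lemma~\ref{lem_tree_median} by showing that, among the admissible strategies, agent $u$ can optimize the distance contribution of each tree of $F^u$ separately, and that her strategy in $(T,\alpha)$ does exactly that. First I would fix notation: write $k=|S_u|$, and for each $i$ let $w_i$ be the vertex with $\{u,w_i\}\in E$ owned by $u$. Since $T$ is a tree and every edge owned by $u$ is incident to $u$, the only edge of $T$ between the ``side'' of $w_i$ and the rest of $T$ is $\{u,w_i\}$; hence removing the edges owned by $u$ splits $T$ into the tree $T_0$ containing $u$ together with the trees $T_1,\dots,T_k$ of $F^u$, where $T_i$ is precisely the component of $T-\{u\}$ containing $w_i$ (the tree to which Lemma~\ref{lem_tree_median} applies), and $w_i$ is the unique neighbour of $u$ lying in $T_i$. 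A strategy $S'_u$ of the stated type buys one edge $\{u,x_i\}$ with $x_i\in V(T_i)$ for every $i$ and nothing else, so $|S'_u|=k=|S_u|$; thus the edge-cost term $\alpha k$ is the same for $S_u$ and for every admissible $S'_u$, and it suffices to compare distance costs.

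Next I would evaluate $u$'s distance cost in the network $(T',\alpha)$ obtained from $S'_u$ with all other agents' strategies unchanged. Because the $x_i$ lie in pairwise distinct components of $T^u$ and $T_0$ contains $u$, the graph $T'$ is again a tree, and the only edge of $T'$ leaving $V(T_i)$ is $\{u,x_i\}$; in particular no shortcut between components is created. Consequently $d_{T'}(u,v)=d_T(u,v)$ for $v\in V(T_0)$ and $d_{T'}(u,v)=1+d_{T_i}(x_i,v)$ for $v\in V(T_i)$. Summing, $u$'s distance cost equals a constant (the contribution of $T_0$, independent of $S'_u$) plus $\sum_{i=1}^{k}\bigl(|V(T_i)|+\sum_{v\in V(T_i)}d_{T_i}(x_i,v)\bigr)$, and each summand depends only on the choice of $x_i$ inside $T_i$ and is minimized exactly when $x_i$ is a $1$-median of $T_i$.

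Finally I would apply Lemma~\ref{lem_tree_median}: since $u$ owns $\{u,w_i\}$ and $T_i$ is the tree of $T-\{u\}$ containing $w_i$, the vertex $w_i$ is a $1$-median of $T_i$. Therefore $S_u$ minimizes every summand simultaneously, so its distance cost is at most that of any admissible $S'_u$; combined with the equality of edge costs this gives $c_u(T,\alpha)\le c_u(T',\alpha)$, proving optimality of $S_u$ among all strategies buying exactly one edge into each tree of $F^u$.

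The only genuinely fiddly point — the ``main obstacle'', though a mild one for a corollary claimed to follow directly — is the bookkeeping in the first paragraph: identifying the trees of $F^u$ (obtained by deleting the edges $u$ owns) with the trees of $T-\{u\}$ in Lemma~\ref{lem_tree_median} (obtained by deleting the vertex $u$), and checking that inserting the new edges $\{u,x_i\}$ does not create a shorter $u$-to-$v$ route that would spoil the split $d_{T'}(u,v)=1+d_{T_i}(x_i,v)$. Both reduce to the single observation that $T$ is a tree and every $u$-owned edge is incident to $u$, which I would state up front. A secondary point worth making explicit is the reading of ``buys exactly one edge into each tree of $F^u$'' as also forbidding additional edges inside $T_0$ or a second edge into some $T_i$, which is what keeps $|S'_u|=k$ and hence the edge costs equal.
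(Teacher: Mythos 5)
Your proposal is correct and follows exactly the route the paper intends: the paper treats the corollary as an immediate consequence of Lemma~\ref{lem_tree_median}, and your argument simply spells out the implicit separability step (equal edge costs, the distance cost splitting over the trees $T_i$ of $F^u$, and each term being minimized at a $1$-median). Your bookkeeping identifying the trees of $F^u$ with the trees of $T-\{u\}$ containing the $w_i$ is the right observation and matches the paper's setup.
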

\noindent Let $x\in V(T)$ be a $1$-median of the tree $T$. Let $u\notin V(T)$ be a special vertex. We consider the network $(G_T^u,\alpha)$, which is obtained by adding vertex $u$ and inserting edge $\{u,x\}$, which is owned by $u$, in $T$ and by assigning the ownership of all other edges arbitrarily among the respective endpoints of any other edge in $G_T^u$. Furthermore, let $y_1,\dots,y_l$ denote the neighbors of vertex $x$ in $T$ and let $T_{y_i}$, for $1 \leq i \leq l$, denote the maximal subtree of $T$ which is rooted at $y_i$ and which does not contain vertex $x$. See Fig.~\ref{fig:tree_1}~(left) for an illustration.
\begin{figure}[!h]
  \centering
  \includegraphics[width=10cm]{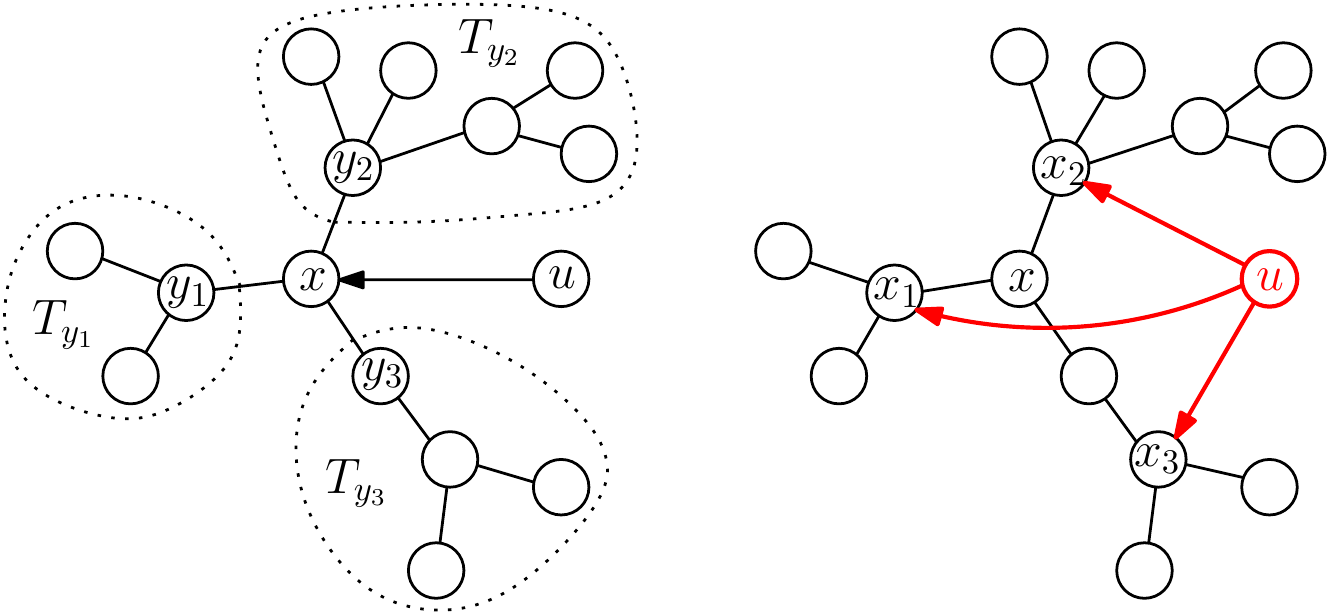}
  \caption{The network $(G_T^u,\alpha)$ before and after agent $u$ changes her strategy to $S_u^*$.}
  \label{fig:tree_1}
\end{figure}
We consider a special strategy of agent $u$ in $(G_T^u,\alpha)$: Let $S_u^* = \{x_1,\dots,x_k\}$ be the best strategy of agent $u$ which purchases \emph{at least two} edges. The situation with agent $u$ playing strategy $S_u^*$ is depicted in Fig.~\ref{fig:tree_1}~(right).
\begin{lemma}\label{lem_different_subtrees}
 Let $(G_T^u,\alpha)$, $S_u^* = \{x_1,\dots,x_k\}$ and the subtrees $T_{y_i}$, for $1\leq i \leq l$ be specified as above. There is no subtree $T_{y_i}$, which contains all vertices $x_1,\dots,x_k$.
\end{lemma}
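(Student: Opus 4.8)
The plan is to argue by contradiction: assume all of $x_1,\dots,x_k$ lie in one subtree $T_{y_i}$, and exhibit a strategy for agent $u$ that purchases exactly as many edges as $S_u^*$ but has strictly smaller cost, contradicting the choice of $S_u^*$ as the best strategy buying at least two edges.

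The first step is to extract what the $1$-median property of $x$ gives. Since $x$ is a $1$-median of $T$ and $y_i$ is a neighbour of $x$, moving the ``centre'' from $x$ to $y_i$ lowers the distance to every vertex of $T_{y_i}$ by $1$ and raises the distance to every vertex of $V(T)\setminus V(T_{y_i})$ by $1$; hence $\sum_v d_T(y_i,v)-\sum_v d_T(x,v)=|V(T)|-2|V(T_{y_i})|$, and this is nonnegative because $x$ is a $1$-median. Writing $N:=|V(T)\setminus V(T_{y_i})|$, we obtain $N\ge|V(T_{y_i})|$, which is the crucial inequality.

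Next I would fix the replacement strategy. Let $x_{j^*}$ be a vertex of $S_u^*$ closest to $y_i$ in $T$, put $m:=d_T(x_{j^*},y_i)$, and set $S':=\{x\}\cup(S_u^*\setminus\{x_{j^*}\})$. Since $S_u^*\subseteq V(T_{y_i})$ by assumption and $x\notin V(T_{y_i})$, we have $|S'|=|S_u^*|=k\ge 2$, so it suffices to compare distance costs. Because $T_{y_i}$ is joined to the rest of $T$ only through the edge $\{x,y_i\}$, for every $v\in V(T)\setminus V(T_{y_i})$ agent $u$'s distance is $2+m+d_T(x,v)$ under $S_u^*$ (the route via $x_{j^*}$, along $T_{y_i}$ to $y_i$, across to $x$, then on to $v$) and $1+d_T(x,v)$ under $S'$, a gain of exactly $1+m$ and hence $N(1+m)$ in total. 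For $v\in V(T_{y_i})$, $u$'s distance is $1+\min_j d_T(x_j,v)$ under $S_u^*$ and $\min\{1+d_T(x,v),\,1+\min_{j\neq j^*}d_T(x_j,v)\}$ under $S'$: it can only grow when $x_{j^*}$ is the unique nearest hub to $v$, in which case routing via $x$ together with $d_T(x_{j^*},v)\ge d_T(y_i,v)-m$ limits the loss to $1+m$; and the loss is $0$ at each of the $k-1$ vertices $x_j$ with $j\neq j^*$, which keep their direct edge to $u$. Summing over $V(T_{y_i})$, the total loss is at most $(|V(T_{y_i})|-(k-1))(1+m)$.

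Putting the two bounds together, the change of distance cost from $S_u^*$ to $S'$ is at most $\bigl(|V(T_{y_i})|-(k-1)-N\bigr)(1+m)\le(1-k)(1+m)<0$, using $N\ge|V(T_{y_i})|$ and $k\ge 2$; since $S'$ and $S_u^*$ buy equally many edges, this means $c_u(S')<c_u(S_u^*)$, the desired contradiction. I expect the main obstacle to be the case analysis for vertices $v\in V(T_{y_i})$ — deciding exactly when such a vertex incurs a loss, why that loss never exceeds $1+m$, and why the hubs $x_j$ with $j\neq j^*$ contribute no loss — together with the supporting observation that every shortest path leaving $T_{y_i}$ must cross the single edge $\{x,y_i\}$, which is what makes the clean distance formulas valid.
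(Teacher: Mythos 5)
Your proof is correct and follows essentially the same route as the paper: a contradiction via an exchange argument that replaces the hub of $S_u^*$ nearest to $x$ by a vertex closer to $x$, using the $1$-median balance condition $|V(T_{y_i})|\le n/2$ (equivalently your $N\ge |V(T_{y_i})|$) together with the presence of the $k-1\ge 1$ remaining hubs inside $T_{y_i}$ to show that the gain outside $T_{y_i}$ strictly outweighs the loss inside. The only difference is that you move the chosen hub all the way to $x$, bounding per-vertex changes by $1+m$ via the triangle inequality, whereas the paper moves it just one step toward $x$ so that every distance changes by exactly one; both accountings are valid.
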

\begin{proof}
 We assume towards a contradiction that there is a strategy $S_u^*$ buying $k>1$ edges and a subtree $T_{y_i}$ of $T$ such that $x_1,\dots,x_k$ are vertices of $T_{y_i}$. 

 We claim that if this is the case, then there is a strategy $S_u'$ which purchases exactly $k$ edges and which strictly outperforms strategy $S_u^*$, that is, agent $u$ can strictly decrease her cost by switching from strategy $S_u^*$ to strategy $S_u'$. Clearly, this yields a contradiction to $S_u^*$ being the best strategy for agent $u$.  

 Consider the vertices $x_1,\dots,x_k$ induced by strategy $S_u^*$. By assumption, all these vertices are contained in subtree $T_{y_i}$. Observe, that vertex $x$ does not belong to any subtree $T_{y_j}$. We will use the following well-known fact~\cite{KH79} about a $1$-median in a tree stated in our terminology: Vertex $x$ is a $1$-median of tree $T$ having $n$ vertices if and only if $|V(T_{y_i})| \leq \frac{n}{2}$ for all $1\leq i \leq l$. 

 Let $x' \in \{x_1,\dots,x_k\}$ be the vertex having minimum distance to vertex $x$ and let $x''$ be the neighbor of $x'$ which is closer to $x$. (Note, that $x''=x$ is possible and that $x''$ must be a non-neighbor of $u$.) Clearly, we have $d_T(x',x)\geq 1$. Let $S_u'$ be the strategy $S_u^*$ with only one modification: Vertex $x'$ is replaced by vertex $x''$. 

 We claim that $S_u'$ yields strictly less cost for agent $u$ than strategy $S_u^*$. Observe, that since $x$ is a $1$-median  we have $|V(T)\setminus V(T_{y_i})| \geq \frac{n}{2}$. Hence, the replacement of $x'$ by $x''$ yields a cost decrease for agent $u$ by at least $\frac{n}{2}$. On the other hand, this replacement increases agent $u$'s cost by at most $\frac{n}{2}-1$. This is true, because $k>1$ and $d_T(x',x'') = 1$ we have that agent $u$'s distances to all but one vertices in $T_{y_i}$ can possibly increase by $1$. Since we have only replaced $x'$ by $x''$ all other distances stay the same. Hence, we have that $S_u'$ yields strictly less cost for agent $u$ than strategy $S_u^*$ and we have a contradiction.
\end{proof}
\noindent Next, let us consider two special strategies of agent $u$. Let $S_u^1$ be agent $u$'s best strategy, which buys at least two edges including one edge towards vertex $x$. Furthermore, let $S_u^2$ be agent $u$'s best strategy, which buys at least two edges, but no edge towards vertex $x$. 
\begin{lemma}\label{lem_tree_no_leaf}
 Let $(G_T^u,\alpha)$, $S_u^1$, $S_u^2$ and vertex $x$ be specified as above. Let $x_j \in S_u^2$ be a vertex which has minimum distance to $x$ among all vertices in $S_u^2$. If strategy $S_u^2$ yields less cost for agent $u$ than strategy $S_u^1$, then $x_j$ cannot be a leaf of $G_T^u$.
\end{lemma}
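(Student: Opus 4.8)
The plan is to argue by contradiction. Suppose $x_j$ is a leaf of $G_T^u$. The idea is to ``slide'' agent $u$'s edge incident to $x_j$ one step towards $x$ and to observe that this cannot increase $u$'s cost, which will clash either with the hypothesis or with the optimality of $S_u^2$.

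First I would collect the easy facts. Since $x_j\in S_u^2$ it is distinct from $x$ and from $u$, and because the only edges of $G_T^u$ not lying in $T$ are incident to $u$, the vertex $x_j$ is in fact a leaf of $T$; let $p$ denote its unique neighbour in $T$. As $x_j\neq x$, the edge $\{x_j,p\}$ is the first edge on the $T$-path from $x_j$ to $x$, so $d_T(p,x)=d_T(x_j,x)-1$; since $x_j$ attains the minimum distance to $x$ among the vertices of $S_u^2$, this forces $p$ to be strictly closer to $x$ than every vertex of $S_u^2$, hence $p\notin S_u^2$. Consequently $\hat S:=(S_u^2\setminus\{x_j\})\cup\{p\}$ is a strategy of $u$ buying exactly $|S_u^2|\geq 2$ edges.

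The main step is to compare agent $u$'s cost under $S_u^2$ and under $\hat S$; the edge term is unchanged, so only the distance cost matters. Here I would use the elementary observation that in $G_T^u$ augmented by any set $S$ of edges incident to $u$, every shortest $u$--$v$ path uses exactly one edge incident to $u$, so its length equals $\min_{w\in S}(1+d_T(w,v))$. Writing $A(v):=\min_{x_i\in S_u^2\setminus\{x_j\}}(1+d_T(x_i,v))$, this gives that $u$'s distance to $v$ under $S_u^2$ is $\min(1+d_T(x_j,v),A(v))$, and, since $x_j$ being a leaf yields $d_T(p,v)=d_T(x_j,v)-1$ for every $v\neq x_j$, that $u$'s distance to $v$ under $\hat S$ is $\min(d_T(x_j,v),A(v))$ for $v\neq x_j$, while for $v=x_j$ it changes from $1$ to $2$. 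Comparing term by term, the cost change caused by switching to $\hat S$ is $+1$ at $v=x_j$, is exactly $-1$ at each vertex of $B:=\{v\neq x_j:\ A(v)\geq d_T(x_j,v)+1\}$ (the vertices for which routing through $x_j$ was a shortest-path option in $S_u^2$), and is $0$ at all remaining vertices; hence $u$'s cost changes by exactly $1-|B|$. I expect this bookkeeping to be the delicate part: the point is that, because nothing hangs behind a leaf, the unique vertex that can be harmed is $x_j$ itself, whereas every ``$x_j$-gateway'' vertex is helped by exactly one unit.

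To finish, I would pin down enough elements of $B$. Minimality of $d_T(x_j,x)$ over $S_u^2$ gives $A(x)\geq d_T(x_j,x)+1$, so $x\in B$; and since $p\notin S_u^2$, no $x_i\in S_u^2\setminus\{x_j\}$ equals $p$, so $A(p)\geq 2=d_T(x_j,p)+1$, i.e.\ $p\in B$. Thus $|B|\geq 1$ always, and $|B|\geq 2$ whenever $p\neq x$. If $p\neq x$, then $\hat S$ buys at least two edges and no edge to $x$ while its cost is at most that of $S_u^2$ minus one, contradicting the optimality of $S_u^2$ among such strategies. If instead $p=x$, then $\hat S$ buys at least two edges including one to $x$, so by definition of $S_u^1$ its cost is at least that of $S_u^1$; but $|B|\geq 1$ makes its cost at most that of $S_u^2$, so the cost of $S_u^1$ is at most that of $S_u^2$, contradicting the hypothesis. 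In either case $x_j$ cannot be a leaf. (Note that the degenerate case $p=x$ is the only place where the hypothesis is actually invoked, and that, interestingly, this argument does not seem to require Lemma~\ref{lem_different_subtrees}.)
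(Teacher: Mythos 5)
Your proof is correct and follows essentially the same route as the paper's: assume $x_j$ is a leaf, replace it by its unique neighbour, and split into the cases where that neighbour is $x$ (contradicting the hypothesis via $S_u^1$) or is not $x$ (contradicting the optimality of $S_u^2$). Your explicit bookkeeping via the set $B$ merely spells out the distance changes the paper states directly, so the two arguments coincide in substance.
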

\begin{proof}
 We assume towards a contradiction that $S_u^2$ yields strictly less cost for agent $u$ than strategy $S_u^1$ and vertex $x_j \in S_u^2$, which has minimum distance to $x$ among all vertices in $S_u^2$, is a leaf of $G_T^u$. Let $x_j'$ be the unique neighbor of $x_j$. It follows that $d(x_j',x) = d(x_j,x) -1$. There are two cases: 
 
 If $d(x_j,x)\geq 2$, then let $S_u'$ be the strategy $S_u^2$, where vertex $x_j$ is replaced by vertex $x_j'$. We claim that agent $u$ can strictly decrease her cost by switching from strategy $S_u^2$ to strategy $S_u'$. Observe, that by switching from $S_u^2$ to $S_u'$, agent $u$ decreases her distance to $x$ and to $x_j'$ by one. On the other hand, only the distance to vertex $x_j$ increases by one. Observe, that $|S_u'| = |S_u^2|$ and $x\notin S_u'$. Hence, $S_u'$ yields strictly less cost for agent $u$ than strategy $S_u^2$, which is a contradiction to the fact that $S_u^2$ is agent $u$'s best strategy which buys at least two edges and no edge towards $x$. 

 On the other hand, consider the case where $d(x_j,x) \leq 1$. Note, that $x_j\neq x$, since $x\notin S_u^2$. Hence, we have $d(x_j,x) = 1$. Let $S_u''$ be the strategy $S_u^2$, where we replace $x_j$ by $x$. We have $|S_u^2| = |S_u''|$ and $x\in S_u''$. Furthermore, since $x\notin S_u^2$, $d(x_j,x) = 1$ and since $x_j$ is a leaf, we have that strategy $S_u''$ yields at most the cost of $S_u^2$ for agent $u$. This contradicts the fact that $S_u^2$ strictly outperforms $S_u^1$. 
\end{proof}
\noindent Now we have all the tools we need to prove Theorem~\ref{thm_tree_eq}.
\begin{proof}[Proof of Theorem~\ref{thm_tree_eq}]
 We will prove the contra-positive statement of Theorem~\ref{thm_tree_eq}. We show that if an agent $u$ can decrease her cost by performing a strategy-change in a tree network $(T,\alpha)$ which is in \textsc{Sum}-GE, then there is an agent $z$ in $V(T)$ who can decrease her cost by performing a \emph{greedy} strategy-change. In that case we have a contradiction to $(T,\alpha)$ being in \textsc{Sum}-GE.\\ 
 If agent $u$ can decrease her cost by buying, deleting or swapping one own edge, then we have $u=z$ and we are done. Hence, we assume that agent $u$ cannot decrease her cost by a greedy strategy-change but by performing an arbitrary strategy-change. 
 We consider agent $u$'s strategy-change towards the best possible \emph{arbitrary} strategy $S^*$ (if $u$ has more than one such strategy, then we choose the one which buys the least number of edges). Clearly, agent~$u$ cannot remove any owned edge without purchasing edges, since $T$ is a tree and the removal would disconnect $T$. Furthermore, since $(T,\alpha)$ is in \textsc{Sum}-GE and by Lemma~\ref{lemma_augmentation}, agent $u$ cannot decrease her cost by purchasing $k>0$ \emph{additional} edges. Hence, the only way agent $u$ can possibly decrease her cost is by removing $j$ own edges and building $k$ edges simultaneously. Clearly, $k\geq j$ must hold. Furthermore, by Corollary~\ref{cor_greedy}, it follows that $k>j$. 
 Let $F^u$ be the forest obtained by removing the $j$ edges owned by agent $u$ from $T$ and let $T^*$ be the tree in $F^u$ which contains vertex $u$. Observe that among the $k$ new edges, there cannot be edges having an endpoint in $T^*$. This is true because $(T,\alpha)$ is in \textsc{Sum}-GE and by Lemma~\ref{lemma_augmentation}. Any such edge would be a possible greedy augmentation which we assume not to exist. Hence, by the pigeonhole principle, we have that there must be at least one tree $T_q$ in $F^u$ into which agent $u$ buys at least \emph{two} edges with strategy $S^*$. We focus on $T_q$ and will find agent $z$ within.\\
 Let $\{u,x\}$, with $x\in V(T_q)$, be the unique edge of $T$ which connects $u$ to the subtree~$T_q$. Hence, agent~$u$'s strategy-change to $S^*$ removes edge $\{u,x\}$ and buys $k_q>1$ edges $\{u,x_1\},\dots,\{u,x_{k_q}\}$, with $x_j \in V(T_q)$ for $1\leq j \leq k_q$. Let $X = \{x_1,\dots,x_{k_q}\}$. By Lemma~\ref{lemma_augmentation}, we have $x_j \neq x$, for $x_j \in X$.
 Let $y_1,\dots,y_l$ denote the neighbors of vertex $x$ in $T_q$ and let $T_{y_1},\dots,T_{y_l}$ be the maximal subtrees of $T_q$ not containing vertex~$x$, which are rooted at vertex~$y_1,\dots,y_l$, respectively. Let $x_a \in X$ be a vertex of $X$ which has minimum distance to vertex~$x$. Let $T_a \in \{T_{y_1},\dots,T_{y_l}\}$ be the subtree containing $x_a$. By Lemma~\ref{lem_different_subtrees}, we have that there is a subtree $T_b \in \{T_{y_1},\dots,T_{y_l}\}$, with $T_b\neq T_a$, which contains at least one vertex of~$X$. Let $B = \{x_{b_1},\dots,x_{b_p}\} = X \cap V(T_b)$. 
Furthermore, since no strategy which buys at least two edges including an edge towards $x$ into $T_q$ outperforms $u$'s greedy strategy within $T_q$ and by Lemma~\ref{lem_tree_no_leaf}, we have that vertex~$x_a$ cannot be a leaf. That is, there is a vertex~$z \in V(T_q)$, which is a neighbor of $x_a$, such that $d(z,x) > d(x_a,x)$. We show that agent $z$ can decrease her cost by buying \emph{one} edge in $(T,\alpha)$.\\
 First of all, notice that by definition of $S^*$, we have that \emph{each} edge $\{u,x_j\}$, with $x_j \in X$, must independently of the other bought edges yield a distance decrease of \emph{more than} $\alpha$ for agent~$u$. Otherwise agent~$u$ could remove this edge and obtain a strictly better (or smaller) strategy, which contradicts the fact that $S^*$ is the best possible strategy (buying the least number of edges). Let $D_j\subset V(T_q)$ be the set of vertices to which edge $\{u,x_j\}$ is the first edge on agent~$u$'s unique shortest path. Since $x_a$ has minimum distance to $x$, it follows that $D_r \subseteq V(T_b)$ for $r\in \{b_1,\dots,b_p\}$. 
 The main observation is that agent $z$ faces in some sense the same situation as agent~$u$ with strategy $S^*$ but without all edges $\{u,y\}$, where $y\in B$: Both have vertex $x_a$ as neighbor and their shortest paths to any vertex in $T_b$ all traverse $x_a$ and $x$. Remember, that each edge $\{u,y\}$, for all $y\in B$, yields a distance decrease of more than $\alpha$ for agent~$u$ and that $D_r \subseteq V(T_b)$, for $r\in \{b_1,\dots,b_p\}$. Furthermore, removing all those edges from $S^*$ yields a strict cost increase for agent $u$. This implies that agent $z$ can decrease her cost by buying all edges $\{z,y\}$, for $y\in B$, simultaneously. If $|B| = 1$, then this strategy-change is a greedy move by agent~$z$ which decreases $z$'s cost. If $|B|>1$, then, by the contra-positive statement of Lemma~\ref{lemma_augmentation}, it follows that there exists \emph{one} edge $\{z,y^*\}$, with $y^* \in B$, which agent~$z$ can greedily buy to decrease her cost.
\end{proof}
\subsection{Non-Tree Networks in Sum Greedy Equilibrium}
There exist non-tree networks in \textsc{Sum}-GE, since, as shown by Albers et al.~\cite{Al06}, there exist non-tree networks in \textsc{Sum}-NE and we have \textsc{Sum}-NE~$\subseteq$ \textsc{Sum}-GE. Having Theorem~\ref{thm_tree_eq} at hand, one might hope that this nice property carries over to non-tree greedy equilibria. Unfortunately, this is not true.
\begin{theorem}\label{thm_approx_lowerbound}
 There is a network in \textsc{Sum}-GE which is not in $\beta$-approximate \textsc{Sum}-NE for $\beta<\frac{3}{2}$.
\end{theorem}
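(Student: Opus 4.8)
I would prove the (standard) reading of the statement: the \textsc{Sum}-GE approximation ratio is at least $\tfrac32$, i.e. for every $\beta<\tfrac32$ I will exhibit a network in \textsc{Sum}-GE containing an agent whose cost is more than $\beta$ times her optimal cost. By Theorem~\ref{thm_tree_eq} the witnessing network must contain a cycle, so I would look for a simple dense non-tree. The family I would use is the complete bipartite graph $K_{2,j}$: the two vertices $u,c$ on the small side each own \emph{all} $j$ of their incident edges, the $j$ vertices $x_1,\dots,x_j$ on the large side own nothing, and $\alpha=2$ (in fact any $\alpha\in(1,2]$ works). The idea is that $u$ is ``stuck'' maintaining $j$ essentially interchangeable links when a single link to $c$ would serve her better.

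\textbf{Step 1: verify $(K_{2,j},2)\in$\,\textsc{Sum}-GE.} The graph has diameter~$2$, so by Lemma~6 of \cite{ADHL10} it lies in \textsc{Sum}-SE, hence no agent can profit from swapping an own edge. In a diameter-$2$ network, buying a missing edge $\{v,w\}$ shortens only the single distance $d(v,w)$, from $2$ to $1$, a gain of exactly $1\le\alpha$; so no agent profits from an augmentation (and, by Lemma~\ref{lemma_augmentation}, not from several). Finally only $u$ and $c$ own edges; if $u$ drops $\{u,x_i\}$ she can reach $x_i$ only via some other $x_k$ and then $c$, so $d(u,x_i)$ rises from $1$ to $3$ while no other distance changes, a cost increase of exactly $2\ge\alpha$, and symmetrically for $c$. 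Thus no buy, delete, or swap of a single own edge helps any agent.

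\textbf{Step 2: the profitable deviation and the ratio.} Let $u$ switch her strategy from $\{x_1,\dots,x_j\}$ to $\{c\}$, turning the network into the star centred at $c$. Her cost falls from $c(u)=j\alpha+(j+2)$ to $c^*(u)=\alpha+(2j+1)=2j+3$; one checks this is optimal, since any strategy of $u$ must buy at least one edge and then has distance cost at least $1+2j$, the bound being attained only by $\{u,c\}$. For $j\ge 2$ this is a strict improvement, and with $\alpha=2$ the cost ratio is $\tfrac{3j+2}{2j+3}$, which tends to $\tfrac32$ as $j\to\infty$. Hence for any fixed $\beta<\tfrac32$, choosing $j$ large enough gives a \textsc{Sum}-GE network that is not in $\beta$-approximate \textsc{Sum}-NE.

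\textbf{Main obstacle.} The hard part is \emph{finding} such a gadget rather than checking it: to make the gap approach $\tfrac32$ one needs $\alpha$ as large as $2$, but GE squeezes $\alpha$ from above by each owner's cheapest deletion, and in most candidate networks — where $u$ would consolidate her links through some ``hub'' vertex — a distant vertex would itself rather buy a cheap edge to that hub, which forces $\alpha$ up and collapses the gap back to~$1$. $K_{2,j}$ sidesteps this precisely because every $x_i$ is already adjacent to \emph{both} centers, so no vertex is tempted to buy anything at all, while $u$ still overpays by a factor tending to $\tfrac32$. If an exact $\tfrac32$ is wanted for a single finite network rather than as a supremum over the family, I would try to attach a small constant-size gadget to $K_{2,j}$ to push the ratio exactly to $\tfrac32$; otherwise the family above already establishes the bound.
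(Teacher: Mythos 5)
Your construction is correct, and it establishes the theorem by the same overall strategy as the paper -- exhibit a parametric family of diameter-$2$ networks in \textsc{Sum}-GE in which one agent $u$ owns many edges, verify greedy stability (diameter $2$ kills buys via the gain-of-$1$ argument and swaps via Lemma~6 of \cite{ADHL10}, and deletions are checked directly), and then show that collapsing $u$'s strategy to a single ``hub'' edge reduces her cost by a factor tending to $\tfrac32$ -- but with a genuinely different and simpler witness. The paper uses a gadget $G_k$ on $\Theta(k^2)$ vertices with growing edge price $\alpha=k+1$, where each edge $\{u,y_i\}$ is protected against deletion by the $k$ vertices $z_i^1,\dots,z_i^k$ hanging behind $y_i$; your $K_{2,j}$ with both centers owning all their edges needs only $j+2$ vertices and a constant $\alpha=2$, deletion-stability coming from the fact that a dropped neighbor jumps from distance $1$ to distance $3$. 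What your version buys is economy (linear size, fixed $\alpha$, a completely symmetric graph); what the paper's version buys is nothing extra for this theorem, so your gadget is arguably the cleaner one.

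One small inaccuracy: your justification that $S_u^*=\{c\}$ is optimal (``any strategy of $u$ has distance cost at least $1+2j$'') is false as stated -- e.g.\ $S_u=\{c,x_1\}$ has distance cost $2j$ -- though the conclusion is true (with $m\geq 1$ purchased edges the total cost is at least $2m+\bigl(m+2(j+1-m)\bigr)=2j+2+m\geq 2j+3$, with equality only for $\{c\}$). More importantly, exact optimality is not needed at all: to refute $\beta$-approximate \textsc{Sum}-NE you only need the upper bound $c^*(u)\leq 2j+3$ furnished by the particular deviation $\{c\}$, exactly as the paper only compares against the deviation $\{x\}$. So the slip does not affect the validity of the proof.
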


\begin{proof}
 We consider a special family $G_1,G_2,\dots$ of graphs. The graph $G_k$ is constructed as follows: 
 We have $V(G_k) = \{u,w,x,y_1,\dots,y_k\} \cup \{z_i^j \mid 1 \leq i,j \leq k\}$. Vertex $u$ owns edges towards $y_1,\dots,y_k$, vertex $w$ owns edges towards $x$ and $u$, each vertex $z_i^j$ owns an edge to $x$ and $y_i$ and the vertices $y_1,\dots,y_k$ form a clique, with arbitrary edge-ownership. Fig.~\ref{fig:lowerbound}(left) illustrates this construction for $k=3$. 
 \begin{figure}[!ht]
  \centering
  \includegraphics[width=12cm]{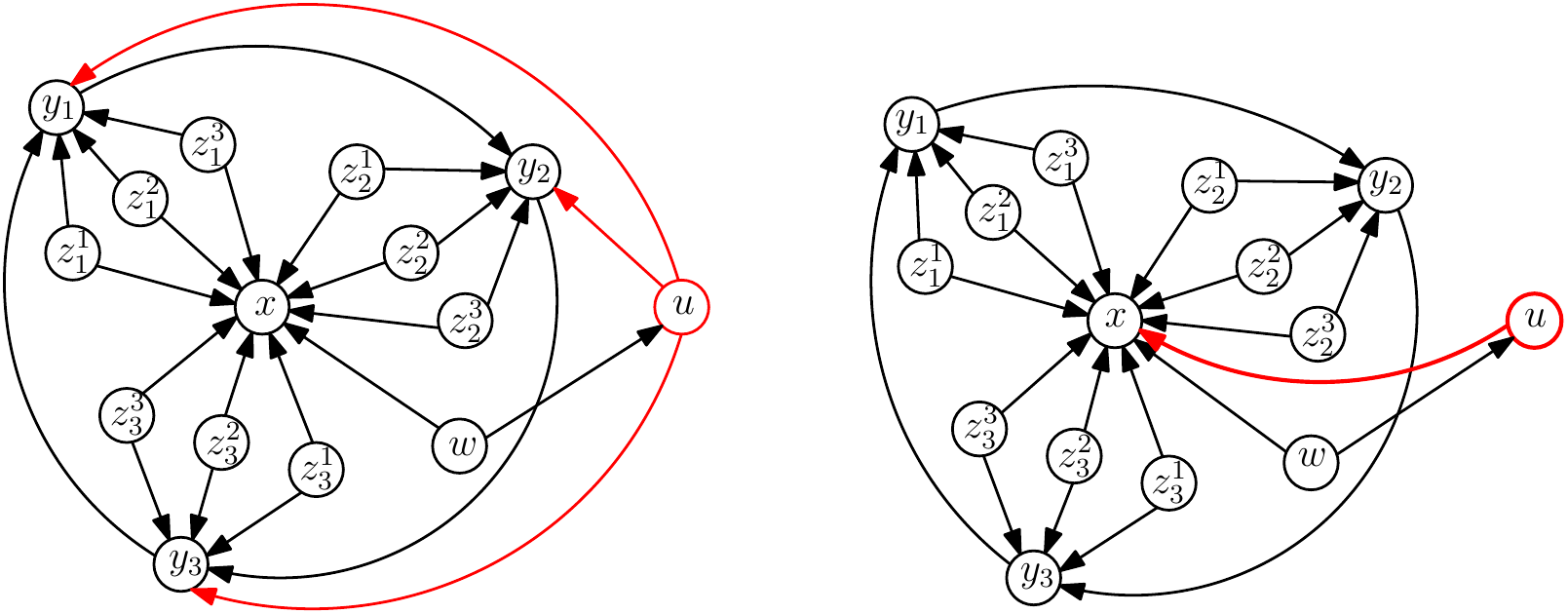}
  \caption{The network $(G_k,k+1)$ for $k=3$ and agent~$u$'s best response. Edges point away from its owner. For $k\to\infty$ agent $u$'s improvement approaches a factor of $\frac{3}{2}$.}
  \label{fig:lowerbound}
\end{figure}
 
 First we show that the network $(G_k,k+1)$ is in \textsc{Sum}-GE and then we show that agent~$u$'s best strategy yields a cost decrease by a factor of roughly~$\frac{3}{2}$.  

 Note, that $G_k$ has diameter $2$. Since $\alpha = k+1 > 1$, it follows that no agent can buy an edge to decrease her cost. Furthermore, swapping any own edge cannot decrease an agent's cost either, since the number of neighbors stays the same. Thus, we only have to argue that no agent can delete an own edge to decrease her cost. Consider agent $u$. Deleting edge $\{u,y_i\}$ increases $u$'s distances to $y_i,z_i^1,\dots,z_i^k$ by one. Hence, for $\alpha = k+1$, this operation does not decrease agent~$u$'s cost. An agent~$y_i$ is in essentially the same situation. If $y_i$ deletes her edge $\{y_i,y_j\}$, then $y_i$'s distances to $y_j,z_j^1,\dots,z_j^k$ increase by one. Thus, agents $y_1,\dots,y_k$ cannot delete an edge to decrease their cost. If agent~$w$ deletes edge $\{w,u\}$, then all distances towards $u,y_1,\dots,y_k$ increase by one. Furthermore if $w$ deletes edge $\{w,x\}$, then all distances towards $x$ and all $z_i^j$, for $1\leq i,j  \leq k$ increase by at least one. Thus, agent~$w$ cannot delete 
an edge to decrease her cost. Finally, consider an agent~$z_i^j$. Deleting edge $\{z_i^j,x\}$ increases $z_i^j$'s distances to $x$ and all $z_p^q$, for $p\neq i$ and $1\leq q \leq k$. Deleting edge $\{z_i^j,y_i\}$ increases $z_i^j$'s distances to $u,y_1,\dots,y_k$ by one. Hence, no agent can delete an edge to decrease her cost and we have that $(G_k,k+1)$ is in \textsc{Sum}-GE.

 Now consider a strategy-change of agent~$u$ from strategy $S_u = \{y_1,\dots,y_k\}$ to strategy $S_u^* = \{x\}$, see Fig.~\ref{fig:lowerbound}(right). Let $(G_k^*,k+1)$ be the network induced by $S_u^*$. We claim that $S_u^*$ is agent~$u$'s best possible strategy. It is easy to see that no other strategy $S_u'$, with $|S_u'|\leq 1$ outperforms $S_u^*$. Furthermore, note that with strategy $S_u^*$ agent~$u$ has exactly the $k$ vertices $y_1,\dots,y_k$ at distance $3$. Any edge $\{u,y_i\}$ yields a cost decrease of exactly $k+1$, but since $\alpha = k+1$, such an edge does not decrease agent~$u$'s cost. Clearly, edges towards a vertex $z_i^j$ are even worse than edges towards $y_i$. By Lemma~\ref{lemma_augmentation}, we have that even more additional edges cannot decrease $u$'s cost. Furthermore, it is easy to see that strategy $S_u$ is agent~$u$'s best possible strategy, which does not buy an edge towards $x$. We will show that $S_u^*$ yields strictly less cost than $S_u$ for agent~$u$, which will settle the 
claim that $S_u^*$ is optimal.   

 Let $c(S_u,k)$ and $c(S_u^*,k)$ denote agent~$u$'s cost in $(G_k,k+1)$ and $(G_k^*,k+1)$, respectively. We have
 $$ \lim_{k \to \infty}\frac{c(S_u,k)}{c(S_u^*,k)} = \lim_{k \to \infty}\frac{k\alpha + k + 1 + 2(k^2 + 1)}{\alpha + 2 + 2k^2 + 3k} = \lim_{k \to \infty}\frac{3k^2 + 2k + 3}{2k^2 + 4k + 3} = \frac{3}{2}.$$
 Thus, for any $\beta < \frac{3}{2}$ there is a $k'$ such that $c(S_u,k') > \beta c(S_u^*,k')$, which implies that the \textsc{Sum}-GE $(G_{k'},k'+1)$ is not a $\beta$-approximate \textsc{Sum}-NE for $\beta < \frac{3}{2}$. 
\end{proof}

 \noindent Now let us turn to the good news. We show that \textsc{Sum}-GEs cannot be arbitrarily unstable. On the contrary, they are very close to \textsc{Sum}-NEs in terms of stability. 
\begin{theorem}\label{thm_approx_upperbound}
 Every network in \textsc{Sum}-GE is in $3$-approximate \textsc{Sum}-NE. 
\end{theorem}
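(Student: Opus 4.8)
Fix an arbitrary agent $u$, and write $c(u)=q\alpha+D$, where $q=|S_u|$ is the number of edges $u$ owns in $(G,\alpha)$ and $D=\sum_{w}d_G(u,w)$ is her distance cost, and $c^*(u)=|S^*|\alpha+D^*$ for a best-response strategy $S^*$ with resulting network $G^*$ and distance cost $D^*$. The plan is to prove the two bounds $D\le c^*(u)$ and $q\alpha\le 2c^*(u)$ separately; adding them gives $c(u)\le 3c^*(u)$, which is exactly the claim.

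\paragraph{Bounding the distance cost.}
First I would let $u$ buy, on top of $S_u$, every edge of $S^*\setminus S_u$, obtaining a network $G^{**}$ in which $u$ plays $S_u\cup S^*$. Since $(G,\alpha)$ is in \textsc{Sum}-GE, no single augmentation helps $u$, so by Lemma~\ref{lemma_augmentation} buying the whole set $S^*\setminus S_u$ does not decrease her cost, i.e.\ $c_{G^{**}}(u)\ge c(u)$. On the other hand $E(G^*)\subseteq E(G^{**})$, so $u$'s distance cost in $G^{**}$ is at most $D^*$. Writing out $c_{G^{**}}(u)=(q+|S^*\setminus S_u|)\alpha+(\text{distance cost}\le D^*)$ and cancelling the $q\alpha$ term yields $D\le |S^*\setminus S_u|\alpha+D^*\le |S^*|\alpha+D^*=c^*(u)$.

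\paragraph{Bounding the edge cost.}
Next I would split $u$'s $q$ edges according to whether they are bridges of $G$. Each bridge $\{u,v\}$ detaches a \emph{hanging} component $C_v$ (the component of $v$ in $G-\{u,v\}$); two such components are disjoint, and since only $u$'s incident edges change in the best response, connectivity of $G^*$ forces $S^*\cap C_v\neq\emptyset$. Moreover, for a non-bridge edge $\{u,v\}$ with $v\in S^*$, the vertex $v$ lies outside every hanging component. Hence there is an injection from (bridge edges of $u$)~$\sqcup$~(non-bridge edges $\{u,v\}$ of $u$ with $v\in S^*$) into $S^*$, so these edges together number at most $|S^*|$ and contribute at most $|S^*|\alpha\le c^*(u)$ to $q\alpha$. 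The remaining edges form the set $R_{\mathrm{nb}}$ of non-bridge edges $\{u,v\}$ with $v\notin S^*$, i.e.\ the non-bridge edges $u$ drops in her best response. For $e=\{u,v\}\in R_{\mathrm{nb}}$, greedy deletion shows that removing $e$ alone raises $u$'s distance cost by some $\Delta_e\ge\alpha$, carried by the set $W_e$ of vertices $w$ for which $e$ lies on \emph{all} shortest $u$--$w$ paths. Since an edge incident to $u$ is a bottleneck for $w$ only if it is the unique first edge on a shortest $u$--$w$ path, the sets $W_e$ ($e\in R_{\mathrm{nb}}$) are pairwise disjoint, so $|R_{\mathrm{nb}}|\alpha\le\sum_{e\in R_{\mathrm{nb}}}\Delta_e=\sum_{w}\big(d_{G-e(w)}(u,w)-d_G(u,w)\big)$, a sum with each vertex counted at most once. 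It then remains to show this quantity is at most $c^*(u)$, by charging each surviving term either against $d_{G^*}(u,w)$ (when $w$'s replacement route in $G^*$ starts along an edge $u$ still owns, so the same route is available in $G-e(w)$) or against the budget $|S^*|\alpha$ via the greedy-augmentation property (when it starts along a freshly bought edge of $N=S^*\setminus S_u$). Together this gives $q\alpha\le c^*(u)+|R_{\mathrm{nb}}|\alpha\le 2c^*(u)$, and hence $c(u)\le 3c^*(u)$.

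\paragraph{Main obstacle.}
The genuinely delicate step is the last one, $\sum_{e\in R_{\mathrm{nb}}}\Delta_e\le c^*(u)$. Everything else is bookkeeping plus Lemma~\ref{lemma_augmentation} and elementary facts about bridges, but here the shortest-path structure of $G^*$ can differ substantially from that of $G$ (freshly bought edges create shortcuts, dropped edges destroy old routes), so one must argue carefully how each affected vertex $w\in W_{e}$ reconnects to $u$ in $G^*$ and ensure that the portion of the charge sent to the $|S^*|\alpha$ budget is injective. I expect this case analysis, rather than the two global bounds above, to be where the real work lies.
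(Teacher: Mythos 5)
Your overall decomposition ($D\le c^*(u)$ plus $q\alpha\le 2c^*(u)$) is sound, and the parts you actually prove are correct: the distance-cost bound via Lemma~\ref{lemma_augmentation} and the subgraph relation $E(G^*)\subseteq E(G^{**})$ is fine, as is the counting of bridges and of kept non-bridge edges against $|S^*|$, and the disjointness of the sets $W_e$ together with $\Delta_e\ge\alpha$ from greedy deletion. However, the step you yourself flag as ``where the real work lies'' is not a detail to be filled in later -- it \emph{is} the theorem. You need $|R_{\mathrm{nb}}|\alpha\le c^*(u)$, and your route to it, $\sum_{e\in R_{\mathrm{nb}}}\Delta_e\le c^*(u)$ with the two-case charging, does not go through as sketched. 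The problematic case is exactly case (b): for a vertex $w\in W_e$ whose best-response route starts with a freshly bought edge $\{u,x\}$, the greedy-augmentation property only bounds the \emph{gain of adding} $\{u,x\}$ to $G$ (a graph that still contains $e$), and this gain can be a small constant even when the terms $d_{G-e}(u,w)-d_G(u,w)$ you want to charge to $x$ are enormous -- think of a large cluster of vertices sitting behind $e$ at distance $2$, with only a very long detour avoiding $e$, which in $S^*$ is served through a newly bought edge into the cluster: adding that edge to $G$ gains $O(1)$, so an ``injective charge of at most $\alpha$ per bought edge'' cannot absorb the damage. To handle such configurations one must invoke the greedy \emph{swap} moves (drop $e$, buy $x$), pair dropped edges with bought edges carefully (there may be more dropped than bought edges), and reroute the remaining affected vertices via the triangle inequality, which is precisely the delicate facility-cost half ($F_s\le F_o+2C_o$) of Arya et al.'s local-search analysis; note also that the correct bound necessarily charges optimal \emph{distance} terms (with a factor $2$), not only the $|S^*|\alpha$ budget, so the charging scheme as you describe it is structurally too weak.

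For comparison, the paper avoids this entirely: it gives a cost-preserving reduction from agent $u$'s strategy space to an instance of uncapacitated metric facility location (vertices owning edges to $u$ become cost-$0$ facilities, all other vertices cost-$\alpha$ facilities, distances $d_{G'}(i,j)+1$), shows that greedy buy/delete/swap moves of $u$ correspond exactly to open/close/swap moves on the facility set, and then quotes the locality gap of $3$ for this local search (Arya et al.). So the hard charging argument is outsourced to a known result. Your approach could in principle be completed as a self-contained proof, but doing so amounts to reproving that locality-gap bound in the network-creation language; as submitted, the proposal establishes the easy half and leaves the essential half unproven, with a sketch that would fail on the configurations described above.
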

\begin{proof}
 We prove Theorem~\ref{thm_approx_upperbound} by providing a ``locality gap preserving'' reduction to the \textsc{Un\-cap\-aci\-ta\-ted Metric Facility Location} problem (UMFL)~\cite{Vaz01}.\\
 Let $u$ be an agent in $(G,\alpha)$ and let $Z$ be the set of vertices in $V(G)$ which own an edge towards $u$. Consider the network $(G',\alpha)$, where all edges owned by agent~$u$ are removed. Observe, that the set $Z$ is the same in $(G,\alpha)$ and $(G',\alpha)$.
Let $\mathcal{S} = \{U \mid U \subseteq (V(G')\setminus\{u\}) \wedge U \cap Z = \emptyset\}$ denote the set of agent~$u$'s pure strategies in $(G',\alpha)$ which do not induce multi-edges or a self-loop.
 We transform $(G',\alpha)$ into an instance $I(G')$ for UMFL as follows: \\
Let $V(G')\setminus\{u\} = F = C$, where $F$ is the set of facilities and $C$ is the set of clients. For all facilities $f \in Z\cap F$ we define the opening cost to be $0$, all other facilities have opening cost~$\alpha$. Thus, $Z$ is exactly the set of cost $0$ facilities in $I(G')$. For every $i,j \in F \cup C$ we define $d_{ij} = d_{G'}(i,j)+1$. If there is no path between $i$ and $j$ in $G'$, then we define $d_{ij} = \infty$. Clearly, since the distance in $G'$ is metric we have that all distances $d_{ij}$ in $I(G')$ are metric as well. See Fig.~\ref{fig:umfl} for an example.
 \begin{figure}[h]
  \centering
  \includegraphics[width=14cm]{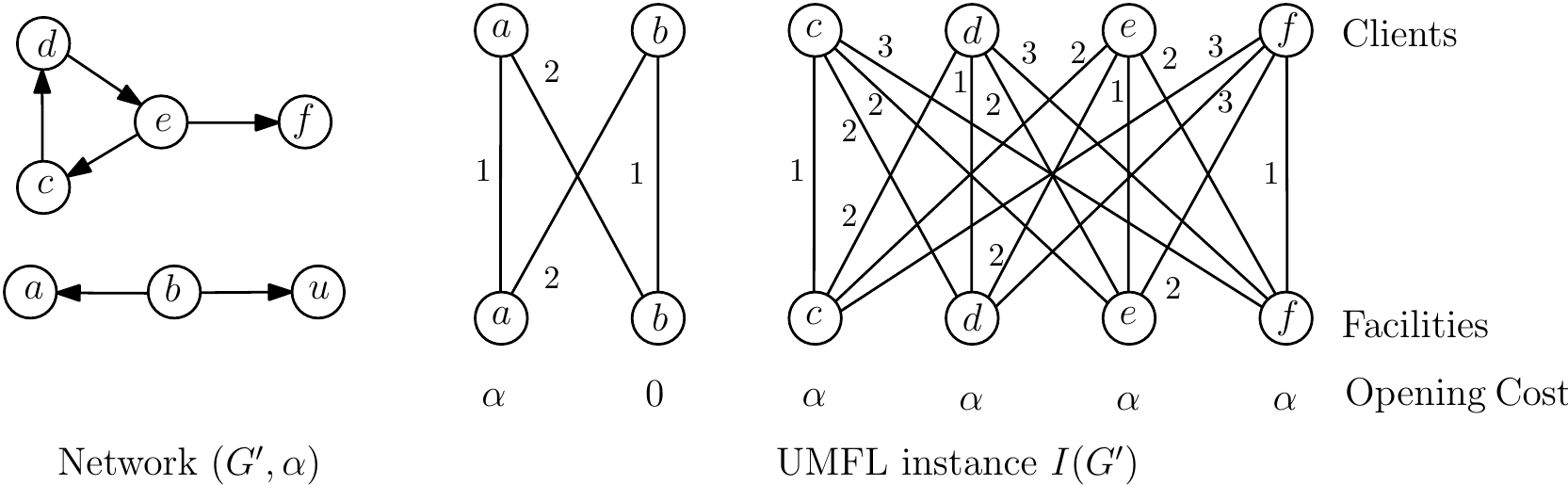}
  \caption{Network $(G',\alpha)$ and its corresponding UMFL instance~$I(G')$. Edges between clients and between facilities are omitted. All other omitted edges have length $\infty$.}
  \label{fig:umfl}
\end{figure}

  \noindent Now, observe that any strategy $S \in \mathcal{S}$ of agent~$u$ in $(G',\alpha)$ corresponds to the solution of the UMFL instance $I(G')$, where exactly the facilities in $F_S = S\cup Z$ are opened and where all clients are assigned to their nearest open facility. Moreover, every solution $F' = X \cup Z$, where $X\subseteq F\setminus Z$, for instance $I(G')$ corresponds to agent~$u$'s strategy $X \in \mathcal{S}$ in $(G',\alpha)$. Let $\mathcal{S}_\text{UMFL} = \{W \subseteq F \mid Z \subseteq W\}$ denote the set of all solutions to instance $I(G')$, which open at least all cost $0$ facilities. Hence, we have a bijection $\pi: \mathcal{S} \to \mathcal{S}_\text{UMFL}$, with $\pi(S) = S \cup Z$ and $\pi^{-1}(X) = X \setminus Z$. Let $\pi(S) = F_S$ and let $(G_S,\alpha)$ denote the network $(G',\alpha)$, where agent~$u$ has bought all edges towards vertices in $S$. Let $cost(F_S)$ denote the cost of the solution $F_S$ to instance~$I(G')$. We have that agent~$u$'s cost in $(G_S,\alpha)$ is equal to 
the cost of the corresponding UMFL solution $F_S$, since 
  \begin{align*}
   c_u(G_S,\alpha) &= \alpha |S|+ \sum_{w\in V(G_S)\setminus\{u\}}\big(1+ \min_{x\in S\cup Z} d_{G'}(x,w)\big) \\ 
            &= \alpha |S| + 0|Z| + \sum_{w \in V(G_S)\setminus\{u\}} \min_{x\in S\cup Z}d_{xw} \\
          &= \alpha |F_S\setminus Z| + 0|Z| + \sum_{w \in C} \min_{x\in F_S} d_{xw} \quad
          = cost(F_S). 
  \end{align*}

  \noindent We claim the following: If agent~$u$ plays strategy $S \in \mathcal{S}$ and cannot decrease her cost by buying, deleting or swapping \emph{one} edge in $(G_S,\alpha)$, then we have that the cost of the corresponding solution $F_S \in \mathcal{S}_{\text{UMFL}}$ to instance $I(G')$ cannot be strictly decreased by opening, closing or swapping \emph{one} facility.\\  
  Proving the above claim suffices to prove Theorem~\ref{thm_approx_upperbound}. This can be seen as follows: For UMFL, Arya et al.~\cite{Arya04} have already shown that the locality gap of UMFL is~$3$, that is, that any UMFL solution in which clients are assigned to their nearest open facility and which cannot be improved by opening, closing or swapping \emph{one} facility is a $3$-approximation of the optimum solution.\\
  By construction of $I(G')$, we have that every facility $z \in Z$ is the unique facility which is nearest to some client $w \in C$. Thus, we have that in any locally optimal and any globally optimal UMFL solution to $I(G')$ all cost $0$ facilities must be open, since otherwise such a solution can be improved by opening a cost $0$ facility. Hence, every locally or globally optimal solution to $I(G')$ has a corresponding strategy of agent~$u$ which yields the same cost. Using the claim and the result by Arya et al.~\cite{Arya04}, it follows that if agent~$u$ cannot decrease her cost by buying, deleting or swapping an edge in $(G_S,\alpha)$ then we have $c_u(G_S,\alpha) \leq 3 c_u(G_{S^*},\alpha)$, where $S^*$ is agent~$u$'s optimal (non-greedy) strategy in $(G',\alpha)$ and $(G_{S^*},\alpha)$ the network induced by $S^*$.\\  
  Now we prove the claim. Let $\pi(S) = F_S$. We have already shown that $c_u(G_S,\alpha) = cost(F_S)$. Furthermore, we have $Z \subseteq F_S$. We prove the contra-positive statement of the claim. Assume that solution $F_S$ can be improved by opening, closing or swapping \emph{one} facility. Let $F_S'$ be this locally improved solution and let $cost(F_S') < cost(F_S)$. Note, that $Z \subseteq F_S'$ must hold. This is true, since by construction of $I(G')$ closing a cost $0$ facility increases the cost of any solution to $I(G')$. Hence, no facility $z\in Z$ can be included in a closing or swapping operation. It follows that the strategy $S' := \pi^{-1}(F_S')$ exists. Observe, that $S = F_S \setminus Z$ and $S' = F_S' \setminus Z$ must differ by one element. Furthermore, by cost-equality, we have that $c_u(G_{S'},\alpha) = cost(F_S') < cost(F_S) = c_u(G_S,\alpha)$. Hence, agent~$u$ can buy, delete or swap one edge in $(G_S,\alpha)$ to decrease her cost.
\end{proof}

\section{The Quality of Max Greedy Equilibria}\label{sec_quality_max}
In this section, we discuss the stability of networks in \textsc{Max}-GE. We will start by showing that operations of buying, deleting and swapping edges each may have a strong non-local flavor. See Fig.~\ref{fig:nonlocal} for an illustration. 
\begin{lemma}\label{lem_non_local}
 For $k \geq 2$ there is a network $(G,\alpha)$, where an agent can decrease her cost by buying/deleting/swapping $k$ edges but not by buying/deleting/swapping $j<k$ edges.
\end{lemma}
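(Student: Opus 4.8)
The plan is to exhibit, for each of the three operations, a single explicit family of networks parameterized by $k$ in which the claimed ``all-or-nothing'' phenomenon occurs in the \textsc{Max} version. The common design principle is to force an agent to incur a large fixed distance penalty unless she simultaneously fixes $k$ independent ``branches'' of the network, each of which can only be repaired by touching its own private edge. Because the cost function is $\max$-based rather than $\sum$-based, fixing all but one branch leaves the eccentricity unchanged, so the distance term does not drop at all until the very last branch is handled; but then the total drop in eccentricity exceeds the cost $k\alpha$ of the $k$ simultaneous edge operations, so the full move is improving while every partial move of $j<k$ edges is not.

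Concretely, for the \emph{buying} case I would take a ``spider'' centered at the moving agent $v$: $v$ is joined to $k$ hub vertices $h_1,\dots,h_k$, and from each $h_i$ hangs a long path (or, to keep $\alpha$ small, a path of length roughly $\ell$). There is one additional far vertex $t$ reachable only through a bottleneck so that $d_G(v,t)$ is large; buying one edge $\{v,h_i\}$-type shortcut into branch $i$ shaves distance inside branch $i$ but, since some other branch (or $t$) still realizes the old eccentricity, $v$'s $\max$-cost is unchanged. Only after all $k$ shortcuts are bought does the eccentricity collapse. I choose the path lengths and $\alpha$ so that one shortcut buys a distance improvement of $0$ (in the $\max$ sense) while $k$ of them buy an improvement strictly greater than $k\alpha$. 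For the \emph{deletion} case I would dualize: start from a network where $v$ already owns $k$ edges that each contribute $\alpha$ to her cost but whose removal, one at a time, each raises her eccentricity by some amount $<\alpha$ only if the others are still present — arrange that removing any proper subset still leaves a short detour, but removing all $k$ of them forces a long detour that is nonetheless cheap relative to $k\alpha$ saved. For the \emph{swapping} case, combine the two: $v$ owns $k$ edges pointing ``the wrong way,'' and only swapping all $k$ of them to their correct targets simultaneously reduces the eccentricity, because each individual swap leaves some other branch realizing the old maximum.

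The key steps, in order, are: (1) describe the network $(G,\alpha)$ precisely for the buying case, including edge ownership, and compute $v$'s eccentricity in $G$; (2) show that after buying any $j<k$ edges the eccentricity of $v$ is still the same value, hence the cost strictly increases by $j\alpha$, so no such move is improving; (3) show that after buying all $k$ edges the eccentricity drops by more than $k\alpha$, so that move is improving — this pins down the required inequality between the path-length parameter and $\alpha$; (4) repeat (1)--(3) for the deletion network and the swap network, reusing the same parameter choices where possible; (5) remark that the same three constructions can be overlaid on a common vertex to give a single network if one wants all three witnessed at once (not strictly required by the statement, which only asks ``there is a network'' per operation, so I would state them separately and keep the figure referenced as Fig.~\ref{fig:nonlocal}).

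The main obstacle is getting the arithmetic of the $\max$-function to line up so that a partial move gives \emph{exactly zero} distance improvement rather than a small positive one: unlike the \textsc{Sum} case, I must ensure that after removing/buying $j<k$ edges some surviving branch or bottleneck vertex still realizes the original eccentricity \emph{exactly}, which constrains the relative lengths of the branches. The cleanest way to guarantee this is to make all $k$ branches identical and symmetric about $v$, plus one ``anchor'' subgraph whose distance from $v$ equals the post-repair eccentricity of a single branch, so that the anchor holds the $\max$ fixed until the last branch is fixed. I expect a secondary check to be needed as well: verifying that the \emph{other} agents in the construction do not accidentally have improving single-edge moves is not required for this lemma (the lemma only asserts the existence of an agent with the stated property, not that the network is in any equilibrium), so I would explicitly note that we make no equilibrium claim here and defer such considerations to the later sections on \textsc{Max}-GE.
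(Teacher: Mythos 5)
Your buying and swapping constructions are in essence the paper's own: for buying, the paper uses the degenerate version of your spider, namely a star on $k+2$ vertices with the moving agent a leaf and $\alpha<\tfrac{1}{k}$ (no separate ``anchor'' is needed, since the hub at distance $1$ already realizes the post-repair eccentricity), and for swapping it uses a spider with $k$ legs of length $3$ whose center owns the edge into each leg. The mechanism you identify -- under the max objective the eccentricity does not move until \emph{every} branch is handled, while each partial move changes only the edge-cost term -- is exactly the right one for these two cases; you would only need to pin down the arithmetic and note explicitly that in a tree spider any single bought edge can shorten distances only inside the branch it lands in, so that \emph{arbitrary} sets of $j<k$ purchases (not just one shortcut per branch) leave some branch realizing the old maximum.

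The deletion case as you describe it, however, does not work. You stipulate that removing one owned edge ``raises her eccentricity by some amount $<\alpha$ only if the others are still present'' and that ``removing any proper subset still leaves a short detour.'' But if deleting a single edge saves $\alpha$ in edge cost and increases the eccentricity by less than $\alpha$, that single deletion is already a strictly improving move with $j=1<k$, which is precisely what the lemma must rule out. The inequalities have to point the other way for partial deletions: every deletion of $j<k$ edges must raise the eccentricity by at least $j\alpha$ (so by at least $\alpha$ already for $j=1$), while deleting all $k$ edges raises it by less than $k\alpha$. Since deleting more edges can only weakly increase distances, the clean way to satisfy both is to make the eccentricity increase identical for every nonempty deletion and squeeze $\alpha$ in between; this is what the paper does with a clique on $k+2$ vertices in which $u$ owns $k$ of her $k+1$ incident edges and $\tfrac{1}{k}<\alpha\le\tfrac{1}{k-1}$: any nonempty deletion raises her eccentricity from $1$ to exactly $2$, so deleting $j<k$ edges saves $j\alpha\le 1$ and is not improving, while deleting all $k$ saves $k\alpha>1$ and is. Your ``short detour for proper subsets, long detour for all $k$'' design would have to be replaced by a construction of this form.
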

\begin{figure}[!h]
  \centering
  \includegraphics[width=12.5cm]{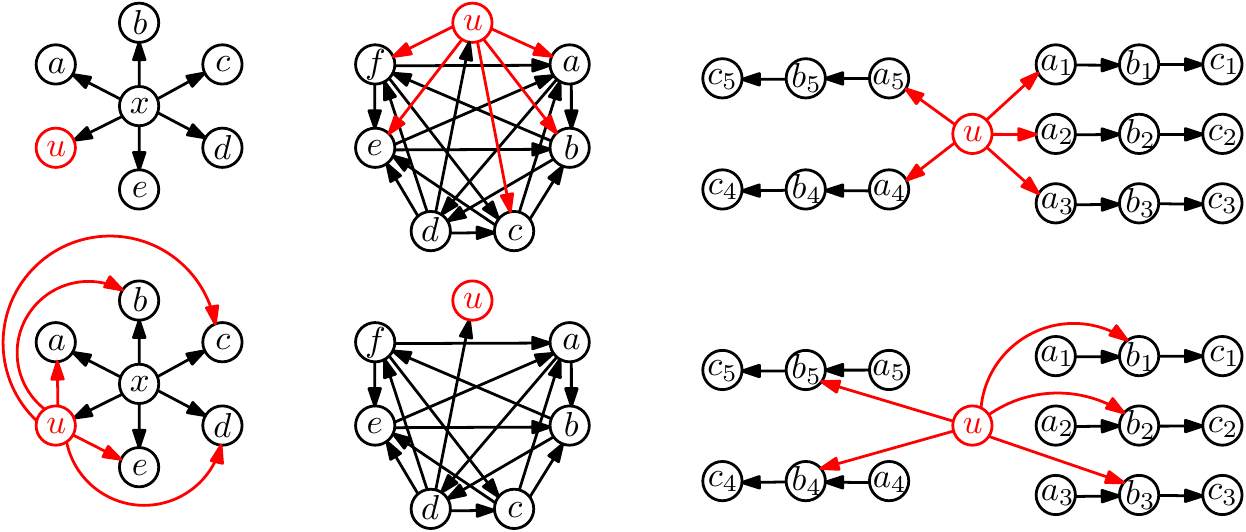}
  \caption{The networks and strategy-changes for $k=5$.}
  \label{fig:nonlocal}
\end{figure}
\begin{proof}
We consider each operation separately:
 \begin{itemize}
  \item Buying $k$ edges versus buying $j<k$ edges: Let $G$ be a star having $n = k+2$ vertices and let agent $u$ be a leaf vertex. See Fig.~\ref{fig:nonlocal}(left). Furthermore, let $\alpha < \frac{1}{n-2}$. Let $x$ be the center of the star. If agent $u$ owns the edge $\{u,x\}$, then we have $c_u(G) = \alpha + 2$, otherwise, we have $c_u(G) = 2$. Now, observe, that $u$ has exactly $k$ vertices at maximum distance $2$. Hence, buying $j<k$ edges does not decrease agent $u$'s maximum distance to any vertex. On the other hand, if $u$ buys $k = n-2$ edges to all distance $2$ vertices, then agent $u$'s distance cost decreases by $1$ while agent $u$'s edge cost increases by $k\alpha < 1$. Thus, buying $k$ edges yields a strict cost decrease for agent $u$.
  \item Deleting $k$ edges versus deleting $j<k$ edges: Let $G$ be a clique having $n = k+2$ vertices and let $u$ be an agent who owns all but one of her $k+1$ incident edges. See Fig.~\ref{fig:nonlocal}(middle). Let $\frac{1}{n-2} < \alpha \leq \frac{1}{n-3}$. Observe, that $c_u(G) = k\alpha +1$. If agent $u$ deletes $j<k$ edges, then $u$'s distance cost increases by $1$ while $u$'s edge cost decreases by $j\alpha \leq 1$. Thus, deleting $j<k$ edges does not decrease agent $u$'s cost. On the other hand, if $u$ deletes $k$ edges, then $u$ distance cost increases by $1$ while $u$'s edge cost decreases by $k\alpha > 1$. Hence, deleting $k$ edges decreases agent $u$'s cost.
  \item Swapping $k$ edges versus swapping $j<k$ edges: Let $G=(V,E)$ be a star-like graph which is defined as follows: Vertex $x$ is the center of the star and we have $k$ triples of vertices $x_i,y_i,z_i$, for $1 \leq i \leq k$. Let $E = \{(x,a_i),(a_i,b_i),(b_i,c_i)\mid 1\leq i \leq k\}$. See Fig.~\ref{fig:nonlocal}(right). Observe, that agent $x$ cannot decrease her cost by swapping $j<k$ edges simultaneously, since each edge must connect to the same subtree of $x$. In contrast to this, agent $x$ can decrease her cost by performing the multi-swap, where every edge $(x,a_i)$ is replaced by the edge $(x,b_i)$. Note, that this multi-swap decreases agent $x$'s distance-cost by $1$ while having the same edge cost.
 \end{itemize}
\end{proof}
\noindent Having seen Lemma~\ref{lem_non_local}, it should not come as a surprise that greedy local optimization may get stuck at sub-optimal states of the game. 

\subsection{Tree Networks in Max Greedy Equilibrium}
The examples on the left and right side of Fig.~\ref{fig:nonlocal} already show that there are tree networks, which are in \textsc{Max}-GE but not in \textsc{Max}-NE. In the following we show that this undesired behavior is restricted only to two families of tree networks in \textsc{Max}-GE. That is, we provide a characterization of all tree networks in \textsc{Max}-GE which are not in \textsc{Max}-NE. Furthermore, we show tight bounds on the stability for both mentioned families which are very close to the optimum. 
We start by introducing the main actors: \emph{Cheap Stars} and \emph{Badly Connected Trees}.
\begin{definition}[Cheap Star]
 A network $(T,\alpha)$ in \textsc{Max}-GE is called a \emph{Cheap Star}, if $T$ is a star having at least $n\geq 4$ vertices and $\alpha < \frac{1}{n-2}$. Furthermore, the ownership of all edges in $T$ is arbitrary.
\end{definition}
\begin{definition}[Badly Connected Tree]
 A tree network $(T,\alpha)$ in \textsc{Max}-GE is a \emph{Badly Connected Tree} if there is an agent $u\in V(T)$ who can decrease her cost by swapping $k>1$ own edges simultaneously. 
\end{definition}
\noindent Intuitively, Cheap Stars owe their instability to a multi-buy operation, whereas Badly Connected Trees owe their instability to a multi-swap operation. Observe that Cheap Stars have diameter $2$ and that Badly Connected Trees have diameter at least $3$. Hence, these families are disjunct. The following theorem shows that Cheap Stars and Badly Connected Trees are the \emph{only} tree networks in \textsc{Max}-GE which are not in \textsc{Max}-NE.
\begin{theorem}\label{thm_tree_char}
 Let $(T,\alpha)$ be a network in \textsc{Max}-GE, where $T$ is a tree. The network $(T,\alpha)$ is in \textsc{Max}-NE if and only if it is not a Cheap Star or a Badly Connected Tree. 
\end{theorem}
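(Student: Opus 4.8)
The reverse implication is immediate from the definitions. If $(T,\alpha)$ is a Badly Connected Tree, its defining property exhibits an agent who strictly lowers her cost by a simultaneous swap of $k>1$ own edges, so $(T,\alpha)$ is not a Nash equilibrium. If $(T,\alpha)$ is a Cheap Star, take a leaf agent $u$: she has a unique vertex at distance $1$ and $n-2\ge 2$ vertices at distance $2$, so buying edges to all of the latter lowers her maximum distance from $2$ to $1$ at an edge cost of $(n-2)\alpha<1$, a strict improvement. In both cases $(T,\alpha)\in$ \textsc{Max}-GE but $(T,\alpha)\notin$ \textsc{Max}-NE, which proves ``$(T,\alpha)\in$ \textsc{Max}-NE $\Rightarrow$ $(T,\alpha)$ is neither a Cheap Star nor a Badly Connected Tree.''

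For the converse I argue by contradiction. Assume $(T,\alpha)$ is a tree in \textsc{Max}-GE that is neither a Cheap Star nor a Badly Connected Tree, but that some agent has a profitable deviation. Among all agents and all their profitable deviations, pick an agent $u$ and a deviation that changes the fewest edges; say it deletes $j$ of $u$'s own edges and buys $k$ new edges, with $j+k$ minimum. Since $T$ is a tree, deleting $j$ edges owned by $u$ splits $T$ into $j+1$ pieces, one of which, $C_0$, contains $u$; connecting the remaining pieces forces $u$ to buy at least one edge into each, so $k\ge j$. Consequently the deviation does not decrease $u$'s number of edges, so it must strictly decrease $u$'s eccentricity. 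I also use that $(T,\alpha)\in$ \textsc{Max}-GE in the form: no single buy, single deletion, or single swap strictly helps any agent; in particular, a single swap of $u$'s edge into a subtree $C$ towards a more central vertex of $C$ is never strictly profitable, which forces the endpoint of that edge to be as central in $C$ as $u$'s remaining directions allow.

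The heart of the proof is the case distinction on $j$ versus $k$. If $j=k\ge 1$, the connectivity bound shows that $u$ buys exactly one edge into each deleted piece and none into $C_0$, so the deviation is a simultaneous swap of $j$ own edges; for $j=1$ this contradicts \textsc{Max}-GE directly, and for $j\ge 2$ the deviation witnesses that $(T,\alpha)$ is a Badly Connected Tree, a contradiction. It remains to exclude $j<k$ (which includes the pure multi-buy case $j=0$, $k\ge 2$), and this is where the real work lies. Using minimality, no bought edge can be dropped while keeping the deviation profitable, so every bought edge beyond the connectivity quota is essential for keeping $u$'s eccentricity below its old value; one shows that this essential structure cannot be spread over two or more pieces without contradicting minimality, so all surplus buys enter a single subtree $C$. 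Combining the centrality constraint from \textsc{Max}-GE, the fact that no single buy by any agent lowers its eccentricity, and the way a family of new edges out of one vertex collapses its distances in a tree, one is forced to $\mathrm{ecc}_T(u)=2$ with $T$ the star centered at $u$'s unique neighbor; the failure of every single buy then forces at least two vertices at distance $2$, i.e.\ $n\ge 4$, and the profitability of buying all of them forces $(n-2)\alpha<1$. Hence $(T,\alpha)$ is a Cheap Star, contradicting our assumption and completing the proof.

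The main obstacle is the case $j<k$. Unlike the \textsc{Sum} version, where Lemma~\ref{lemma_augmentation} lets one peel edges off a profitable augmentation one by one, in the \textsc{Max} version a multi-buy can strictly beat every single buy (this is precisely the Cheap-Star phenomenon), so no such peeling is available. The argument must instead combine the minimality of the chosen deviation with the rigid tree/centrality structure that \textsc{Max}-GE imposes on \emph{all} agents --- tracking carefully how the new edges out of $u$ reshape $u$'s whole distance vector, and using that any ``partial'' eccentricity improvement would already be attainable by a single greedy move --- in order to pin the configuration down to a star and to extract the quantitative threshold $\alpha<\tfrac{1}{n-2}$ on the edge price.
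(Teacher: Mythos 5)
Your easy direction and your case $j=k$ are fine and agree with the paper: a multi-swap deviation with $j=1$ contradicts \textsc{Max}-GE, and with $j\ge 2$ it is by definition a Badly Connected Tree. But the case $j<k$, which you yourself flag as ``where the real work lies,'' is not proved --- it is a chain of ``one shows'' and ``one is forced to'' assertions. Two of those assertions carry the whole weight and neither is substantiated. First, the claim that minimality forces all surplus purchases into a single subtree is doubtful as stated: in the \textsc{Max} version a profitable eccentricity-reducing deviation must shorten the distance to \emph{every} currently farthest vertex, and these can lie in several subtrees of $u$, so a minimal deviation may genuinely buy into several pieces; minimality alone does not collapse this. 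Second, the conclusion that the configuration must be a star with $\mathrm{ecc}_T(u)=2$ and $\alpha<\tfrac{1}{n-2}$ is exactly the statement to be proved in that case, and no argument is given for it.

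The paper closes this case by a different mechanism that your sketch lacks: a transfer of the contradiction to \emph{another} agent. It first splits on the diameter of $T$. Diameter at most $2$ is handled directly (Lemma~\ref{lem_tree_diam2}): there the only possible non-greedy improvement is a leaf buying edges to all $n-2$ non-neighbors, which is profitable exactly when $\alpha<\tfrac{1}{n-2}$, i.e.\ a Cheap Star. For diameter at least $3$ it shows $\alpha\ge 1$ (Lemma~\ref{lem_tree_diam3_alpha}, via a greedy purchase by a leaf far from a $1$-center), deduces from Lemma~\ref{lem_alpha_dist} that any profitable deviation with $k$ extra edges forces $dist(u)\ge\lfloor k\alpha\rfloor+3$, and then --- instead of analyzing $u$'s deviation further --- exhibits a vertex $p$ at maximum distance from $u$ that can strictly improve by buying a \emph{single} edge (to $u$, or to $u$'s neighbor $v$ toward the far subtree, depending on whether $D_u$ spans one or several subtrees), contradicting \textsc{Max}-GE. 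This is the missing idea: the mere existence of a profitable $j<k$ deviation for $u$ in a tree of diameter $\ge 3$ already certifies a greedy improvement for some other agent, so that case simply cannot occur in a \textsc{Max}-GE tree and never needs to be pinned down to a star. As it stands, your proposal has a genuine gap in the $j<k$ case.
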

\noindent The proof of Theorem~\ref{thm_tree_char} is based on the following two observations. 
\begin{lemma}\label{lem_tree_diam2}
 Let $(T,\alpha)$ be a tree network in \textsc{Max}-GE having diameter at most~$2$. If $(T,\alpha)$ is not in \textsc{Max}-NE, then $(T,\alpha)$ is a Cheap Star.
\end{lemma}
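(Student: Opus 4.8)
The plan is to analyze the very restricted structure of a tree with diameter at most $2$. Such a tree is either a single edge ($K_2$, i.e. $n=2$), a path on three vertices ($P_3$, i.e. $n=3$), or a star $K_{1,n-1}$ with $n \geq 4$ vertices. First I would dispose of the small cases $n \leq 3$ directly: for $n=1,2,3$ one checks by hand that any greedy equilibrium is already a Nash equilibrium (for $n=2$ the only move is deleting the single edge, which disconnects; for $n=3$ the center is at distance $1$ from everyone and a leaf at distance $2$, and since $(T,\alpha)$ is in \textsc{Max}-GE a leaf cannot profitably buy the one edge it lacks, so no strategy change — there is only one non-trivial edge to possibly add — helps). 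So from now on assume $T = K_{1,n-1}$ with $n \geq 4$, with center $x$ and leaves; let $\alpha$ be the edge price, and suppose $(T,\alpha)$ is in \textsc{Max}-GE but \emph{not} in \textsc{Max}-NE. The goal is to show $\alpha < \tfrac{1}{n-2}$, which (together with the star structure and $n \geq 4$) is exactly the definition of a Cheap Star.

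The key step is to identify which agent has a profitable deviation and what that deviation must look like. The center $x$ has eccentricity $1$, so $c_x'(T,\alpha) = \alpha\cdot(\text{edges }x\text{ owns}) + 1$; $x$ cannot delete an owned edge (that disconnects a leaf), cannot usefully buy (already adjacent to all), and a swap keeps degree and thus keeps eccentricity $1$, so the center is always playing optimally. Hence the deviating agent is a leaf $u$, with eccentricity $2$ and $c_u'(T,\alpha) = \alpha\cdot[u\text{ owns }\{u,x\}] + 2$. A leaf owning $\{u,x\}$ could consider deleting it — but that disconnects, so that is never profitable; and in any case a leaf not owning its edge has cost $2$ and the one owning it has cost $\alpha+2 \geq 2$, so it suffices to rule out a \emph{profitable} deviation for a leaf, and by the above the relevant quantity to beat is $2$ (a deviation that also drops the owned $\{u,x\}$ edge must rebuild connectivity and is dominated by keeping it). So $u$ wants to reduce its max-distance from $2$ to $1$, i.e. become adjacent to all other $n-1$ vertices. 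Since $u$ already reaches $x$ at distance $1$, it must \emph{buy} edges to the other $n-2$ leaves. Buying $j < n-2$ of these leaves the remaining leaves at distance $2$, so no gain; buying all $n-2$ of them costs $(n-2)\alpha$ and yields distance decrease exactly $1$. Because $(T,\alpha)$ is in \textsc{Max}-GE, $u$ cannot gain by buying a \emph{single} edge, which is automatic here (one new edge never drops the eccentricity below $2$), so greediness gives no constraint by itself — the instability genuinely comes from the multi-buy. For this multi-buy to be profitable we need $(n-2)\alpha < 1$, i.e. $\alpha < \tfrac{1}{n-2}$.

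Finally I would assemble these pieces: a not-in-\textsc{Max}-NE tree of diameter $\leq 2$ with $n \geq 4$ is a star in which some leaf profits from the all-leaves multi-buy, forcing $\alpha < \tfrac{1}{n-2}$; together with ``$T$ is a star, $n \geq 4$, arbitrary ownership'' this is precisely a Cheap Star. I expect the main obstacle to be not the arithmetic but the careful bookkeeping that no \emph{other} kind of deviation (in particular, one that deletes the owned edge $\{u,x\}$ and rebuilds, or a deviation by a leaf that owns its edge versus one that does not, or a combined swap-plus-buy) can be profitable in a way that escapes the Cheap Star characterization — i.e. showing that every profitable deviation, when one exists, is the clean all-leaves multi-buy described above and hence forces the cheap-price condition. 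This reduces to observing that in a star the only way a leaf's max-distance can fall below $2$ is to become universal, and the cheapest way to do that from the current position keeps $\{u,x\}$ and buys the remaining $n-2$ edges, so any profitable deviation costs at least $(n-2)\alpha$ in new edges for a distance gain of $1$, yielding $\alpha < \tfrac{1}{n-2}$.
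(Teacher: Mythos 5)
Your proposal is correct and follows essentially the same route as the paper: rule out deletions (they disconnect) and swaps/center deviations, reduce the instability to a multi-buy by a leaf, observe that only buying all $n-2$ missing edges can lower the leaf's eccentricity (so $n\geq 4$ is forced, since for $n\leq 3$ this is a greedy move already excluded by \textsc{Max}-GE), and conclude profitability forces $\alpha<\tfrac{1}{n-2}$, i.e.\ a Cheap Star. No gaps; the extra bookkeeping you flag (deviations dropping the edge $\{u,x\}$) is handled correctly and matches the paper's implicit treatment.
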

\begin{lemma}\label{lem_tree_diam3}
 Let $(T,\alpha)$ be a tree network in \textsc{Max}-GE having diameter at least $3$. If $(T,\alpha)$ is not in \textsc{Max}-NE, then $(T,\alpha)$ is a Badly Connected Tree.
\end{lemma}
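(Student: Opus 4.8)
The plan is to establish the statement directly: assume $(T,\alpha)$ is in \textsc{Max}-GE, that $T$ is a tree of diameter at least $3$, and that $(T,\alpha)$ is \emph{not} in \textsc{Max}-NE, and then exhibit an agent who can strictly decrease her cost by a simultaneous swap of at least two own edges, which is exactly the defining property of a \emph{Badly Connected Tree}. Fix an agent $u$ with a strictly improving deviation and let $S^*$ be a best strategy for $u$ which, among all best strategies, changes the fewest neighbors of $u$. Two observations set the stage. First, since $(T,\alpha)$ is in \textsc{Max}-GE, $S^*$ cannot be reachable from $u$'s current strategy by a single edge purchase, deletion, or swap; in particular $u$'s current neighborhood and the one induced by $S^*$ differ in at least two vertices. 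Second, because $T$ is a tree, deleting a set of $j$ of $u$'s owned edges splits off exactly $j$ subtrees, each of which $u$ must reconnect by buying at least one incident edge; hence if $S^*$ deletes $j$ owned edges it must buy $k \ge j$ edges, and if $k = j$ then $S^*$ buys exactly one new edge into each split-off subtree. (A pure deletion is impossible, as it would disconnect $T$.)

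The proof then splits on $j$ versus $k$. If $k = j$, then $S^*$ is literally a simultaneous swap of $j$ of $u$'s owned edges; by the first observation at least two of these swaps are non-trivial, so $j \ge 2$, and since the number of owned edges is unchanged the cost improvement forces a strict decrease of $u$'s eccentricity. Thus $u$ witnesses an improving multi-swap and $(T,\alpha)$ is a Badly Connected Tree, as desired.

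The remaining case, $k > j$, is the crux, and the goal there is to reach a contradiction with \textsc{Max}-GE, thereby ruling the case out entirely. By the minimality of $S^*$, every edge $u$ buys is indispensable: deleting it from $S^*$ either disconnects the graph (a connectivity-critical edge, at most one per split-off subtree) or strictly increases $u$'s eccentricity (an eccentricity-critical edge). Since $k > j$, a counting argument forces at least one component of $T$ minus $u$'s deleted edges to receive at least two eccentricity-critical edges --- intuitively, $u$ is ``centralizing'' herself inside that component. The task is to convert this into a single improving move for some \emph{other} agent. The obstruction, and the reason the \textsc{Max} analysis is genuinely harder than the \textsc{Sum} one, is that eccentricity is not subadditive under edge insertions: by Lemma~\ref{lem_non_local} a single edge can be worthless while a bundle of them strictly helps, so one cannot simply replace $u$'s bundle by a cheaper single purchase for $u$ herself. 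Instead I plan to locate the desired agent at an eccentric position relative to $u$ --- either a vertex realizing $u$'s large eccentricity, or the neighbor of an attachment vertex inside the centralized component, playing the role that agent $z$ plays in the proof of Theorem~\ref{thm_tree_eq} --- and show that a single edge bought by that agent toward $u$'s side of the tree lowers her eccentricity by at least one. The bookkeeping is where diameter at least $3$ enters: from $k\alpha < \Delta$, where $\Delta \ge 1$ is the eccentricity gain of $u$'s bundle, together with $k \ge 2$, one gets $\alpha < \Delta/2$, and one must show this forces the corresponding single-edge gain to exceed $\alpha$; on a diameter-$2$ star this fails, because an eccentric leaf gains nothing from one edge, which is precisely why the degenerate case is peeled off in Lemma~\ref{lem_tree_diam2}. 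Carrying this out in all sub-configurations of $k > j$ --- including the mixed ones, where $u$ also buys edges inside the component containing $u$, or combines relocations with a genuine augmentation --- is the main technical obstacle; once it is handled, only $k = j \ge 2$ survives, and the lemma follows.
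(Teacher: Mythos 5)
There is a genuine gap: the case you yourself identify as the crux --- an improving deviation with $k>j$, i.e.\ a net increase in the number of edges bought by $u$ --- is not actually proved. You only sketch a plan (``locate an agent at an eccentric position \dots the main technical obstacle''), and the sketch as stated does not go through. First, the bookkeeping is wrong: the net edge-cost increase of such a deviation is $(k-j)\alpha$, which can be a single $\alpha$ (e.g.\ $u$ deletes one edge and buys two, or buys one extra edge on top of a multi-swap), so from the improvement condition you only get $\alpha < \Delta$, not $\alpha < \Delta/2$; the role of ``diameter at least $3$'' is not this inequality at all, but rather that it forces $\alpha\ge 1$ (Lemma~\ref{lem_tree_diam3_alpha}) and, via Lemma~\ref{lem_alpha_dist}, that the improved eccentricity is still at least $2$, which together yield the key quantitative fact $dist(u)\ge\lfloor k\alpha\rfloor+3$. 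Your proposal never derives or uses these facts, and without them one cannot show that any single-edge purchase by another agent gains more than $\alpha$. Second, the proposed pigeonhole on ``eccentricity-critical edges'' inside one component, mimicking the role of agent $z$ in the \textsc{Sum} proof of Theorem~\ref{thm_tree_eq}, is not obviously salvageable precisely because of the non-additivity of eccentricity that you note: one cannot attribute a per-edge gain exceeding $\alpha$ to individual edges of $u$'s bundle in the \textsc{Max} cost.

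The paper's argument for this case is different and concrete: let $D_u$ be the set of vertices at maximum distance $dist(u)$ from $u$ and distinguish whether $D_u$ is contained in a single subtree $T_v$ hanging off a neighbor $v$ of $u$ or not. If not, a vertex $p\in D_u$ has some $q\in D_u$ in another subtree at distance $2\,dist(u)$, and buying the single edge $\{p,u\}$ lowers $p$'s eccentricity by more than $\alpha$ (using $dist(u)\ge\lfloor k\alpha\rfloor+3$); if $D_u\subset V(T_v)$, then a vertex $p\in D_u$ can buy the single edge $\{p,v\}$ with the same effect. Either way some agent has an improving greedy buy, contradicting \textsc{Max}-GE, so only the multi-swap case survives and the tree is Badly Connected. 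Your reduction of the lemma to this case (and your handling of $k=j$, including $j\ge 2$ and the impossibility of net deletions in a tree) is fine and matches the paper's framing, but the heart of the proof is missing.
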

\noindent We start with proving Lemma~\ref{lem_tree_diam2}.
\begin{proof}[Proof of Lemma~\ref{lem_tree_diam2}]
  Trivially, any tree network having diameter at most $1$ must be in \textsc{Max}-NE. Hence, we focus on diameter $2$ tree networks which are in \textsc{Max}-GE but not in \textsc{Max}-NE. Note, that every tree network in \textsc{Max-GE} which has diameter $2$ is stable against multi-swap operations. This is easy to see, since leaves can own at most one edge and the unique non-leaf vertex (the center of the star) has already optimal distance cost of $1$. Since edge deletions lead to a disconnected network, it follows that the instability against arbitrary strategy-changes must be due to a multi-buy operation of a leaf agent. If $T$ has at most $3$ vertices, then any leaf can buy at most one additional edge, which represents a greedy operation. Hence, $T$ must have at least $4$ vertices. Since $T$ is a star, we have that any leaf $l$ has exactly $n-2$ vertices in distance~$2$. Thus, to strictly decrease her distance cost, agent $l$ must buy all edges towards these $n-2$ non-neighbors. It follows that such a 
multi-buy operation yields a strict cost decrease for agent $l$, if $\alpha < \frac{1}{n-2}$. This matches exactly the definition of a Cheap Star and implies that Cheap Stars are the only possible diameter $2$ tree networks in \textsc{Max}-GE which are not in \textsc{Max}-NE.
\end{proof}
\noindent For proving Lemma~\ref{lem_tree_diam3}, we first need some additional observations.
\begin{lemma}\label{lem_tree_diam3_alpha}
 If $(T,\alpha)$ is a tree network in \textsc{Max}-GE having diameter at least $3$, then $\alpha \geq 1$.
\end{lemma}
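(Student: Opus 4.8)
The plan is to establish the contrapositive: assuming $\alpha < 1$ and that $T$ is a tree with $D := \operatorname{diam}(T) \geq 3$, I will exhibit an agent who can strictly lower her cost by a single greedy buy, so that $(T,\alpha)$ is not in \textsc{Max}-GE. The guiding remark is that, since all distances are integers, with $\alpha < 1$ any agent able to reduce her eccentricity by at least $1$ by purchasing one edge strictly decreases her cost (the new edge costs $\alpha < 1$, the distance gain is $\geq 1$). So it suffices to produce one vertex whose eccentricity drops by adding a single incident edge.

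First I would fix a diametral path $v_0 v_1 \cdots v_D$ of $T$. As the endpoint of a longest path, $v_0$ is a leaf with unique neighbour $v_1$, and $\operatorname{ecc}_T(v_0) = D$. Let $T'$ be the connected component of $T - v_1$ that contains $v_2$; it contains $v_2,\dots,v_D$, so it reaches distance $D-1$ from $v_1$. The key claim is that every vertex $w$ with $d_T(v_0,w) = D$ belongs to $T'$. To prove it I would look at the components of $T - v_1$: besides $\{v_0\}$ and $T'$, any further component $C$ must stay within distance $D-2$ of $v_1$, for otherwise a deepest vertex of $C$ together with $v_D$ would form a simple path of length at least $2(D-1) > D$ (using $D \geq 3$), contradicting $\operatorname{diam}(T)=D$. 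Hence every $w \notin V(T') \cup \{v_0\}$ satisfies $d_T(v_0,w) = 1 + d_T(v_1,w) \leq D-1$, which gives the claim. Moreover, for $w \in V(T')$ the $v_0$--$w$ path runs $v_0, v_1, v_2, \dots, w$, so $d_T(v_0,w) = 2 + d_T(v_2,w)$, and therefore $d_T(v_0,w) = D$ forces $d_T(v_2,w) \leq D-2$.

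Next I would have agent $v_0$ buy the single new edge $\{v_0,v_2\}$ (it is genuinely new since $d_T(v_0,v_2)=2$). Writing $d'$ for distances in the augmented graph, every $w \in V(T')$ now satisfies $d'(v_0,w) \leq 1 + d_T(v_2,w) \leq D-1$, and every other vertex $w$ satisfies $d'(v_0,w) \leq d_T(v_0,w) \leq D-1$; hence $v_0$'s eccentricity falls from $D$ to at most $D-1$. Since $\alpha < 1$, this is a cost-decreasing greedy augmentation for $v_0$, contradicting that $(T,\alpha)$ is in \textsc{Max}-GE. Thus every \textsc{Max}-GE tree of diameter at least $3$ has $\alpha \geq 1$.

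I expect the only non-routine point to be the claim that \emph{every} vertex realising $v_0$'s eccentricity lies in the component through $v_2$: this is exactly where the diametrality of the chosen path and the hypothesis $D \geq 3$ are both needed (for $D = 2$ the lemma genuinely fails, e.g.\ for a Cheap Star), and it is what guarantees that the single buy $\{v_0,v_2\}$ shortens \emph{all} the longest shortest paths from $v_0$, not just some of them.
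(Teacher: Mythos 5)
Your proof is correct, but it takes a genuinely different route from the paper's. The paper argues via $1$-centers: it splits into the case where every leaf neighbours a $1$-center (then $T$ is a double star of diameter exactly $3$, and a leaf can buy an edge to the far center) and the case where some leaf $l$ is at distance at least $2$ from a $1$-center $x$, chosen to maximize this distance (then $x$ lies on all shortest paths from $l$ to its farthest vertices, so buying $\{l,x\}$ reduces $l$'s maximum distance by at least $1$). You instead work with a diametral path $v_0v_1\cdots v_D$ and let the endpoint $v_0$ buy $\{v_0,v_2\}$, proving directly that every vertex realising $v_0$'s eccentricity lies in the component of $T-v_1$ through $v_2$ — this is exactly where $D\geq 3$ enters, mirroring the role the hypothesis plays in the paper — so that the single purchase lowers $v_0$'s eccentricity by at least $1$; with $\alpha<1$ this is an improving greedy augmentation, contradicting \textsc{Max}-GE. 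Your version is more elementary: it avoids the case distinction and the $1$-center machinery, and it spells out explicitly the step the paper states rather tersely (that the bought edge shortcuts \emph{all} longest shortest paths from the moving agent). The paper's formulation, on the other hand, stays within the $1$-center/$1$-median toolkit it reuses elsewhere (for Badly Connected Trees and the recognition algorithm of Theorem~\ref{thm_tree_checking}). One cosmetic remark: in your final estimate you use $d_T(v_2,w)\leq D-2$ for \emph{every} $w$ in the component $T'$, not only for the vertices at distance exactly $D$ from $v_0$; this does follow at once from your identity $d_T(v_0,w)=2+d_T(v_2,w)$ together with $d_T(v_0,w)\leq D$, but it is worth saying so explicitly.
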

\begin{proof}
 Assume towards a contradiction that there is a tree network $(T,\alpha)$, which is in \textsc{Max}-GE and has diameter at least~$3$ and where $\alpha < 1$. There are two cases: 

 If every leaf of $T$ is a neighbor of a $1$-center of $T$, then, since $T$ has diameter at least~$3$, there must be two $1$-center vertices of $T$ and $T$ has diameter exactly $3$. It is easy to see that any tree can have at most two $1$-center vertices. Thus, we have that $T$ must be a ``double-star''. Let $x,y$ be the two $1$-center vertices of $T$ and let $l$ be a leaf which is a neighbor of~$x$. Since $y$ is a $1$-center, there must be a leaf $z$ which is a neighbor of $y$ and where $d_T(l,z) = 3$. If agent $l$ buys the edge $\{l,y\}$, then $l$'s edge cost increases by $\alpha$ but her distance cost decreases by $1$. Since $\alpha < 1$, this yields a strict cost decrease for agent $l$ and we have a contradiction to $(T,\alpha)$ being in \textsc{Max}-GE.

 If not all leaves of $T$ have a neighboring $1$-center vertex, then let $l$ be one such leaf which has the maximum distance to any $1$-center in $T$. Let $x$ be this $1$-center vertex and let $d_T(l,x) = k \geq 2$. Let $D_l$ be the set of vertices which have maximum distance to vertex $l$ in $T$. Since $l$ has maximum distance to $x$ and $x$ is a $1$-center of $T$, it follows that $x$ lies on all shortest paths from $l$ to any vertex in $D_l$. Thus, if agent $l$ buys the edge $\{l,x\}$, she reduces her distance cost by at least $1$ while increasing her edge cost by $\alpha <1$. This yields a strict cost decrease for agent $l$ and again we have a contradiction to $(T,\alpha)$ being in \textsc{Max}-GE.
\end{proof}

\begin{lemma}\label{lem_alpha_dist}
 Let $(G,\alpha)$ be any network in \textsc{Max}-GE which is not in \textsc{Max}-NE. Let $u$ be any player who can strictly decrease her cost by performing a non-greedy strategy-change towards strategy $S_u^*$. If $\alpha \geq 1$, then agent $u$'s distance cost induced by strategy $S_u^*$ is at least $2$.  
\end{lemma}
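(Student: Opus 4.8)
The plan is to argue by contradiction. Suppose $(G,\alpha)$ is in \textsc{Max}-GE but not in \textsc{Max}-NE, that $\alpha\geq 1$, and that the improving strategy $S_u^*$ leaves agent~$u$ with distance cost at most~$1$, where by distance cost I mean agent~$u$'s eccentricity $\mathrm{ecc}_H(u):=\max_{w}d_H(u,w)$ in the relevant network $H$. Since $G$ is connected with at least two vertices (as throughout, so that eccentricities are finite), this forces $d(u,w)=1$ for every $w\neq u$; that is, in the network $(G^*,\alpha)$ induced by $S_u^*$ agent~$u$ is adjacent to \emph{all} other vertices.

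First I would record the structural fact that $(G,\alpha)$ and $(G^*,\alpha)$ agree on every edge not owned by~$u$; in particular the set $Z$ of vertices that own an edge toward~$u$ is the same in both, and agent~$u$'s neighbourhood is $Z\cup S_u$ in $G$ and $Z\cup S_u^*$ in $G^*$. We may assume $S_u^*$ buys no edge to a vertex of $Z$ — such an edge is redundant and can be dropped without changing $u$'s distance cost — so $S_u^*=V(G)\setminus(\{u\}\cup Z)$ and $|S_u^*|=n-1-|Z|$. Next I would use \textsc{Max}-GE to exclude ``double'' edges at~$u$ in $(G,\alpha)$: if $u$ owned an edge to some $v\in Z$, deleting it would be a single, strictly improving move (the edge survives, owned by $v$, and $u$ saves $\alpha$), contradicting greedy stability. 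Hence $S_u\cap Z=\emptyset$, so $S_u\subseteq S_u^*$ and the set of vertices at distance at least~$2$ from~$u$ in $G$ is exactly $S_u^*\setminus S_u$, of size $t:=|S_u^*|-|S_u|\geq 0$. Since $S_u^*$ strictly beats $u$'s current strategy, $\alpha|S_u^*|+1<\alpha|S_u|+\mathrm{ecc}_G(u)$, which rearranges to the key inequality $\alpha t<\mathrm{ecc}_G(u)-1$, call it $(\star)$.

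The heart of the argument is a short counting step. Let $w^*$ be a vertex with $d(u,w^*)=d:=\mathrm{ecc}_G(u)$ and fix a shortest path $u=p_0,p_1,\dots,p_d=w^*$ from $u$ to $w^*$. The $d-1$ vertices $p_2,\dots,p_d$ are pairwise distinct and each lies at distance at least~$2$ from~$u$, so $t\geq d-1$. Combined with $\alpha\geq 1$ and $(\star)$ this gives the chain $d-1\leq t\leq\alpha t<d-1$, a contradiction; the degenerate case $d=1$ is subsumed, since then $(\star)$ reads $\alpha t<0$, impossible for $\alpha,t\geq 0$. Therefore the distance cost induced by $S_u^*$ must be at least~$2$.

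I expect the only real obstacle to be spotting the counting step: the observation that a \emph{single} far vertex already forces at least $\mathrm{ecc}_G(u)-1$ vertices to sit at distance $\geq 2$ is precisely what makes the trade ``buy many edges, collapse the eccentricity to~$1$'' unprofitable once $\alpha\geq1$. Everything else is bookkeeping — verifying that $(G^*,\alpha)$ differs from $(G,\alpha)$ only in the edges owned by~$u$, justifying the two reductions (pruning the $Z$-edges out of $S_u^*$ and ruling out double edges via \textsc{Max}-GE), and keeping the distinction between edges \emph{owned by}~$u$ and edges \emph{owned toward}~$u$ straight throughout.
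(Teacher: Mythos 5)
Your proof is correct and follows essentially the same route as the paper's: assume the improved strategy yields distance cost $1$, observe that the $\mathrm{ecc}_G(u)-1$ vertices beyond the first on a shortest path to a farthest vertex all lie at distance at least $2$ and hence must each be newly purchased, and use $\alpha\geq 1$ to contradict the strict cost decrease. Your additional bookkeeping (ruling out double edges at $u$ via a greedy deletion and pruning redundant edges to $Z$) only makes explicit steps the paper's proof leaves implicit.
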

\begin{proof}
 Let $S_u$ be agent $u$'s current strategy in $(G,\alpha)$ and let $cost(u) = edge(u) + dist(u)$ denote agent $u$'s cost, where $edge(u)$ and $dist(u)$ denote $u$'s edge cost and distance cost in $(G,\alpha)$, respectively. Clearly, we have $dist(u)>1$, since otherwise $u$ can improve on her current strategy only by deleting edges, which yields an increase in distance cost by at least~$1$. We assume towards a contradiction that agent $u$ can perform a strategy-change towards strategy $S_u^*$, which yields a strict cost decrease and where $dist^*(u) =  1$. Here $dist^*(u)$ is agent $u$'s distance cost induced by strategy~$S_u^*$ and $cost^*(u) = edge^*(u)+dist^*(u)$ denotes $u$'s new cost. It is easy to see that $|S_u^*| > |S_u|$ must hold, since agent $u$ cannot possibly bring her distance cost down to $1$ by performing a multi-swap. Consider agent $u$'s shortest path in $(G,\alpha)$ towards a vertex having maximum distance to $u$. Clearly, this path has length $dist(u)$. Since $S_u^*$ yields $dist^*(u)=1$, 
it follows that $u$ must buy an edge to all vertices on this path. Hence, agent $u$ must buy at least $dist(u)-1$ many additional edges to achieve distance cost of $1$. But, since $\alpha \geq 1$, this yields that $edge^*(u)\geq edge(u)+(dist(u)-1)\alpha \geq edge(u)+dist(u)-1$. Hence, we have that $cost^*(u) = edge^*(u) + dist^*(u) \geq edge^*(u)+dist(u)-1 + 1 \geq edge(u)+dist(u) = cost(u)$, which is a contradiction to the fact that $S_u^*$ strictly decreases agent $u$'s cost. 
\end{proof}
\noindent Now we are ready for the proof of Lemma~\ref{lem_tree_diam3}.
\begin{proof}[Proof of Lemma~\ref{lem_tree_diam3}]
  Let $(T,\alpha)$ be any tree network in \textsc{Max}-GE, where $T$ has diameter at least $3$. Note, that by Lemma~\ref{lem_tree_diam3_alpha}, it follows that $\alpha \geq 1$. 

  We claim that if there is an agent $u$ in $V(T)$ with strategy $S_u$ who can strictly decrease her cost by changing to a strategy $S_u^*$, where $|S_u| < |S_u^*|$, then there must be a player $p$ who can strictly decrease her cost by buying \emph{one} edge. This yields a contradiction to $(T,\alpha)$ being in \textsc{Max}-GE. 

  Observe, that in a tree network, no player can change to a strategy which involves buying less edges than before, since such a change would disconnect the network. Proving the above claim suffices to prove the Lemma, since Badly Connected Trees are exactly those tree networks in \textsc{Max}-GE, where one agent $v$ with strategy $S_v$ can strictly decrease her cost by performing a multi-swap, that is, agent $v$ can change to a strategy $S_v^*$, where $|S_v| = |S_v^*|$. 

  Now we prove the claim. Let $u$ be an agent with strategy $S_u$ who can strictly decrease her cost by changing to strategy $S_u^*$, with $|S_u| < |S_u^*|$. Let $x_1,\dots,x_l$ denote the neighbors of $u$ in $T$ and let $T_{x_i}$ denote the maximal subtree rooted at $x_i$ which does not contain $u$, for all $1\leq i \leq l$. 
  Let $k = |S_u^*| - |S_u|$ denote the number of additional edges purchased by agent $u$ with her new strategy. Since we assume that $S_u^*$ yields a strict cost decrease for agent $u$, it follows that $S_u^*$ must decrease agent $u$'s distance cost by \emph{more than} $k\alpha$. Let $D_u$ denote the set of vertices of $T$ which have maximum distance to $u$ and let $dist(u)$ denote this distance. Furthermore, let $dist^*(u)$ denote agent $u$'s maximum distance induced by strategy $S_u^*$. By Lemma~\ref{lem_alpha_dist}, it follows that $2\leq dist^*(u) < dist(u)-k\alpha$, which yields $dist(u)\geq \lfloor k\alpha \rfloor +3$. There are two cases: 
  \begin{enumerate}
   \item $D_u \not\subset V(T_v)$, for any $v\in \{x_1,\dots,x_l\}$. In this case there are two vertices $p,q$, where $p \in V(T_{x_i})$ and $q \in V(T_{x_j})$, for some $i\neq j$, and we have $dist(u) = d_T(u,p) = d_T(u,q)$. Since $T$ is a tree, it follows that $d_T(p,q) = 2 dist(u)$ and that $q \in D_p$, where $D_p$ is the set of vertices of $T$ which have maximum distance to $p$. Observe, that $p$ has distance at most $2dist(u)-2$ to any other vertex in $T_{x_i}$. This implies that vertex $u$ lies on agent $p$'s shortest paths to any vertex of $D_p$. If agent $p$ buys the edge $\{p,u\}$, then agent $p$'s distance to all vertices which are not in $V(T_{x_i})$ decreases by $dist(u)-1 \geq \lfloor k\alpha \rfloor +2 > \alpha$. Furthermore, edge $\{p,u\}$ yields that agent $p$'s distance to any vertex in $V(T_{x_i})$ is at most $1 + dist(u)$. Since $dist(u)>k\alpha +2$ and $k\geq 1$ it follows that $1+dist(u) < 2dist(u)-\alpha$. Thus we have that edge $\{p,u\}$ increases agent $p$'s edge cost by $\alpha$ but 
at the same time it decreases her distance cost by more than $\alpha$, which is a contradiction to $(T,\alpha)$ being in \textsc{Max}-GE. 
   \item $D_u \subset V(T_v)$, for some $v\in \{x_1,\dots,x_l\}$. Consider agent $p \in D_u$, for which $d_T(u,p) = dist(u) \geq \lfloor k\alpha \rfloor + 3$ holds. Since $D_u \subset V(T_v)$ we have that $d_T(u,w) \leq dist(u)-1$, for all $w\in V(T)\setminus V(T_v)$. Hence, on the one hand we have that agent $p$'s maximum distance in $T$ to any vertex in $V(T_v)$ is at most $2dist(u)-2$. On the other hand, agent~$p$'s maximum distance in $T$ to any vertex in $V(T)\setminus V(T_v)$ is at most $2dist(u)-1$. If agent $p$ buys the edge $\{p,v\}$, then we have that her maximum distance to any vertex in $V(T_v)$ decreases by $dist(u)-2 \geq \lfloor k\alpha \rfloor + 1 > \alpha$. For any other vertex $q \in V(T)\setminus V(T_v)$, we have that edge $\{p,v\}$ yields a distance of at most $1+dist(u)$ between $p$ and $q$. Since $dist(u)>k\alpha +2$ and $k\geq 1$ we have $1+dist(u)<2dist(u)-1-\alpha$. Hence, edge $\{p,v\}$ increases agent $p$'s edge cost by $\alpha$ but it decreases agent $p$'s maximum distance by more 
than $\alpha$, which implies that $p$ can greedily buy the edge $\{p,v\}$ and thereby strictly decrease her cost. This is a contradiction to $(T,\alpha)$ being in \textsc{Max}-GE. 
  \end{enumerate}
\end{proof}
\noindent Finally, we can set out for proving Theorem~\ref{thm_tree_char}.
\begin{proof}[Proof of Theorem~\ref{thm_tree_char}]
 If a \textsc{Max}-GE tree network $(T,\alpha)$ is a Cheap Star, then by definition of a Cheap Star, there is a leaf-agent who can strictly decrease her cost by buying edges to all non-neighboring vertices. Clearly, this implies that a Cheap Star cannot be in \textsc{Max}-NE. Furthermore, by definition of a Badly Connected Tree, we have that in every such tree network, there is an agent who can strictly decrease her cost by swapping $k>1$ edges simultaneously, which implies that such networks are not in \textsc{Max}-NE. Hence, it remains to show that Cheap Stars and Badly Connected Trees are the only tree networks which can be in \textsc{Max}-GE and at the same time not in \textsc{Max}-NE.

 On the one hand, by Lemma~\ref{lem_tree_diam2}, we have that for \textsc{Max}-GE tree networks having at most diameter $2$ Cheap Stars are the only tree networks which are not in \textsc{Max}-NE. On the other hand, by Lemma~\ref{lem_tree_diam3}, it follows that among all \textsc{Max}-GE tree networks having diameter at least $3$ only Badly Connected Trees are not in \textsc{Max}-NE. Since this case distinction covers every possible diameter, the Theorem follows.
\end{proof}

\noindent We can use the characterization provided by Theorem~\ref{thm_tree_char} to ``circumvent'' the hardness of deciding whether a tree network is in \textsc{Max}-NE. 
\begin{theorem}\label{thm_tree_checking}
 For every tree network $(T,\alpha)$ it can be checked in $\mathcal{O}(n^4)$ many steps whether $(T,\alpha)$ is in \textsc{Max}-NE.
\end{theorem}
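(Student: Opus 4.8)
The plan is to turn the characterization of Theorem~\ref{thm_tree_char} into an algorithm. Since \textsc{Max}-NE $\subseteq$ \textsc{Max}-GE, a tree network $(T,\alpha)$ is in \textsc{Max}-NE if and only if it is in \textsc{Max}-GE \emph{and} is neither a Cheap Star nor a Badly Connected Tree. So I would run three tests: (i) decide whether $(T,\alpha)$ is in \textsc{Max}-GE; (ii) decide whether $(T,\alpha)$ is a Cheap Star; (iii) decide whether $(T,\alpha)$ is a Badly Connected Tree. The network is in \textsc{Max}-NE exactly when test~(i) succeeds and both (ii) and (iii) fail. If test~(i) fails, $(T,\alpha)$ is not even in \textsc{Max}-GE, hence not in \textsc{Max}-NE.

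For test~(i) I would, for each of the $n$ agents $u$, compute $u$'s best single buy, best single deletion and best single swap with the $\mathcal{O}(n^2(n+m))$ routine from Section~\ref{sec_greedyintro}, and check whether any of the three strictly decreases $u$'s cost. (One could shortcut the deletions, since in a tree every edge is a bridge, so no deletion ever helps; but even without this, for $m=n-1$ the cost per agent is $\mathcal{O}(n^3)$, hence $\mathcal{O}(n^4)$ over all agents.) Test~(ii) is an $\mathcal{O}(n)$ check: verify that $T$ is a star, that $n\ge 4$, and that $\alpha<\tfrac{1}{n-2}$.

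Test~(iii) is where the only real work lies. Fix an agent $u$ with neighbours $x_1,\dots,x_l$ in $T$ and let $T_{x_i}$ be the maximal subtree of $T$ rooted at $x_i$ that does not contain $u$. A multi-swap by $u$ removes the edges $\{u,x_i\}$ for $i$ in some set of owned indices and, since the result must again be a connected graph on $n$ vertices and $n-1$ edges, i.e.\ a tree, it must add exactly one fresh $u$-edge into each of the detached components; moreover these choices are mutually independent across the subtrees. Connecting $u$ to a vertex $v\in V(T_{x_i})$ makes $u$'s largest distance inside $T_{x_i}$ equal to $1$ plus the eccentricity of $v$ in $T_{x_i}$, which is smallest when $v$ is a centre of $T_{x_i}$, giving $1+r_i$ where $r_i$ is the radius of $T_{x_i}$. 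Because a swap leaves $|S_u|$ fixed, $u$'s cost-minimising multi-swap is simply ``swap every owned edge $\{u,x_i\}$ to a centre of $T_{x_i}$''. Writing $e_i$ for the eccentricity of $x_i$ in $T_{x_i}$, agent $u$'s current eccentricity is $\max_i(1+e_i)$, whereas her cheapest multi-swap yields eccentricity
\[
  \max\Bigl(\ \max_{i:\,u\text{ owns }\{u,x_i\}}(1+r_i),\ \ \max_{i:\,u\text{ does not own }\{u,x_i\}}(1+e_i)\ \Bigr).
\]
If this is strictly smaller for some $u$, then $u$ has an improving multi-swap; and since test~(i) has already certified that no single swap helps any agent, this improving move must involve at least two edges, so $(T,\alpha)$ is a Badly Connected Tree. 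Conversely, if $(T,\alpha)$ is a Badly Connected Tree, the swap-all-to-centres strategy of the witnessing agent is no worse than her improving multi-swap, so the inequality above is strict for her. All the $r_i$'s and $e_i$'s are obtained by running a linear-time diameter/radius computation on each of the $2(n-1)$ rooted subtrees of $T$, which totals $\mathcal{O}(n^2)$, and the comparison for all agents then costs $\mathcal{O}(n)$. The whole procedure is thus dominated by test~(i) and runs in $\mathcal{O}(n^4)$.

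\noindent The main obstacle is test~(iii): one must spot the ``one reconnecting edge per detached subtree'' structure of multi-swaps in a tree, deduce from it that swapping every owned edge to a subtree centre is the cost-optimal multi-swap, and thereby reduce the detection of Badly Connected Trees to computing centres of all $\mathcal{O}(n)$ subtrees. Tests~(i) and~(ii), as well as the overall reduction to Theorem~\ref{thm_tree_char}, are routine.
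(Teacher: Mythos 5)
Your proposal is correct and follows essentially the same route as the paper: reduce to the characterization of Theorem~\ref{thm_tree_char}, test \textsc{Max}-GE by brute force over all single buys/swaps per agent in $\mathcal{O}(n^4)$, test the Cheap Star condition in $\mathcal{O}(n)$, and detect Badly Connected Trees via $1$-centers of the subtrees hanging off each agent. Your treatment of the multi-swap test (one reconnecting edge per detached subtree, hence swap-all-to-centres is optimal) is just a more explicit write-up of the argument the paper sketches.
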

\begin{proof}
 We can check whether a tree network $(T,\alpha)$ is in \textsc{Max}-NE as follows: First, we compute whether $(T,\alpha)$ is in \textsc{Max}-GE. If this test fails, then, since \textsc{Max}-GEs are a super class of \textsc{Max}-NEs, we have that $(T,\alpha)$ is not in \textsc{Max}-NE. On the other hand, if $(T,\alpha)$ is in \textsc{Max}-GE, then we have to check whether $(T,\alpha)$ is a Cheap Star or a Badly Connected Tree. If $(T,\alpha)$ is not a Cheap Star and not a Badly Connected Tree, then, by Theorem~\ref{thm_tree_char}, we have that $(T,\alpha)$ must be in \textsc{Max}-NE. Otherwise, $(T,\alpha)$ is not in \textsc{Max}-NE.

 Computing whether $(T,\alpha)$ is in \textsc{Max}-GE can be done in $\mathcal{O}(n^4)$ steps by checking for every agent if she can strictly decrease her cost by either swapping or buying one own edge. We can neglect edge deletions since such an operation disconnects the network.  An agent may own $\Omega(n)$ may edges, which implies that at most $\mathcal{O}(n^2)$ many edge-swaps are possible. Computing the incurred cost of a strategy can be done in linear time by performing a modified breath first search of the tree network. Since there are $\mathcal{O}(n)$ many possible edge purchases per agent, it follows that we can check in $\mathcal{O}(n^3)$ steps, if an agent can decrease her cost by performing a greedy strategy change.  

 Checking if $(T,\alpha)$ is a Cheap Star is possible in $\mathcal{O}(n)$ steps, since we only have to compute the diameter of $T$ and checking if $n$ and $\alpha$ have the right size. Computing whether $(T,\alpha)$ is a Badly Connected Tree is more involved since we have to check if there is an agent who can perform a multi-swap to strictly decrease her cost. This can be done by computing for every agent $u$ the vertices having the maximum distance to $u$, checking if $u$ owns all edges towards the respective subtrees of $T$ and by computing the $1$-centers~\cite{KH79_1} of those subtrees. Finally, by checking if all edges towards subtrees which contain maximum distance vertices do not connect to a $1$-center of that subtree it can be decided whether agent $u$ can perform a multi-swap which decreases her cost. Computing the vertices having maximum distance to $u$ can be done by a breath first search. Furthermore, there are linear time algorithms for computing the $1$-center of an vertex-unweighted tree - 
see for example the work of Kariv and Hakimi~\cite{KH79_1}. Hence, we have that checking if agent $u$ can perform a multi-swap to decrease her cost can be done in $\mathcal{O}(n)$ steps. 

 In total this yields $\mathcal{O}(n^4)$ steps for deciding whether $(T,\alpha)$ is in \textsc{Max}-NE.   
\end{proof}

\noindent We are interested in the stability of tree networks in \textsc{Max}-GE. By Theorem~\ref{thm_tree_char}, we only have to analyze the stability of Cheap Stars and Badly Connected Trees to get bounds on the stability on any tree network in \textsc{Max}-GE.
\begin{lemma}\label{lem_approx_cheap_star}
 Every Cheap Star is in $2$-approximate \textsc{Max}-NE. Furthermore, this bound is tight.
\end{lemma}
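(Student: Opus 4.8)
The plan is to prove the upper bound by a short exhaustive case analysis over the role of the agent in the star, and to prove tightness with a one‑parameter family of Cheap Stars whose worst cost ratio tends to~$2$.

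For the upper bound, let $(T,\alpha)$ be a Cheap Star with center $c$, with $n\geq 4$ vertices and $\alpha<\tfrac{1}{n-2}$, and fix an agent $u$. I would distinguish three cases. If $u=c$, then $u$ already has the smallest possible distance cost, namely $1$, and $u$ cannot drop any owned edge without disconnecting the incident leaf, so $c(u)=c^*(u)$. If $u$ is a leaf that does \emph{not} own its incident edge, then $c(u)=2$, since $u$ is at distance $1$ from $c$ and at distance $2$ from each of the $n-2\geq 2$ remaining leaves and pays nothing; as every strategy of $u$ leaves at least one other vertex, its distance cost is at least $1$, so $c^*(u)\geq 1$ and $c(u)/c^*(u)\leq 2$. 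Finally, if $u$ is a leaf that \emph{owns} its incident edge, then $c(u)=\alpha+2$; the key observation here is that no other agent owns an edge incident to $u$, so in any connected configuration $u$ must buy at least one edge, which together with distance cost at least $1$ gives $c^*(u)\geq\alpha+1$, hence $c(u)/c^*(u)\leq(\alpha+2)/(\alpha+1)\leq 2$. Since these cases are exhaustive, every agent satisfies $c(u)\leq 2\,c^*(u)$, i.e.\ $(T,\alpha)$ is in $2$‑approximate \textsc{Max}-NE.

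For tightness, I would take the star $T_4$ on four vertices with all three edges owned by the center and let $\alpha\to 0^+$; each such $(T_4,\alpha)$ is a Cheap Star, being trivially in \textsc{Max}-GE (no single buy, delete, or swap helps any agent) while not in \textsc{Max}-NE. For a leaf $u$ one has $c(u)=2$, whereas $u$'s optimal strategy buys edges to the two other leaves, achieving distance cost $1$ at edge cost $2\alpha$, so $c^*(u)=2\alpha+1$. Thus the worst ratio in $(T_4,\alpha)$ is at least $\tfrac{2}{2\alpha+1}$, which tends to $2$ as $\alpha\to 0^+$, so no constant strictly smaller than $2$ bounds the approximation ratio of all Cheap Stars.

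The argument is almost entirely routine; the only points needing care are (i) the connectivity constraint in the third case, namely that a leaf owning its edge cannot simply delete it, which is what forces $c^*(u)\geq\alpha+1$ rather than merely $c^*(u)\geq 1$, and (ii) the scaling of $\alpha$ in the tightness construction: one must keep $n$ bounded (or make $\alpha$ vanish fast enough relative to $n$) so that $c^*(u)$ really approaches $1$; letting $n\to\infty$ with $\alpha$ only moderately small below $\tfrac{1}{n-2}$ would leave $c^*(u)$ bounded away from $1$ and yield a ratio below~$2$.
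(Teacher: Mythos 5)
Your proof is correct, and it arrives at the same quantities as the paper by a slightly different route. For the upper bound the paper pins down a leaf's exact best response (buying edges to all non-neighbors, using $\alpha<\tfrac{1}{n-2}<1$ to rule out any optimum with distance cost $2$) and reads off a ratio of the form $\tfrac{2}{(n-2)\alpha+1}$ (with the analogous expression when the leaf owns its edge), which is strictly below $2$; you never identify the optimum at all and instead use the crude lower bounds $c^*(u)\geq 1$ for a non-owning leaf and $c^*(u)\geq\alpha+1$ for an owning leaf, plus the trivial center case. This is a bit more elementary and sidesteps the best-response claim entirely, at the price of not yielding the exact worst-case ratio per instance. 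For tightness the paper lets $\alpha\to 0$ in an \emph{arbitrary} Cheap Star, so the bound is approached for every admissible $n$ and every ownership pattern, while you fix $n=4$ and let $\alpha\to 0^+$, which suffices to show no constant below $2$ works; your closing remark that $\alpha$ must vanish relative to $n$ (otherwise $(n-2)\alpha$ keeps $c^*(u)$ bounded away from $1$) is exactly the role the $\alpha\to 0$ limit plays in the paper. One small point you should make explicit in the $n=4$ example is the verification that $(T_4,\alpha)$ is indeed in \textsc{Max}-GE (a leaf buying a single edge still has another leaf at distance $2$, and the center has nothing to buy, swap or delete), since membership in \textsc{Max}-GE is part of the definition of a Cheap Star; the check is immediate, as you indicate, but it is the only hypothesis of the lemma your construction must supply.
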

\begin{proof}
 We consider any Cheap Star $(T,\alpha)$. Since Cheap Stars have at least $4$ vertices, we have that $V(T)$ consists of $x$, the center of the star, and at least $3$ leaves $v_1,v_2$ and $v_3$. Let the edge ownership be arbitrary. Analogously to the proof of Lemma~\ref{lem_non_local}, we have that no leaf agent of $T$ can buy one edge to decrease her cost. Let $S_{v_1}$ denote agent $v_1$'s strategy in $(T,\alpha)$. Let $S_{v_1}^*$ be $v_1$'s strategy which buys all edges towards all non-neighbors in $T$. We claim that $S_{v_1}^*$ is agent $v_1$'s best strategy. Since every Cheap Star is in \textsc{Max}-GE, we have that agent $v_1$ cannot delete or swap one edge to decrease her cost. Since she owns at most one edge in $(T,\alpha)$, this rules out all deletion and swapping operations. Analogously to the proof of Lemma~\ref{lem_non_local}, buying exactly one edge does not decrease player $x_1$'s cost either. Note, that since $\alpha<\frac{1}{n-2} < 1$, we have that no strategy which yields distance cost of $2$ 
can have strictly less cost than $S_{v_1}$. Hence, the claim follows. 

 Let $cost(v_1)$ and $cost^*(v_1)$ denote agent $v_1$'s cost induced by strategy $S_{v_1}$ and $S_{v_1}^*$, respectively. We have 
 $$ \lim_{\alpha \to 0} \frac{cost(v_1)}{cost^*(v_1)} = \lim_{\alpha \to 0}\frac{2}{(n-2)\alpha + 1} = \lim_{\alpha \to 0}\frac{\alpha + 2}{(n-2)\alpha +1} = 2.$$
 Thus, independently of the ownership of edge $\{v_1,x\}$, we have that the approximation ratio approaches $2$ as $\alpha$ tends to $0$. Clearly, this also represents a tight lower bound of $2$ on this ratio.
\end{proof}
\begin{lemma}\label{lem_approx_badly_connected}
 Every Badly Connected Tree is in $\frac{6}{5}$-approximate \textsc{Max}-NE. Furthermore, this bound is tight.
\end{lemma}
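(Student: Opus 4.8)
The plan is to reduce, for every agent $u$ of a Badly Connected Tree $(T,\alpha)$, the approximation ratio to a one-variable inequality in $M:=dist(u)$, and then to show that inequality never reaches $\tfrac65$. If $u$ has no improving strategy-change we are done, so assume she does. Since $(T,\alpha)$ is a \textsc{Max}-GE tree of diameter at least $3$, no improving move of $u$ can decrease her edge count (that would disconnect $T$) and, by the argument in the proof of Lemma~\ref{lem_tree_diam3}, none can increase it either; moreover a single swap cannot help in a \textsc{Max}-GE. Hence $u$'s best response $S_u^*$ is a multi-swap with $|S_u^*|=|S_u|=:q\ge 2$ and $edge^*(u)=edge(u)=q\alpha$, so $\tfrac{cost(u)}{cost^*(u)}=\tfrac{q\alpha+M}{q\alpha+dist^*(u)}$ with $dist^*(u)<M$; recalling $\alpha\ge 1$ for diameter-$\ge 3$ trees in \textsc{Max}-GE (Lemma~\ref{lem_tree_diam3_alpha}), Lemma~\ref{lem_alpha_dist} gives $dist^*(u)\ge 2$, hence $M\ge 3$.

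Second, I would analyse the structure of $u$'s best response. Removing $u$'s owned edges splits $T$ into the component $T_0^u\ni u$ and subtrees $T_1^u,\dots,T_q^u$, attached at entry vertices $x_1,\dots,x_q$; then $dist(u)=\max\bigl(\mathrm{ecc}_{T_0^u}(u),\max_i(1+\mathrm{ecc}_{T_i^u}(x_i))\bigr)$, and the optimal multi-swap routes each edge to a $1$-center, so $dist^*(u)=\max\bigl(\mathrm{ecc}_{T_0^u}(u),\max_i(1+\mathrm{rad}(T_i^u))\bigr)$. From $dist^*(u)<M$ one gets $\mathrm{ecc}_{T_0^u}(u)<M$, that every subtree realizing $M$ (i.e.\ with $\mathrm{ecc}_{T_i^u}(x_i)=M-1$) has $\mathrm{rad}(T_i^u)\le M-2$, and — since otherwise a single swap would already decrease $dist(u)$, contradicting \textsc{Max}-GE — that there are at least two such ``critical'' subtrees. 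Fix a critical $T_j^u$ and a vertex $w\in T_j^u$ with $d_{T_j^u}(w,x_j)=M-1$; as $M-1\ge 2$, $w$ is not adjacent to $u$, and using a second critical subtree one checks $\mathrm{ecc}_T(w)=2M$ (a vertex farthest from $x_j$ in a tree is a diametral endpoint, so $w$'s reach inside $T_j^u$ is only $\mathrm{diam}(T_j^u)\le 2\,\mathrm{rad}(T_j^u)<2M$).

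The key step is then to look at agent $w$ buying the single edge $\{w,u\}$: afterwards every vertex outside $T_j^u$ is within $1+\mathrm{ecc}_T(u)=M+1$ of $w$, and every vertex of $T_j^u$ within $2+d_{T_j^u}(x_j,\cdot)\le 2+(M-1)=M+1$, so $\mathrm{ecc}_T(w)$ drops from $2M$ to at most $M+1$. As $(T,\alpha)$ is in \textsc{Max}-GE this buy is not improving, which forces $\alpha\ge 2M-(M+1)=M-1$, whence $edge(u)=q\alpha\ge 2(M-1)$. Combining with the radius bound $\mathrm{rad}(T_j^u)\ge\lceil (M-1)/2\rceil$ (from $M-1=\mathrm{ecc}_{T_j^u}(x_j)\le\mathrm{diam}(T_j^u)\le 2\,\mathrm{rad}(T_j^u)$), which yields $dist^*(u)\ge 1+\mathrm{rad}(T_j^u)\ge\lceil (M+1)/2\rceil$, and using that $\tfrac{E+M}{E+dist^*(u)}$ is decreasing in $E$ because $dist^*(u)<M$,
$$\frac{cost(u)}{cost^*(u)}\ \le\ \frac{2(M-1)+M}{2(M-1)+\lceil (M+1)/2\rceil}\ =\ \frac{3M-2}{2M-2+\lceil (M+1)/2\rceil}.$$
A short computation splitting on the parity of $M$ (the right-hand side equals $\tfrac{6k+1}{5k+1}$ for $M=2k+1$ and $\tfrac{6k-2}{5k-1}$ for $M=2k$) shows it is $<\tfrac65$ for every integer $M\ge 3$ and tends to $\tfrac65$; this proves every Badly Connected Tree is in $\tfrac65$-approximate \textsc{Max}-NE.

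Finally, for tightness I would exhibit a family attaining ratio $\to\tfrac65$: let the defective agent $u$ own exactly two edges, each to an endpoint of a path on $M$ vertices, with $T_0^u=\{u\}$ and $\alpha=M-1$, the path edges oriented so that no path vertex has an improving single swap. One verifies this network is in \textsc{Max}-GE — the crucial point being that any shortcut a path vertex could buy saves at most $M-1=\alpha$ — while $u$ can decrease her cost by the multi-swap re-routing both edges to the path centers, so it is a Badly Connected Tree; the ratio for $u$ is exactly $\tfrac{2(M-1)+M}{2(M-1)+\lceil (M+1)/2\rceil}\to\tfrac65$. The main obstacles I anticipate are the structural bookkeeping of steps two and three — pinning down the witness $w$ and establishing $\alpha\ge M-1$ precisely — and, for the lower bound, choosing the edge-ownership in the path gadgets so that the construction is genuinely in \textsc{Max}-GE while still admitting the multi-swap.
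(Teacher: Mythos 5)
Your upper-bound argument is sound and, in fact, makes rigorous a step the paper only handles inside its extremal construction. Like the paper, you reduce to the ratio $\frac{q\alpha+M}{q\alpha+dist^*(u)}$ with $q\ge 2$ and $dist^*(u)\ge 1+\bigl\lceil (M-1)/2\bigr\rceil$, but whereas the paper argues ``$\alpha$ cannot be too small'' only by exhibiting the worst-case tree $(T^*,\alpha)$ and checking greedy purchases there, you prove in general that any Badly Connected Tree must have $\alpha\ge M-1$: your witness $w$ at distance $M-1$ from the entry vertex of a critical subtree has eccentricity $2M$ (using the existence of a second critical subtree, which you correctly justify via the single-swap-to-a-$1$-center argument), and buying $\{w,u\}$ would drop it to at most $M+1$, so \textsc{Max}-GE forces $\alpha\ge M-1$. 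Combined with monotonicity in $q\alpha$ this gives the clean bound $\frac{3M-2}{2M-2+\lceil (M+1)/2\rceil}<\tfrac{6}{5}$. This is a genuinely tighter piece of bookkeeping than the paper's treatment of the upper bound and I see no gap in it.

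The tightness part, however, is broken. For your bare double-path (agent $u$ in the middle of a path with $M$ vertices on each side, $\alpha=M-1$), the claim that ``any shortcut a path vertex could buy saves at most $M-1$'' is false, and the problem has nothing to do with edge-ownership: the far leaf $l_1$ has eccentricity $2M$, and buying a single edge to the neighbor of $u$ on the opposite side already brings its eccentricity down to $M$ (distance $1+(M-1)$ to the other leaf, at most roughly $M/2+2$ to its own side), a saving of $M>\alpha$; buying towards the center of the opposite half saves about $\tfrac{5}{4}M$. So your network is not in \textsc{Max}-GE at $\alpha=M-1$, and if you raise $\alpha$ to the value actually needed for stability the ratio $\frac{2\alpha+M}{2\alpha+\lceil M/2\rceil}$ stays bounded away from $\tfrac{6}{5}$ (around $\tfrac{7}{6}$). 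This is exactly why the paper's construction attaches an extra branch of length roughly $M/2$ at the midpoint of each half-path: these branches give every leaf another vertex at distance about $M-1$ that no single purchase can shortcut, capping the best greedy purchase at a saving of about $M+1$, so that $\alpha\approx M$ suffices for \textsc{Max}-GE while $u$'s multi-swap still halves her distance cost, which is what pushes the ratio to $\tfrac{6}{5}$ in the limit. Without such branches (or an equivalent device) the lower bound of $\tfrac{6}{5}$ is not attained, so the ``tightness'' half of the lemma remains unproven in your proposal.
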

\begin{proof}
 Remember, that Badly Connected Trees are exactly those tree networks in \textsc{Max}-GE, where an agent $u$ can strictly decrease her cost by performing a multi-swap. Clearly, any multi-swap does not change agent $u$'s edge cost. Thus, to maximize the ratio between agent $u$'s cost in the Badly Connected Tree $(T,\alpha)$ and $u$'s cost after the best possible multi-swap, we have to consider a Badly Connected Tree, where agent $u$ can decrease her distance cost as much as possible. By definition of a Badly Connected Tree, we have that agent $u$ has at least two vertices $p$ and $q$ in maximum distance $dist(u)$ and we know that $p$ and $q$ lie in different subtrees of $u$. Observe that, since $T$ is a tree, agent $u$ owns exactly one edge towards each subtree which contains maximum distance vertices. To ensure connectedness of the network, agent $u$ must swap those edges only within their respective subtree. It follows that the best possible multi-swap connects to the middle vertex of the shortest paths to 
all maximum distance vertices. Let $dist^*(u)$ be agent $u$'s distance after her best possible multi-swap. It follows that $dist^*(u) \geq 1 + \left\lceil \frac{dist(u)-1}{2}\right\rceil \geq \left\lceil \frac{dist(u)}{2} \right\rceil$.

 Let $(T,\alpha)$ be a Badly Connected Tree, which contains an agent $u$, with $dist(u) = k$ and where $u$ owns $j\geq 2$ edges. We have
 $$\frac{cost(u)}{cost^*(u)} \leq \frac{j\alpha + k}{j\alpha + \left\lceil\frac{k}{2}\right\rceil}\leq \frac{2\alpha + k}{2\alpha + \left\lceil\frac{k}{2}\right\rceil}.$$
 Note, that this ratio is maximized for a Badly Connected Tree $(T^*,\alpha)$, where an agent $u\in V(T^*)$ can decrease her distance cost from $k$ to $\left\lceil\frac{k}{2}\right\rceil$ and where $\alpha$ is as small as possible. However, we cannot simply choose $\alpha = 1$ since we have to ensure that $(T^*,\alpha)$ remains in \textsc{Max}-GE.

 We explicitly construct $(T^*,\alpha)$, which will serve at the same time as lower and upper bound construction. Clearly, $T^*$ must consist of a path $P$ of length $2k$, where agent $u$ is the middle vertex of this path. Without loss of generality, we can choose an odd $k$. Furthermore, to avoid greedy edge swaps, we assume that all ownership-arcs are directed from $u$ towards the leaves of $P$. That is, for every edge $\{x,y\}$ in $P$ we have that $x$ owns $\{x,y\}$ if $x$ is closer to $u$ than $y$. Thus, on path $P$ we have that $u$ is the only agent who owns two edges. Now, observe, that $(P,\alpha)$ is already stable against greedy deletions and greedy swaps. No agent $x \in V(P)$ can swap any single edge to decrease her cost, since the only owned edge is by construction an edge which does not lie on all shortest paths from $x$ to the vertices having maximum distance to $x$. However, agents may improve their cost by buying one additional edge if $\alpha$ is small enough. To rule out this possibility, 
we consider a leaf agent $l$ and choose $\alpha$ in such a way that $l$ cannot decrease her cost by buying one edge. Note, that leaf agents have the largest distance cost in $P$, which implies that they are exactly those agents which are most susceptible to single edge purchases. Thus, if no leaf agent can decrease her cost by buying one edge, then no other agent can. 

 Let $l_1$ and $l_2$ be the to leaf agents of $P$. Observe, that $l_1$'s best possible additional edge connects to some vertex $z$ which lies on the path between $u$ and $l_2$. Hence, this edge decreases agent $l_1$'s distance cost by at least $k-1$. We choose $\alpha$ such that it will neutralize this decrease in distance cost. It follows that to minimize $\alpha$, we have to ensure that $z$ is as close as possible to $u$. We force $z$ towards $u$ by adding two branches to $P$ as follows: Let $p_1$ and $p_2$ denote the vertices which lie in the middle of the path from $u$ to $l_1$ and $l_2$, respectively. Thus, we have $d_P(u,p_1) = d_P(u,p_2) = \left\lceil \frac{k}{2}\right\rceil$. To finally obtain $T^*$, we connect both vertices $p_1,p_2$ to a path of length $\left\lceil \frac{k}{2}\right\rceil -1$, respectively. Again, the ownership on these paths resembles the edge ownership on $P$, that is, the respective vertex closer to $u$ owns the edge. Let $l_1'$ and $l_2'$ be the leaves of the newly attached 
paths. Note, that these new paths do not change agent $u$'s distance decrease. See Fig.~\ref{fig:badlyconnected} for an illustration. 
\begin{figure}[!h]
  \centering
  \includegraphics[width=11cm]{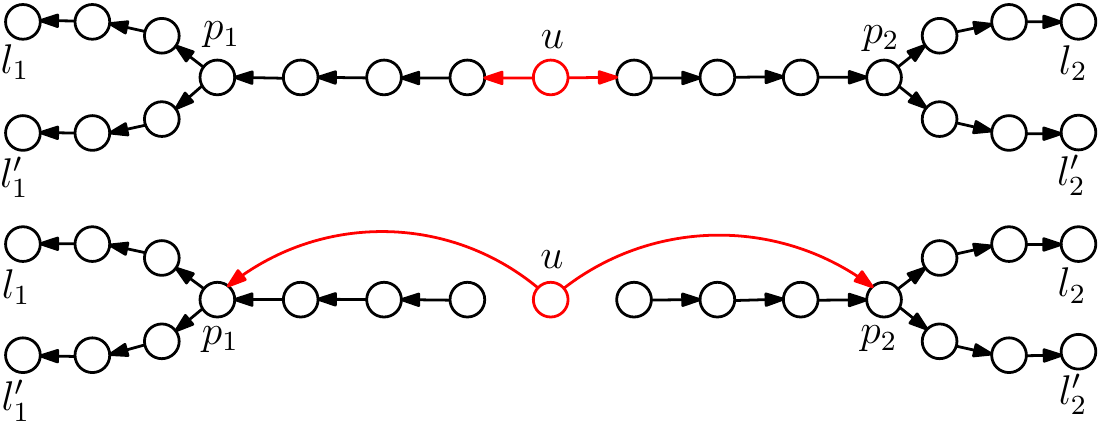}
  \caption{The network $(T^*,\alpha)$ for $k=7$ before and after agent $u$'s best multi-swap.}
  \label{fig:badlyconnected}
\end{figure}

 \noindent In $T^*$, we have that $d_{T^*}(l_1,l_1') = k - 1$. It follows that every possible additional edge of agent $l_1$ only yields a distance decrease of at most $k+1$. Thus, setting $\alpha \geq k+1$, implies that no agent in $T^*$ can decrease her cost by buying one edge. 

 Now we are ready to finally settle the approximation ratio of Badly Connected Trees. For agent $u$ in $(T^*,\alpha)$ we have
 $$ \frac{6}{5} = \lim_{k\to\infty}\frac{2(k-1)+k}{2(k-1)+\left\lceil\frac{k}{2}\right\rceil} \geq \lim_{k\to \infty}\frac{cost(u)}{cost^*(u)} \geq \lim_{k\to\infty}\frac{2(k+1)+k}{2(k+1)+\left\lceil\frac{k}{2}\right\rceil} = \frac{6}{5},$$
 where the limit on the left side represents the upper bound with $\alpha = k-1$ and the limit on the right represents the lower bound with $\alpha = k+1$. Both bounds match if we let $k$ tend to infinity. 
\end{proof}

\noindent Combining Theorem~\ref{thm_tree_char} with Lemma~\ref{lem_approx_cheap_star} and Lemma~\ref{lem_approx_badly_connected} we arrive at the following:
\begin{theorem}\label{thm_tree_approx}
Let $(T,\alpha)$ be a tree network in \textsc{Max}-GE. If $T$ has diameter at most $2$, then $(T,\alpha)$ is in $2$-approximate \textsc{Max}-NE. If $T$ has diameter at least $3$, then $(T,\alpha)$ is in $\frac{6}{5}$-approximate \textsc{Max}-NE. Moreover, both bounds are tight.
\end{theorem}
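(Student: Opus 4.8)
The plan is to obtain Theorem~\ref{thm_tree_approx} as a direct consequence of the structural dichotomy established in Theorem~\ref{thm_tree_char} together with the two quantitative bounds of Lemma~\ref{lem_approx_cheap_star} and Lemma~\ref{lem_approx_badly_connected}; no further combinatorial argument is required. I would fix an arbitrary tree network $(T,\alpha)$ in \textsc{Max}-GE and perform a case analysis on the diameter of $T$, matching the two regimes already treated by Lemmas~\ref{lem_tree_diam2} and~\ref{lem_tree_diam3}.

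First I would handle the case that $T$ has diameter at most $2$. If $(T,\alpha)$ happens to be in \textsc{Max}-NE, we are done, since a \textsc{Max}-NE is trivially in $1$-approximate, hence in $2$-approximate, \textsc{Max}-NE. Otherwise, by the contrapositive of Lemma~\ref{lem_tree_diam2}, the network $(T,\alpha)$ must be a Cheap Star, and Lemma~\ref{lem_approx_cheap_star} then tells us that it is in $2$-approximate \textsc{Max}-NE. The case that $T$ has diameter at least $3$ is handled symmetrically: if $(T,\alpha)$ is not in \textsc{Max}-NE, then Lemma~\ref{lem_tree_diam3} forces it to be a Badly Connected Tree, and Lemma~\ref{lem_approx_badly_connected} gives that such a network is in $\frac{6}{5}$-approximate \textsc{Max}-NE. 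Since every tree has some diameter, these two cases are exhaustive and the upper-bound part of the theorem follows.

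For the tightness claims I would not construct anything new but instead reuse the extremal instances already built in the proofs of the two lemmas, after checking that they land in the correct diameter class. The Cheap Star witnessing tightness in Lemma~\ref{lem_approx_cheap_star} is a star on $n \geq 4$ vertices, which has diameter exactly $2$, and its approximation ratio tends to $2$ as $\alpha \to 0$; this certifies that the bound $2$ cannot be improved for diameter at most $2$. The network $(T^*,\alpha)$ from the proof of Lemma~\ref{lem_approx_badly_connected} is built around a path of length $2k$ with $k$ odd and $k \to \infty$, so its diameter is far larger than $3$, and its ratio tends to $\frac{6}{5}$; this certifies tightness of the bound $\frac{6}{5}$ for diameter at least $3$.

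The only point requiring care --- and the closest thing here to an obstacle --- is the bookkeeping on the diameters of the two extremal families: one must make sure the Cheap Star example genuinely has diameter $2$ (a diameter-$\leq 1$ tree would be a \textsc{Max}-NE outright and would say nothing about tightness), and that the Badly Connected Tree example has diameter at least $3$ rather than collapsing into the star regime. Both are immediate from the respective definitions and constructions, so I do not expect any real difficulty; the theorem is essentially a repackaging of Theorem~\ref{thm_tree_char}, Lemma~\ref{lem_approx_cheap_star}, and Lemma~\ref{lem_approx_badly_connected}.
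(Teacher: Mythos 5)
Your proposal is correct and matches the paper's own argument: the paper derives Theorem~\ref{thm_tree_approx} exactly by combining the characterization of Theorem~\ref{thm_tree_char} (via Lemmas~\ref{lem_tree_diam2} and~\ref{lem_tree_diam3}) with the bounds and tight instances of Lemmas~\ref{lem_approx_cheap_star} and~\ref{lem_approx_badly_connected}. Your extra check that the extremal examples lie in the correct diameter classes is sound and consistent with the constructions in those lemmas.
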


\subsection{Non-Tree Networks in Max Greedy Equilibrium}
Fig.~\ref{fig:nonlocal}~(middle) shows that there are non-tree networks in \textsc{Max}-GE, which are not in \textsc{Max}-NE. We want to quantify the loss in stability of \textsc{Max}-GEs versus \textsc{Max}-NEs. For tree networks we have that Cheap Stars play a crucial role. These networks owe their instability to a multi-buy operation and to the fact that they are in \textsc{Max}-GE for arbitrarily small $\alpha$. We generalize this property of Cheap Stars to non-tree networks.
\begin{definition}[Cheap Network]
 A network $(G,\alpha)$ in \textsc{Max}-GE, is called a \emph{Cheap Network}, if $(G,\alpha)$ remains in \textsc{Max}-GE when $\alpha$ tends to $0$. 
\end{definition}
\noindent Cheap Stars yield a lower bound on the stability approximation ratio which equals their diameter. We can generalize this observation:  
\begin{theorem}\label{thm_diam_approx}
 If there is Cheap Network $(G,\alpha)$ having diameter $d$, then there is an $\alpha^*$ such that the network $(G,\alpha^*)$ is in \textsc{Max}-GE but not in $\beta$-approximate \textsc{Max}-NE for any $\beta < d$.
\end{theorem}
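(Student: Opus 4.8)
The plan is to exploit the defining property of a Cheap Network: it remains in \textsc{Max}-GE as $\alpha \to 0$. First I would fix the diameter-$d$ Cheap Network $(G,\alpha)$ and pick a pair of vertices $a,b$ with $d_G(a,b) = d$, so that $a$ has maximum distance cost exactly $d$ in $G$. Since $(G,\alpha)$ is a Cheap Network, there is some threshold $\alpha_0 > 0$ such that for every $\alpha^* \le \alpha_0$ the network $(G,\alpha^*)$ is still in \textsc{Max}-GE; I would choose $\alpha^*$ small enough (in particular $\alpha^* < \tfrac{1}{|S|}$ for the relevant strategy size below, and $\alpha^* \le \alpha_0$) to make the edge-cost contributions negligible in the limit. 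The idea is that agent $a$'s current cost is $\alpha^*\,\mathrm{deg}_{\text{owned}}(a) + d$, which tends to $d$ as $\alpha^* \to 0$, while $a$ has a strategy (buy edges directly to all vertices, or at least enough to collapse the maximum distance to $1$) whose cost tends to $1$.

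The key steps, in order, are: (1) Lower-bound agent $a$'s cost in $(G,\alpha^*)$ by its distance cost, which is at least $d$ (it is exactly $d$ if we pick $a$ on a diametral pair; if $a$ owns $t$ edges the cost is $\alpha^* t + d$). (2) Exhibit an explicit alternative strategy $S^*$ for agent $a$: buy an edge to every non-neighbor. This makes $a$'s distance cost equal to $1$, so the resulting cost is at most $\alpha^*(n-1) + 1$. Hence $c^*(a) \le \alpha^*(n-1) + 1$, where $c^*(a)$ is $a$'s optimal cost. (3) Form the ratio
\begin{equation*}
 \frac{c(a)}{c^*(a)} \;\ge\; \frac{\alpha^* t + d}{\alpha^*(n-1)+1}.
\end{equation*}
Taking $\alpha^* \to 0$ along admissible values (which is legitimate precisely because $(G,\alpha^*)$ stays in \textsc{Max}-GE for all small $\alpha^*$), the right-hand side tends to $d$. (4) Therefore, for any $\beta < d$, there exists an admissible $\alpha^*$ with $\dfrac{c(a)}{c^*(a)} > \beta$, so $(G,\alpha^*)$ is in \textsc{Max}-GE but not in $\beta$-approximate \textsc{Max}-NE, which is exactly the claim.

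I would present this cleanly by noting that the argument is the natural generalization of the Cheap Star lower bound already computed in Lemma~\ref{lem_approx_cheap_star}, where the ratio $\tfrac{\alpha+2}{(n-2)\alpha+1} \to 2$ as $\alpha \to 0$; here the diameter $2$ is simply replaced by $d$. The only subtlety — and the main thing to get right — is verifying that the alternative strategy $S^*$ actually achieves distance cost $1$ and that, for $\alpha^*$ small, no competing strategy of agent $a$ could have cost bounded away from $1$ (so that $c^*(a) \to 1$ rather than something larger); but since buying edges to all $n-1$ other vertices trivially yields distance cost $1$ and costs $(n-1)\alpha^* + 1 \to 1$, the bound $c^*(a) \le (n-1)\alpha^* + 1$ is immediate and suffices — we do not even need to identify the true optimum. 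The remaining routine point is that ``$(G,\alpha)$ remains in \textsc{Max}-GE when $\alpha$ tends to $0$'' must be read as: there is an interval $(0,\alpha_0]$ of parameters for which $G$ is in \textsc{Max}-GE, which is exactly what lets us drive $\alpha^*$ to zero while staying inside the solution concept.
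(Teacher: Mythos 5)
Your proposal is correct and follows essentially the same route as the paper: take a vertex of eccentricity $d$, compare its current cost $j\alpha^*+d$ against the trivial upper bound $(n-1)\alpha^*+1$ obtained by buying edges to (essentially) all other vertices, and let $\alpha^*\to 0$ within the range where the Cheap Network property keeps $(G,\alpha^*)$ in \textsc{Max}-GE. The only nitpick is that ``buy an edge to every non-neighbor'' should be read as keeping connectivity in mind (e.g.\ buy edges to all other vertices, or to all vertices not owning an edge to the agent, as the paper does), which your fallback formulation already covers.
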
 
\begin{proof}
 Consider a Cheap Network $(G,\alpha)$, where $G$ has diameter $d$ and let $u$ be any vertex of $G$ having eccentricity $d$. Let $j$ denote the number of edges, which are owned by agent $u$ in $(G,\alpha)$. Thus, we have that agent $u$ has cost $j\alpha + d$. Now we consider the strategy change of agent $u$ towards the strategy which buys an edge to all vertices of $G$ which do not own an edge to $u$. Clearly, after the strategy change agent $u$ incurs cost at most $(n-1)\alpha + 1$. 

 Now, observe that since $(G,\alpha)$ is a Cheap Network, we have that $(G,\alpha)$ remains in \textsc{Max}-GE when $\alpha$ tends to $0$. Hence, $\lim_{\alpha\to 0}\frac{j\alpha + d}{(n-1)\alpha + 1} = d$, which implies that for all $\beta < d$ there is an $\alpha^*$ such that $j\alpha^* + d > \beta (n-1)\alpha^* + 1$. Hence, $(G,\alpha^*)$ is in \textsc{Max}-GE but not in $\beta$-approximate \textsc{Max}-NE for any $\beta < d$. 
\end{proof}
\begin{lemma}\label{lem_cheap_diam4}
 There is a Cheap Network having diameter $4$.
\end{lemma}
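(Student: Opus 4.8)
The plan is to exhibit a single explicit non-tree graph $G$ of diameter $4$, equip it with an (essentially arbitrary) edge ownership, and show that $(G,\alpha)$ lies in \textsc{Max}-GE for all sufficiently small $\alpha>0$; since an agent's cost is $\alpha|S_v|$ plus her eccentricity $\operatorname{ecc}_G(v)$ and eccentricities are integers, it suffices to verify three $\alpha$-independent structural facts: (a)~inserting any single new edge $\{v,w\}$ does not decrease $\operatorname{ecc}(v)$, so buying one edge strictly raises the cost; (b)~deleting any single edge $\{v,w\}$ strictly increases $\operatorname{ecc}(v)$, so deleting is never profitable no matter who owns the edge; (c)~swapping any single owned edge does not decrease the owner's eccentricity. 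Once (a)--(c) hold, $(G,\alpha)$ is in \textsc{Max}-GE for all small $\alpha$ and hence remains in \textsc{Max}-GE as $\alpha\to 0$, i.e.\ it is a Cheap Network, and it has diameter $4$.

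For $G$ I would take the incidence (Levi) graph of a finite generalized quadrangle, e.g.\ the Tutte--Coxeter graph, the incidence graph of the unique $GQ(2,2)$ (the ``doily''): it is $3$-regular on $30$ vertices, has girth $8$, diameter exactly $4$, and is distance-transitive, so the ownership can be chosen arbitrarily and (a) needs checking at only one vertex, (b) and (c) at only one edge. Write the vertices as points $P$ and lines $L$ with $|P|=|L|=15$; a point is adjacent to the $3$ lines through it, a line to its $3$ points, and from every vertex exactly $8$ vertices lie at distance $4$ (the distance distribution is $1,3,6,12,8$).

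The heart of the proof is (a). If $v$ is a point, the set $D_v$ of vertices at distance $4$ from $v$ is the set of $8$ points not collinear with $v$, and I claim no vertex $w$ has $d(w,x)\le 2$ for all $x\in D_v$: a line $w$ has only its $3$ points at distance $\le 2$, while a point $w$ has only itself together with its $6$ collinear points at distance $\le 2$ --- in either case fewer than $8$ points. Hence for every $w$ some $x\in D_v$ has $d(w,x)\ge 3$, so after inserting $\{v,w\}$ that $x$ is still at distance $\min(4,1+d(w,x))=4$ from $v$ and $\operatorname{ecc}(v)$ is unchanged; by self-duality (or distance-transitivity) the same holds for line-vertices. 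This step is exactly what forbids a tree: in any tree of diameter $4$ the $1$-center has eccentricity $2$ (the radius of a diameter-$4$ tree is $2$) and is therefore within distance $2$ of \emph{every} vertex, so an endpoint of a diametral path could cut its eccentricity by buying one edge to the $1$-center --- more generally, no diameter-$4$ graph with a vertex of eccentricity $\le 2$ can be a Cheap Network, which is why a suitably spread-out non-tree is needed.

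For (b), girth $8$ forces the endpoints of any edge $\{a,b\}$ to have no common neighbor and the shortest cycle through $\{a,b\}$ to have length $\ge 8$, so $d_{G-\{a,b\}}(a,b)\ge 7>4=\operatorname{ecc}_G(a)$; thus deleting $\{a,b\}$ raises $\operatorname{ecc}(a)$ (and $\operatorname{ecc}(b)$), and for small $\alpha$ this is not an improvement. For (c), fix one owned edge, say a point $a$ owning its edge to a line $b$, and consider swapping it for $\{a,b'\}$. After the swap $b$ has degree $2$ with neighbors $m_1,m_2$, so every $a$--$b$ path in the swapped graph arrives via some $m_i$; since girth $8$ (no $C_3$, no $C_4$) forces $d_{G-\{a,b\}}(a,m_i)\ge 3$, the distance $d_{\mathrm{swap}}(a,b)=1+\min_i d_{\mathrm{swap}}(a,m_i)$ can drop to $\le 3$ only for a bounded number of choices of $b'$ (those within distance $1$ of $m_1$ or $m_2$ in $G-\{a,b\}$), and for each such $b'$ --- as well as whenever $d_{\mathrm{swap}}(a,b)\ge 4$ holds outright --- a direct check via the generalized-quadrangle axiom exhibits a point, or the line $b$ itself, still at distance $\ge 4$ from $a$ in the swapped graph; hence no swap reduces $\operatorname{ecc}(a)$. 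Combining (a)--(c) gives the lemma. The main obstacle is arranging (a) everywhere: the set of farthest vertices of each vertex must be genuinely un-coverable by a single near vertex, which rules out all trees and all cycles and is precisely where the incidence-graph-of-a-generalized-quadrangle construction earns its keep; the only other nuisance is the small finite case analysis behind (c), which distance-transitivity compresses to a single edge.
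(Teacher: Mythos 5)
Your construction is correct but genuinely different from the paper's. The paper exhibits a bespoke $24$-vertex graph (an $8$-cycle $u_0,\dots,u_7$ with attached vertices $v_j,w_j$ and chords $w_ju_{j+4}$), fixes $\alpha=1$, and verifies greedy stability by a symmetry-reduced case check at the three representative agents $u_0,v_0,w_0$, then observes stability persists as $\alpha\to 0$; you instead take the Tutte--Coxeter graph and replace the case analysis by a uniform covering count: since only $7$ points (resp.\ $3$ points) lie within distance $2$ of any point (resp.\ line) $w$, while every vertex has $8$ opposite points, no single new edge $\{v,w\}$ can bring all of $D_v$ within distance $3$, so eccentricities cannot drop below $4$ by a purchase, and girth $8$ handles deletions. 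This buys a cleaner, structural argument (and a larger supply of symmetry), at the price of invoking a nontrivial object where the paper's example is elementary and also serves the paper as a new diameter-$4$ \textsc{Max}-NE example. The one soft spot in your write-up is step (c): you defer the swap case to an unexecuted ``direct check via the generalized-quadrangle axiom.'' In fact no extra check is needed, because your own counting from (a) closes it: for any swap by $a$ of $\{a,b\}$ for $\{a,b'\}$, pick $x\in D_a$ with $d_G(b',x)\ge 3$ (such $x$ exists by the covering bound, applied to $w=b'$); since deleting $\{a,b\}$ only increases distances and any path using the new edge has length at least $1+d_{G-\{a,b\}}(b',x)\ge 4$, the vertex $x$ stays at distance at least $4$ from $a$ in the swapped graph, so no swap lowers $\mathrm{ecc}(a)$. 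With that sentence inserted, your proof is complete: $(G,\alpha)$ is in \textsc{Max}-GE for all $\alpha\le 1$ (indeed $\alpha\le 3$), hence is a Cheap Network of diameter $4$.
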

\begin{proof}
 We construct the Cheap Network $(\tilde{G},\alpha)$ as follows: 
 The graph $\tilde{G}$ has $24$ vertices $u_0,\dots,u_7,v_0,\dots,v_7,w_0,\dots,w_7$ and the vertices $u_0,\dots,u_7$ form a cycle, where $u_i$ owns the edge towards $u_{i+1}$, for $0\leq i \leq 6$, and $u_7$ owns the edge to $u_0$. Furthermore, for $0\leq j \leq 7$ we have that agent $v_j$ owns an edge to $u_j$ and $w_j$ and agent $w_j$ owns an edge towards $u_k$, where $k = (j + 4) \mod 8$. See Fig.~\ref{fig:diam4} for an illustration.
 \begin{figure}[!h]
  \centering
  \includegraphics[width=8cm]{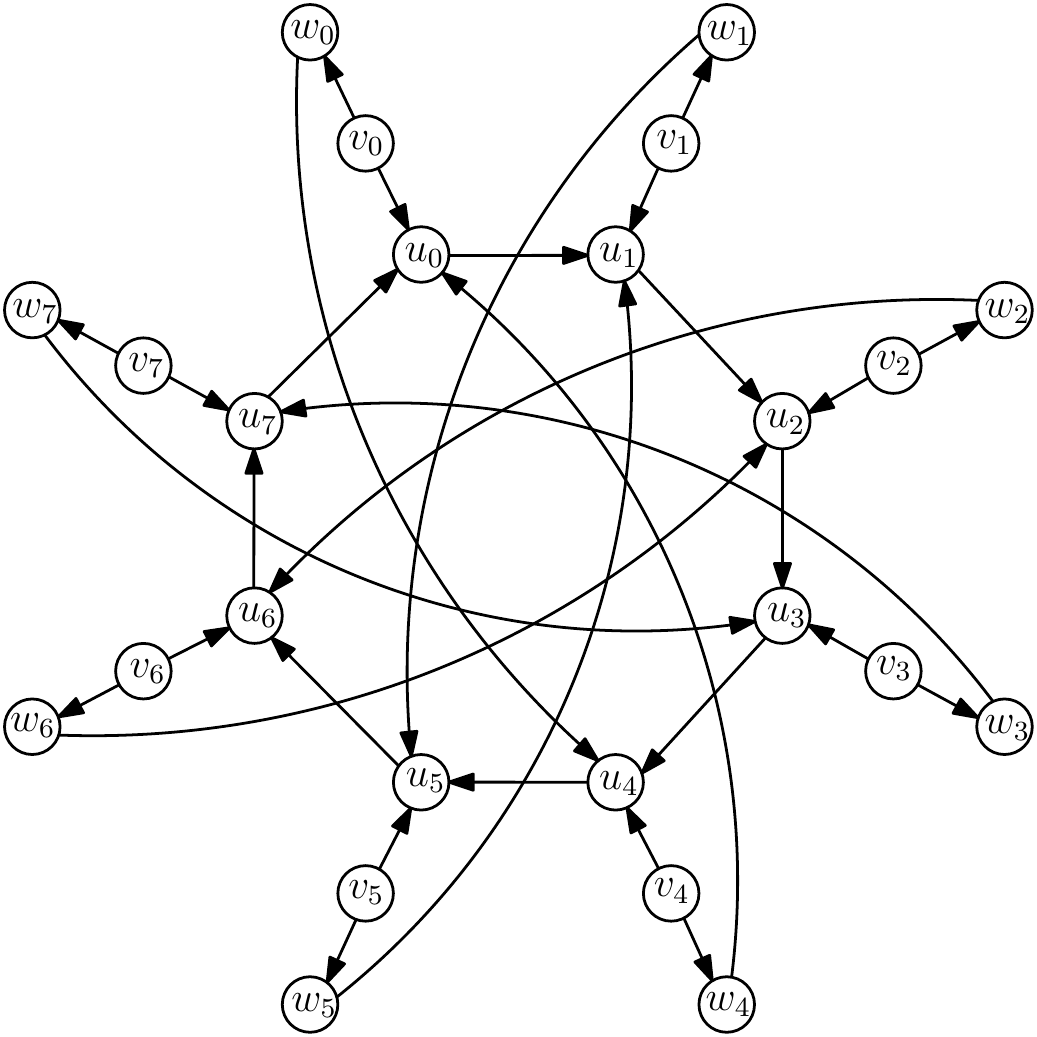}
  \caption{The Cheap Network $(\tilde{G},1)$ having diameter $4$.}
  \label{fig:diam4}
\end{figure}

 \noindent For showing that $(\tilde{G},\alpha)$ is a Cheap Network, we have to show that it is in \textsc{Max}-GE for some $\alpha$ and that it remains in \textsc{Max}-GE if $\alpha$ tends to $0$. We begin by proving that $(\tilde{G},\alpha)$ is in \textsc{Max}-GE for $\alpha = 1$.

 Since $(\tilde{G},1)$ is highly symmetric, it suffices to show that agents $u_0,v_0$ and $w_0$ cannot strictly decrease their cost by performing a greedy strategy change. It is easy to see that none of them can decrease her cost by deleting one own edge, since any such deletion increases the respective agent's distance cost by at least $1$. Since $\alpha = 1$, this does not yield a strict cost decrease. 

 Next, we show that none of the three agents can swap or buy an own edge and thereby strictly decrease her cost. Consider agent $u_0$, who owns exactly one edge. Now, observe, that $u_0$'s edge $\{u_0,u_1\}$ is the first edge on $u_0$'s shortest paths to the vertices $w_1$ and $v_5$ to which $u_0$ has maximum distance. Furthermore, observe, that there is no vertex in $\tilde{G}$ which is a neighbor to both $w_1$ and $v_5$. Thus, no swap can simultaneously decrease agent $u_0$'s distance to $w_1$ and $v_5$. Agent $u_0$ also has vertex $w_2$ in maximum distance $3$. Since $d_{\tilde{G}}(w_1,w_2) = 4$, it follows that $u_0$ cannot buy any edge, which strictly decreases $u_0$'s distances to both $w_1$ and $w_2$ simultaneously. Hence, $u_0$ cannot greedily purchase an edge to strictly decrease her cost. 

 Agent $v_0$ has, among others, vertices $v_2,v_3,v_5$ and $v_6$ in maximum distance $4$. We have $d_{\tilde{G}}(v_2,v_3) = d_{\tilde{G}}(v_5,v_6) = 3$ and $d_{\tilde{G}}(v_2,v_5) = d_{\tilde{G}}(v_3,v_6) = 4$. Thus, the best possible swap or edge purchase of $v_0$, which strictly decreases the distances from $v_0$ to $v_2$ and $v_3$ simultaneously, must connect to a neighbor $x$ of $v_2$ or $v_3$. It follows that either $d_{\tilde{G}}(x,v_5) \geq 3$ or $d_{\tilde{G}}(x,v_6) \geq 3$. Thus, such a swap or edge purchase does not reduce $v_0$'s distances to all of the four vertices $v_2,v_3,v_5,v_6$. Any improving swap or edge purchase must strictly decrease $v_0$'s distance cost, which implies that such an edge must connect to a neighbor of $v_2$ or $v_3$, since this is the only way to decrease the distance to both of them. It follows that $v_0$ cannot swap or buy any own edge to decrease her cost. 

 Agent $w_0$ has the vertices $v_1,v_2.v_6$ and $v_7$ in maximum distance $4$. We have $d_{\tilde{G}}(v_1,v_2) = d_{\tilde{G}}(v_6,v_7) = 3$ and $d_{\tilde{G}}(v_1,v_6) = d_{\tilde{G}}(v_2,v_7) = 4$. Hence, $w_0$ faces essentially the same situation as $v_0$ and an analogous argument shows that $w_0$ cannot swap or buy an edge to decrease her cost. This shows that $(\tilde{G},1)$ is in \textsc{Max}-GE.

 We have argued above that any edge deletion increases the distance cost of the moving agent by $1$. Furthermore, we have shown that no swap or edge purchase can strictly decrease any agent's distance cost. This implies that $(\tilde{G},\alpha)$ is in \textsc{Max}-GE for any $\alpha \leq 1$. Hence $(\tilde{G},\alpha)$ is a Cheap Network having diameter~$4$. 
\end{proof}
\begin{remark}
 The Cheap Network $(\tilde{G},\alpha)$ is not only stable against greedy strategy changes, it is even stable against any strategy change. That is, $(\tilde{G},\alpha)$ is in \textsc{Max}-NE for any $\alpha \leq 1$. To the best of our knowledge, this is the first known non-tree \textsc{Max}-NE network having diameter~$4$.
\end{remark}
\begin{corollary}
 For $\alpha < 1$ there is a network $(G,\alpha)$ in \textsc{Max}-GE, which is not in $\beta$-approximate \textsc{Max}-NE for any $\beta<4$.
\end{corollary}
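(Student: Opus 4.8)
The plan is to combine the two immediately preceding results. First I would invoke Lemma~\ref{lem_cheap_diam4} to obtain a concrete Cheap Network $(\tilde G,\alpha)$ whose underlying graph $\tilde G$ has diameter exactly $4$. The lemma moreover establishes that $(\tilde G,\alpha)$ remains in \textsc{Max}-GE for every $\alpha \le 1$, and this is precisely the fact I will use to control the range of the edge-cost parameter.

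Next I would apply Theorem~\ref{thm_diam_approx} with $d = 4$ to this network. That theorem produces a parameter $\alpha^\ast$ for which $(\tilde G,\alpha^\ast)$ is in \textsc{Max}-GE but is not in $\beta$-approximate \textsc{Max}-NE for any $\beta < 4$: the witness is a vertex $u$ of eccentricity $4$, whose cost drops from $j\alpha^\ast + 4$ to at most $(n-1)\alpha^\ast + 1$ once it buys edges to all vertices that do not own an edge to it, and $\lim_{\alpha \to 0}\frac{j\alpha + 4}{(n-1)\alpha + 1} = 4$.

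The one point requiring a short argument is that the $\alpha^\ast$ delivered by Theorem~\ref{thm_diam_approx} may be taken strictly below $1$. This is immediate: the inequality $j\alpha^\ast + d > \beta\bigl((n-1)\alpha^\ast + 1\bigr)$ used in that proof holds for all sufficiently small $\alpha^\ast > 0$ (it is exactly what the limit $\frac{j\alpha + d}{(n-1)\alpha + 1} \to d$ yields), so we may pick $\alpha^\ast \in (0,1)$; and by Lemma~\ref{lem_cheap_diam4} the network $(\tilde G,\alpha^\ast)$ is genuinely in \textsc{Max}-GE for such $\alpha^\ast$. Hence $(\tilde G,\alpha^\ast)$ is the required network.

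I do not anticipate any real obstacle, since the statement is a direct corollary of Theorem~\ref{thm_diam_approx} and Lemma~\ref{lem_cheap_diam4}. The only mildly delicate step is bookkeeping the quantifiers: the phrase ``for $\alpha < 1$'' in the corollary is an existential claim (there exists a network with \emph{some} $\alpha < 1$ having the stated deficiency), so one must check that the $\alpha^\ast$ extracted from the limiting argument actually lies in $(0,1)$ — which it does, precisely because the Cheap Network of Lemma~\ref{lem_cheap_diam4} tolerates arbitrarily small positive edge cost.
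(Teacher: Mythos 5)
Your proposal is correct and matches the paper's intent exactly: the corollary is stated as an immediate consequence of Theorem~\ref{thm_diam_approx} applied to the diameter-$4$ Cheap Network of Lemma~\ref{lem_cheap_diam4}, with the observation that the $\alpha^\ast$ from the limiting argument can be taken in $(0,1)$ because that network stays in \textsc{Max}-GE for all $\alpha \leq 1$. Your extra remark about the quantifier on $\alpha$ is a sensible piece of bookkeeping but introduces nothing beyond the paper's own (implicit) argument.
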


\noindent Now we consider the case, where $\alpha \geq 1$. Quite surprisingly, it turns out that this case yields a very high lower bound on the approximation ratio. 
\begin{theorem}\label{thm_approx_nontree_lowerbound}
 For $\alpha \geq 1$ there is a \textsc{Max}-GE network $(G,\alpha)$ having $n$ vertices, which is not in $\beta$-approximate \textsc{Max}-NE for any $\beta<\frac{n-1}{5}$.
\end{theorem}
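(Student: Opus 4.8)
The plan is to exhibit, for every $\alpha\ge 1$, an explicit family of networks $(G,\alpha)$ together with a distinguished agent $u$ whose current cost is $\Theta(n)$ but whose true optimal cost is a small constant (namely $5$ when $\alpha=2$), so that the approximation ratio tends to $\tfrac{n-1}{5}$. The guiding idea is to make $u$ own a \emph{linear} number of edges arranged so that \emph{each individual} owned edge is ``load-bearing'' -- deleting it alone, or swapping/buying a single edge, strictly hurts $u$ (or at best leaves $u$'s eccentricity unchanged) -- yet the whole bundle is wasteful, because a coordinated deviation that discards almost all of these edges and buys a single edge into a far ``hub'' already brings $u$'s eccentricity down to a constant. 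Concretely I would attach to $u$ roughly $n/2$ internally vertex-disjoint length-$3$ legs to a common endpoint $b$ (so $u-z_i-z_i'-b$), with $u$ owning every edge $\{u,z_i\}$; a small amount of extra structure ($O(1)$ additional ``far'' vertices adjacent to the midpoints $z_i'$, with edge-ownership chosen away from $u$) is added so that $u$'s eccentricity is not attained at a \emph{single} vertex. For $\alpha=2$ this gives $u$ eccentricity $3$, so with $m$ legs and $n=2m+4$ one has $c_u=2m+3=n-1$.

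First I would verify that $(G,\alpha)$ is in \textsc{Max}-GE. For $u$: deleting one owned edge $\{u,z_i\}$ forces the shortest $u$--$z_i$ route to detour through $b$ and back along another leg, raising $u$'s eccentricity from $3$ to $5$; since $\alpha=2$ this is not a strict improvement. A single purchase or swap to a vertex $v$ can shorten $u$'s distance to any \emph{one} of the extremal vertices (the hub $b$ and the auxiliary far vertices), but by the way those are placed relative to the midpoints $z_i'$ it cannot shorten the distance to \emph{all} of them; hence $u$'s eccentricity does not drop and no greedy move of $u$ is profitable. For every other agent (the midpoints, the hub, the $O(1)$ auxiliary vertices), which each own only $O(1)$ edges, one checks analogously -- after tabulating the handful of distinct distance values in $G$ -- that every greedy deletion, purchase, and swap is non-improving; the orientations of the leg edges must be chosen with care here so that no interior vertex can profitably ``walk'' its owned edge toward the far side of its leg. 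Establishing \textsc{Max}-GE this way is the technical heart of the argument.

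Next I would pin down $c_u^*$. The candidate optimum is to discard all $m$ owned edges and buy the single edge $\{u,b\}$: every vertex then lies within distance $3$ of $u$ (one hop to $b$, then at most two more), so $u$'s cost becomes $\alpha+3=5$. To see this is genuinely optimal, I would argue that any strategy keeping $k$ owned edges and buying $\ell$ more costs at least $(k+\ell)\alpha$ plus its eccentricity, and that -- because the legs are long and vertex-disjoint -- pushing $u$'s eccentricity below $3$ requires $\Omega(m)$ edges, so for large $m$ nothing beats the candidate. Hence $c_u^*=5$ and $\tfrac{c_u}{c_u^*}=\tfrac{n-1}{5}$, so $(G,\alpha)$ is not in $\beta$-approximate \textsc{Max}-NE for any $\beta<\tfrac{n-1}{5}$. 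For general $\alpha\ge 1$ the leg length and the placement of $u$'s edges are scaled as functions of $\alpha$ so that a single deletion still raises the eccentricity by at least $\alpha$, a single purchase still saves strictly less than $\alpha$, and the coordinated deviation again reaches eccentricity a small constant, keeping the ratio at $\tfrac{n-1}{5}$.

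The step I expect to be the main obstacle is the \textsc{Max}-GE verification together with the optimality claim for the deviation: the construction has to thread a narrow needle -- every one of the three greedy operations must be (weakly) non-improving for \emph{every} agent, in particular no agent may reach the hub $b$ cheaply and no interior leg-vertex may have an improving edge-walk, while \emph{simultaneously} $u$'s coordinated multi-edge move must slash the eccentricity to a constant. Choosing the leg length, the edge-orientations, and the number of auxiliary far vertices so that the deletion penalty exactly dominates $\alpha$ while the purchase gain stays below it is the delicate point, and it is precisely what fixes the constant ``$5$''.
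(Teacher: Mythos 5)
Your blueprint is essentially the paper's own construction: the paper's $G_k$ is exactly such a ``broom'', with legs $u$--$x_j$--$y_j$--$v$ into a hub $v$, agent $u$ owning $k+2$ edges, a constant-size gadget ($a_i,b_i,l_i$) guaranteeing that $u$'s eccentricity $3$ is attained at several pairwise-far vertices, and the non-greedy deviation being ``drop everything and buy the single edge $\{u,v\}$'', giving cost $\alpha+3=5$ at $\alpha=2$ and ratio $\tfrac{(k+2)\alpha+3}{\alpha+3}=\tfrac{n-1}{5}$. (Two side remarks: you do not need optimality of the deviation, only an upper bound on $c_u^*$, so that part of your plan is superfluous; and your closing claim that the ratio can be kept at $\tfrac{n-1}{5}$ for \emph{every} $\alpha\ge 1$ by lengthening the legs is doubtful -- longer legs inflate the additive constant in the deviating cost -- but it is also unnecessary, since the theorem only asks for one such network, which the $\alpha=2$ instance provides; the paper's lemma verifies GE for $1\le\alpha\le 2$.)

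The genuine gap is that the step you yourself call the heart -- verifying \textsc{Max}-GE for \emph{every} agent -- is left as a plan with the construction underdetermined, and the unspecified choices are exactly where naive instantiations break. Concretely: without your auxiliary far vertices, $u$ has an improving greedy \emph{swap} (replace $\{u,z_1\}$ by $\{u,z_1'\}$; then $b$, the unique distance-$3$ vertex, moves to distance $2$ and $z_1$ stays at distance $2$, so eccentricity drops $3\to 2$ at no edge-cost change); and if any interior vertex $z_i$ owns its leg edge $\{z_i,z_i'\}$, it can swap it to $\{z_i,b\}$ and cut its eccentricity from $4$ to $3$. So the number and placement of the auxiliary vertices and the ownership orientation of every leg edge must be fixed and then all deletion/purchase/swap cases checked for all agent types -- this is precisely the content of the paper's Lemma~\ref{lem_lowerbound_stability}, and it is what your proposal defers. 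Your sketch can very likely be completed along these lines (attach at least two auxiliary leaves to distinct midpoints, give all leg edges beyond $u$ to the midpoints), but as written the proof is not there yet.
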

We give a family of networks in \textsc{Max}-GE each having an agent $u$ who can decrease her cost by a factor of $\frac{n-1}{5}$ by a non-greedy strategy-change. 
The network $(G_1,\alpha)$ can be obtained as follows: $V(G_1) =\{u,v,l_1,l_2,a_1,a_2,b_1,b_2,x_1,y_1\}$ and agent $u$ owns edges to $a_1$, $a_2$ and $x_1$. For $i \in \{1,2\}$, agent $b_i$ owns an edge to $v$ and to $a_i$ and agent $l_i$ owns an edge to $b_i$. Finally, agent $y_1$ owns an edge to $x_1$ and to $v$. Fig.~\ref{fig:lowerbound_max}~(left) provides an illustration. To get the $k$-th member of the family, for $k\geq 2$, we simply add the vertices $x_j,y_j$, for $2\leq j \leq k$, and let agent $y_j$ own edges towards $x_j$ and $v$. See Fig.~\ref{fig:lowerbound_max}~(right).
 \begin{figure}[!h]
  \centering
  \includegraphics[width=14cm]{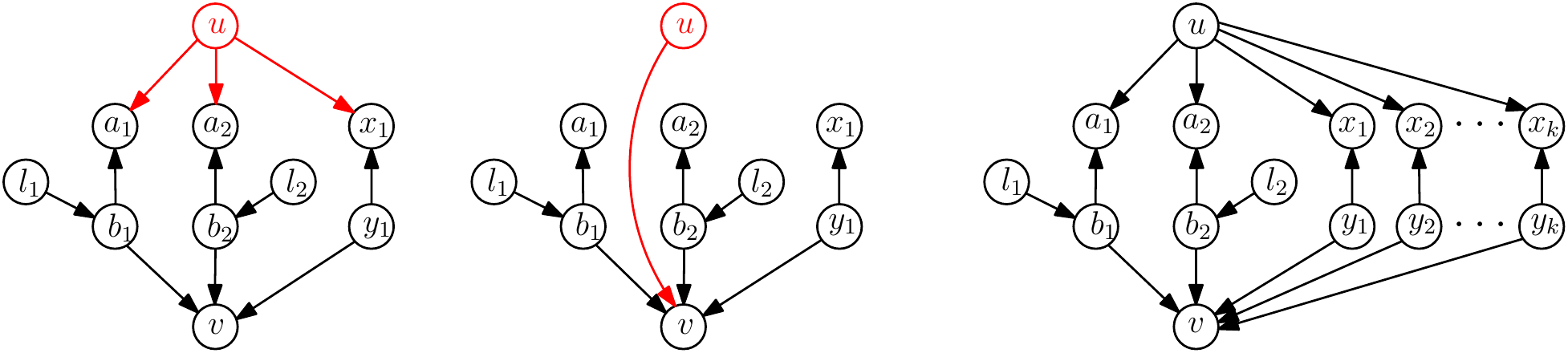}
  \caption{$(G_1,\alpha)$ before (left) and after (middle) agent $u$'s non-greedy strategy change and the network $(G_k,\alpha)$ (right).}
  \label{fig:lowerbound_max}
\end{figure}

\begin{lemma}\label{lem_lowerbound_stability}
 Each of the networks $(G_i,\alpha)$, as described above, is in \textsc{Max}-GE for $1\leq \alpha \leq 2$. 
\end{lemma}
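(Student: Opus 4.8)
The plan is to verify the greedy-equilibrium condition one agent at a time, exploiting the symmetry of $G_k$ to cut down the cases. Up to automorphism there are only the agent types $u$, $v$, $a_i$, $x_j$, $l_i$, $b_i$, $y_j$, so it suffices to rule out a profitable single buy, delete, or swap for one representative of each type. I would first record the relevant data of $G_k$: it has $n=2k+8$ vertices and diameter $4$; the eccentricities are $3$ for $u$, $v$, $a_i$, $b_i$, $y_j$ and $4$ for $l_i$ and $x_j$; and the numbers of owned edges are $0$ for $v$, $a_i$, $x_j$, exactly one for $l_i$, two for $b_i$ and $y_j$, and $k+2$ for $u$.

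For the agents owning no edge ($v$, $a_i$, $x_j$) only a single buy is available, and I would show that one new incident edge lowers the agent's eccentricity by at most $1$: for $v$ the only route down to eccentricity $2$ is to buy $\{v,u\}$, a distance gain of exactly $1$; for $a_i$ eccentricity $2$ is unreachable with one edge, since the far vertices $b_{3-i}$, $l_{3-i}$ and the $y_j$'s lie in incompatible directions; for $x_j$ the best buy is $\{x_j,v\}$, reducing the eccentricity from $4$ to $3$, again a gain of $1$. Since each bought edge costs $\alpha\ge1$ and the distance gain is at most $1$, no buy is improving.

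For the agents owning at least one edge ($u$, $l_i$, $b_i$, $y_j$) I would treat the three operations in turn. Deletion: removing $l_i$'s single edge disconnects $l_i$; for $u$, $b_i$, $y_j$, deleting any owned edge pushes some vertex to the opposite side of $G_k$, reachable then only via the length-$3$ $u$--$v$ detour, so the eccentricity rises by exactly $2$ while the edge cost drops by $\alpha\le2$, a net change of $2-\alpha\ge0$. Buying an extra edge: no single additional incident edge brings the eccentricity below its current value (for $u$, being within distance $2$ of both $l_1$ and $l_2$ would need an edge to $l_1$ or $b_1$ \emph{and} an edge to $l_2$ or $b_2$; $b_i$ and $y_j$ have the analogous obstruction between the $a_{3-i}/l_{3-i}$ side and the $x_j$'s), so the move merely adds the positive cost $\alpha$. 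Swap: a swap keeps the edge cost fixed, so I would show it cannot decrease the moving agent's eccentricity, by checking for each admissible replacement endpoint $w$ (one keeping $G_k$ connected) that at least one formerly-far vertex of the agent stays at its old distance, because no single $w$ lies within distance $2$ (resp.\ $3$) of all of the agent's far vertices simultaneously.

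The main obstacle is the swap analysis, and within it the agents $l_i$ of eccentricity $4$, where the far set is largest; the clean argument is that reaching $u$ and reaching $l_{3-i}$ both within distance $3$ already forces $w\in\{a_{3-i},b_{3-i}\}$, and then one of $a_i$, $b_i$ falls out at distance $4$, so eccentricity $4$ is preserved. Everything else reduces to a finite, routine collection of shortest-path computations once the adjacency of $G_k$ is fixed, and the hypothesis $1\le\alpha\le2$ is used only through the two inequalities $2-\alpha\ge0$ (to kill deletions) and $\alpha\ge1$ (to kill single buys by the edge-free agents).
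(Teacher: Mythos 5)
Your plan is essentially the paper's own argument: use the symmetry of $G_k$, and for each agent type rule out single deletions (distance cost rises by exactly $2$ against an edge-cost saving of $\alpha\le 2$), single purchases (distance gain at most $1$ against $\alpha\ge 1$), and single swaps (some far vertex remains far); your ball-intersection argument for the swap of $l_i$ is a fine elementary substitute for the paper's observation that $b_1$ is a $1$-center of $G_k-l_1$. However, two of the facts you record are wrong, and one of them breaks a stated step. First, $a_i$ has eccentricity $4$, not $3$: $d(a_1,l_2)=4$ (via $u,a_2,b_2$ or via $b_1,v,b_2$), so the far set you list for $a_i$ is not the right one. This slip happens to be harmless, because the check you propose for $a_i$ (``eccentricity $2$ is unreachable with one bought edge'') is exactly what an improving purchase at $\alpha\ge1$ would require, and it is true; but the data should be corrected.

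Second, and more seriously, your blanket claim that for the edge-owning agents ``no single additional incident edge brings the eccentricity below its current value'' is false for $l_i$: buying $\{l_1,v\}$ lowers $l_1$'s eccentricity from $4$ to $3$, since $v$ is within distance $2$ of $l_2$, $a_2$ and every $x_j$, while $u$, $b_2$ and the $y_j$ were already at distance $3$ from $l_1$. Hence $l_i$'s purchase case cannot be dismissed as ``merely adding $\alpha$''; it is non-improving only because the gain is $1\le\alpha$, i.e., it genuinely needs the hypothesis $\alpha\ge1$, exactly as in your treatment of the edge-free agents (this is also how the paper argues it). Finally, a wording caution on the swap case for the eccentricity-$3$ owners: what must be excluded is a new endpoint $w$ with $d(w,f)\le 1$ for \emph{every} far vertex $f$ (so that the agent's new distance is $\le 2$); taken literally, ``no $w$ within distance $2$ of all far vertices'' is false for $u$ (take $w=v$), yet such swaps do not help precisely because $1+d(v,l_i)=3$. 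The correct obstruction, which your $l_i$ analysis already uses in the right form, is that each far set contains a pair at mutual distance $4$ (namely $l_1,l_2$ for $u$ and for $y_j$, and $l_2,x_1$ for $b_i$). With these repairs your proof goes through and coincides with the paper's.
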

\begin{proof}
 The statement is proven as follows: If any agent of $(G_1,\alpha)$ deletes one own edge, then either this operation disconnects the network or her distance cost increases by $2$. Since deleting an own edge decreases the edge cost by $\alpha \leq 2$, we have that such a move cannot yield a strict cost decrease for any agent. 

 Next, we show that no agent can swap an own edge to strictly decrease her cost. Clearly, agents $a_1,a_2,v$ and $x_1,\dots,x_k$ cannot swap any edge since they do not own one. By symmetry of the construction, we only have to show that agents $u,b_1,l_1$ and $y_1$ cannot decrease their cost by swapping one own edge. Agent $u$ has vertices $l_1,l_2$ and $v$ in maximum distance~$3$. But since $d_{G_k}(l_1,l_2) = 4$ it is impossible for $u$ to swap any own edge such that the distance to \emph{all} three of them is strictly decreased. Agent $b_1$ has vertices $l_2,a_2$ and $x_1$ in maximum distance $3$. But since $d_{G_k}(l_2,x_1) = 4$, no swap can decrease $b_1$'s distance to all of them. Analogously, the same holds true for agent $y_1$, who has $l_1,l_2,a_1$ and $a_2$ in maximum distance $3$. Agent $l_1$ cannot improve by swapping her edge, since $b_1$ is a $1$-center vertex of the graph $G_k - l_1$ and for a leaf vertex it is clearly optimal to connect to a $1$-center of the remaining network.

 Finally, let us focus on greedy edge purchases in $(G_k,\alpha)$. Since $\alpha \geq 1$, it follows that greedily buying one edge can strictly decrease an agent's cost only if this operation decreases the distance cost of that agent by more than $\alpha$, that is, by at least $2$. Clearly, for all agents of $(G_k,\alpha)$ which have eccentricity $3$, this is impossible. Now we consider all other agents, which all have eccentricity $4$ and we show that none of them can buy an edge to decrease her distance cost by more than $1$. By symmetry, it suffices to argue for agents $l_1$, $a_1$ and $x_1$. Agent $l_1$ has vertices $l_2$ and $x_1$ in maximum distance $4$. To decrease both distances simultaneously, agent $l_1$ must buy an edge towards a vertex, which lies on both shortest paths from $l_1$ to $l_2$ and from $l_1$ to $x_1$. Indeed, vertex $v$ is such a vertex and it is easy to see that it is the only non-neighbor of $l_1$, which lies on both shortest paths. But buying an edge towards $v$ decreases $l_1$'s 
distance cost only by $1$. Agent $a_1$ only has vertex $l_2$ in maximum distance $4$. There are two shortest paths from $a_1$ to $l_2$ which both use vertex $b_2$. However, buying an edge towards $b_2$ only yields a distance decrease of $1$ for agent $a_1$. The same holds true for an edge towards $a_2$ or $v$, respectively. Thus, no edge to any non-neighboring vertex on $a_1$'s shortest paths to $l_2$ can decrease $a_1$'s distance cost by more than $1$. Agent $x_1$ has vertices $l_1$ and $l_2$ in maximum distance $4$. But, analogously to agent $l_1$'s situation, there is no vertex which simultaneously lies on a shortest path from $x_1$ to $l_1$ and on a shortest path from $x_1$ to $l_2$ and which has distance $1$ to $l_1$ and $l_2$. Thus, agent $x_1$ can decrease her distance cost by buying one edge by at most~$1$. 
\end{proof}

\begin{proof}[Proof of Theorem~\ref{thm_approx_nontree_lowerbound}]
 We focus on agent $u$ in the network $(G_k,\alpha)$ and show that this agent can change her strategy in a non-greedy way and thereby decrease her cost by a factor of $\frac{n-1}{5}$, where $n$ is the number of vertices of $G_k$. Let $S_u$ be agent $u$'s current strategy in $(G_k,\alpha)$ and let $S_u^*$ be $u$'s strategy which only buys one edge towards vertex $v$. See Fig~\ref{fig:lowerbound_max}~(left and middle). Let $cost(u)$ and $cost^*(u)$ denote agent $u$'s cost induced by strategy $S_u$ and $S_u^*$, respectively. For $\alpha = 2$, we have
 $$\frac{cost(u)}{cost^*(u)} = \frac{\alpha(2+k)+3}{\alpha + 3} = \frac{7}{5} + \frac{2k}{5} = \frac{n-1}{5},$$ where the last equality follows since $k = \frac{n-8}{2}$, by construction.
\end{proof}

\begin{corollary}\label{cor_minmaxFL}
 Uncapacitated Metric Min-Max Facility Location has a locality gap of $\frac{n-1}{5}$, where $n$ is the number of clients.
\end{corollary}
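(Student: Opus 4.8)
The plan is to reuse, in the \textsc{Max} setting, the ``locality gap preserving'' reduction from the proof of Theorem~\ref{thm_approx_upperbound}, but now reading it in the opposite direction: instead of transporting an \emph{upper} bound from facility location to network creation, I transport the \emph{lower} bound of Theorem~\ref{thm_approx_nontree_lowerbound} from network creation to min-max facility location. Concretely, I take the family $(G_k,\alpha)$ of Theorem~\ref{thm_approx_nontree_lowerbound} together with its distinguished agent $u$ of eccentricity $4$, remove from $G_k$ all edges owned by $u$ to obtain $(G',\alpha)$, and build an instance of min-max facility location from $(G',\alpha)$ exactly as $I(G')$ is built in the proof of Theorem~\ref{thm_approx_upperbound}: the facility set and the client set are both $V(G')\setminus\{u\}$, every facility has opening cost $\alpha$ (note that in $G_k$ nobody owns an edge to $u$, so the set $Z$ of cost-$0$ facilities is empty and the bookkeeping around ``forced'' facilities disappears), and $d_{ij}=d_{G'}(i,j)+1$, with $d_{ij}=\infty$ when $i,j$ lie in different components. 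As in the \textsc{Sum} case these distances are metric.

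The bulk of the work is checking that the reduction still does what we want under the $\max$ objective. First, agent $u$'s strategies $S$ that induce no multi-edge or self-loop are in bijection with the solutions of the instance that open exactly the facilities in $S$, and agent $u$'s \textsc{Max} cost in the induced network $(G_S,\alpha)$ equals the cost of the corresponding facility location solution, since
\[
 c'_u(G_S,\alpha)=\alpha|S|+\max_{w\in V(G_S)\setminus\{u\}}\bigl(1+\min_{x\in S}d_{G'}(x,w)\bigr)=\alpha|S|+\max_{w}\min_{x\in S}d_{xw},
\]
that is, the $\max$ simply commutes through the uniform $+1$ shift just as the $\sum$ did before. Second, a \emph{greedy} move of agent $u$ (buying, deleting or swapping one own edge) corresponds exactly to a \emph{local} move of the facility location solution (opening, closing or swapping one facility): buying $\leftrightarrow$ opening, swapping $\leftrightarrow$ swapping, and the deletions that disconnect the network correspond to closings that would leave some client unreachable; with $Z=\emptyset$ no additional care is needed. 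Hence $(G_k,\alpha)\in\textsc{Max}$-GE translates into the statement that the facility location solution $F_{S_u}$ cannot be improved by opening, closing or swapping a single facility, i.e.\ it is a local optimum for min-max facility location.

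It then remains to plug in the quantitative bound. By Theorem~\ref{thm_approx_nontree_lowerbound} (taking $\alpha=2$), agent $u$'s best response $S_u^*=\{v\}$ satisfies $c'_u(G_{S_u},2)=\tfrac{n-1}{5}\,c'_u(G_{S_u^*},2)$ with $n=|V(G_k)|$; via the cost equality above, the local optimum $F_{S_u}$ of the constructed instance costs a factor $\tfrac{n-1}{5}$ more than $F_{S_u^*}$, hence more than the global optimum. The instance has $|V(G_k)|-1$ clients, which is $\Theta(n)$, so after renaming the number of clients we obtain a family of min-max facility location instances whose locality gap is at least $\tfrac{n-1}{5}$; that is, uncapacitated metric min-max facility location has locality gap $\Omega(n)$ in the number of clients. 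The only genuine obstacle is the verification in the second paragraph: one must confirm that \emph{every} single-facility local move of the instance is realized by a single-edge greedy move of $u$, and conversely, so that local optimality of the facility location solution is truly equivalent to \textsc{Max}-greedy-stability of agent $u$; once that dictionary is in place, the conclusion follows by bookkeeping and a direct appeal to Theorem~\ref{thm_approx_nontree_lowerbound}.
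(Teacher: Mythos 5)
Your proposal is correct and follows essentially the same route as the paper: apply the locality-gap-preserving reduction from the proof of Theorem~\ref{thm_approx_upperbound} to the lower-bound family $(G_k,\alpha)$ of Theorem~\ref{thm_approx_nontree_lowerbound}, using cost-equality and the one-to-one correspondence between agent $u$'s greedy moves and single-facility local moves. You in fact spell out details the paper leaves implicit (the $\max$ commuting with the uniform $+1$ shift, $Z=\emptyset$ for this instance, and the client-count bookkeeping), but the argument is the same.
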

\begin{proof}
 The corollary follows by using the ``locality gap preserving'' reduction provided in the proof of Theorem~\ref{thm_approx_upperbound} and the lower bound of Theorem~\ref{thm_approx_nontree_lowerbound}.

 The lower bound construction of Theorem~\ref{thm_approx_nontree_lowerbound} can be transformed into an instance of uncapacitated metric min-max facility location. Remember, that we have cost-equality and that greedy strategy-changes of agent $u$ in the NCG transfer one to one to greedy modifications of the facility location solution. Thus, we have the property that the corresponding solution to the facility location problem is locally optimal but resembles only a $\frac{n-1}{5}$-approximation to the globally optimal solution. 
\end{proof}

\subsection*{Acknowledgments}
I am grateful to Achim Passen for many interesting discussions and helpful comments. Furthermore, I thank an anonymous referee for pointing to \cite{GKSB12}.  
\bibliographystyle{abbrv}
\bibliography{lenzner_greedy_selfish_network_creation}

\end{document}